\documentclass{scrartcl}


\KOMAoptions{paper=letter} 

\usepackage[todo]{group}
\usepackage{thm-restate}
\usepackage[bf]{caption} 
\captionsetup{format=plain}

\usetikzlibrary{arrows.meta}
\usetikzlibrary{shapes.multipart}

\makeatletter
\newcommand*{\textoverline}[1]{$\overline{\hbox{#1\vphantom{\"A}}}\m@th$}
\makeatother

\newcommand{\lex}{\mathbf{lex}}

\newcommand{\IR}{\textsf{WLOC}\xspace}

\newcommand{\WSMON}{\textsf{WSMON}\xspace}
\newcommand{\WMON}{\textsf{WMON}\xspace}
\newcommand{\CWCS}{\textsf{COS}\xspace}

\newcommand{\IUA}{\textsf{IUA}\xspace}

\newcolumntype{L}[1]{>{\raggedright\let\newline\\\arraybackslash\hspace{0pt}}m{#1}}
\newcolumntype{C}[1]{>{\centering\let\newline\\\arraybackslash\hspace{0pt}}m{#1}}
\newcolumntype{R}[1]{>{\raggedleft\let\newline\\\arraybackslash\hspace{0pt}}m{#1}}

\title{Characterizing the Top Cycle via Strategyproofness}

\author{Felix Brandt \qquad Patrick Lederer\\Technische Universit\"at M\"unchen}

\begin{document}
\maketitle

\begin{abstract}
Gibbard and Satterthwaite have shown that the only single-valued social choice functions (SCFs) that satisfy non-imposition (i.e., the function's range coincides with its codomain) and strategyproofness (i.e., voters are never better off by misrepresenting their preferences) are dictatorships. 
In this paper, we consider set-valued social choice correspondences (SCCs) that are strategyproof according to Fishburn's preference extension and, in particular, the \emph{top cycle}, an attractive SCC that returns the maximal elements of the transitive closure of the weak majority relation. Our main theorem shows that, under mild conditions, the top cycle is the \emph{only} non-imposing strategyproof SCC whose outcome only depends on the quantified pairwise comparisons between alternatives. This result effectively turns the Gibbard-Satterthwaite impossibility into a complete characterization of the top cycle by moving from SCFs to SCCs. We also leverage key ideas of the proof of this statement to obtain a more general characterization of strategyproof SCCs.
\end{abstract}
		
\section{Introduction}
	
One of the most influential results in microeconomic theory, the Gibbard-Satterthwaite theorem, states that dictatorships are the only single-valued social choice functions (SCFs) that are non-imposing (i.e., every alternative is returned for some preference profile) and strategyproof (i.e., voters are unable to obtain a better outcome by misrepresenting their preferences) when there are at least three alternatives.
The convenient but rather restrictive assumption of single-valuedness has been criticized by various scholars. 
For instance,
\citet{Gard76a} asserts that ``[resoluteness] is a rather restrictive and unnatural assumption.'' In a similar vein,
\citet{Kell77a} writes that ``the Gibbard-Satterthwaite theorem [\dots]
uses an assumption of singlevaluedness which is unreasonable'' and \citet{Tayl05a} that ``if there is a weakness to the Gibbard-Satterthwaite theorem, it is the assumption that winners are unique.'' 
The problem with single-valuedness is that 
it is in conflict with the basic fairness notions of anonymity and neutrality which require that all voters and all alternatives are treated equally. For example, if half of the voters favor $a$ and the other half $b$, there is no fair way of selecting a single winner because both alternatives are equally acceptable. In the context of social choice, these fairness conditions are imperative because elections should be unbiased. One way to deal with this problem is to identify a set of winning candidates with the understanding that one of these candidates will eventually be selected by some tie-breaking rule independent of the voters' preferences. Ties can, for example, be broken by lottery or by letting a chairperson or a committee pick the winner.\footnote{These tie-breaking rules are common in real-world elections. Tied elections on the state level within the US are sometimes decided by lottery. The US Vice President acts as the President of the Senate and frequently breaks ties in the Senate. If no candidate in a US presidential election obtains an absolute majority of the voters, then the House of Representatives elects the winner among the best three candidates.}

As a result, a large body of research investigates so-called social choice correspondences (SCCs), which return sets of alternatives. In particular, several papers have shown statements that mimic the negative consequences of the Gibbard-Satterthwaite theorem \citep[e.g.,][]{DuSc00a,BDS01a,ChZh02a,Beno02a,Sato14a}. 
These results are based on relatively strong assumptions about the manipulators' preferences over sets (which in turn are based on the voters' beliefs about how ties are broken). For example, all of these results rely on the assumption that a voter who prefers $a$ to $b$ to $c$ will engage in a manipulation in which the outcome changes from set $\{a,c\}$ to set $\{b\}$. 
However, it is quite possible that no voter entertains such preferences over sets.
By contrast, the voters' preferences over sets we surmise in this paper are systematically deduced from their preferences over alternatives, which leads to a weaker notion of strategyproofness. In more detail, we consider a preference extension attributed to \citep{Fish72a}, according to which a manipulation is only successful if the manipulator can change the outcome from a set $Y$ to another set $X$ such that he prefers all alternatives in $X\setminus Y$ to all alternatives in $Y$ and all alternatives in $X$ to all alternatives in $Y\setminus X$. 
Two natural justifications for this extension are the existence of a chairperson who breaks ties or of \emph{a priori} probabilities of the voters how ties are broken \citep[see, e.g.,][]{Gard79a,ChZh02a,ErSa09a,BSS19a}. 
The resulting notion of strategyproofness, often called Fishburn-strategyproofness, allows for positive results. For example, the rather indecisive \emph{omninomination rule}, which returns all alternatives that are top-ranked by at least one voter, is strategyproof according to this notion.

A particularly promising approach to construct attractive strategyproof SCCs is to focus on the pairwise comparisons between alternatives \citep[see, e.g.,][]{Gard76a,MaPa81a,Band83a,Cake03a,Bran11c}. 
For instance, \citet{Bran11c} shows that several attractive SCCs that only rely on the pairwise majority relation, such as the uncovered set and the bipartisan set, satisfy a strategyproofness notion which is slightly weaker than Fishburn-strategyproofness. In this paper, we thus focus on the class of pairwise (aka C2) SCCs, whose outcome only depends on the weighted majority comparisons. This class was introduced by \citet{Fish77a} and includes many important SCCs such as Borda's rule, Copeland's rule, the top cycle, the essential set, the Simpson-Kramer rule, Kemeny's rule, ranked pairs, and Schulze's rule \citep[see Chapters 3 and 4 in][for an overview of these SCCs]{BCE+14a}. Indeed, it is well-known that almost all other SCCs (e.g., positional scoring rules or runoff rules) are manipulable according to Fishburn's preference extension \citep[see, e.g.,][pp.~44--51]{Tayl05a}.


A prominent concept that arises from pairwise comparisons between alternatives is that of a Condorcet winner, an alternative that is preferred to every other alternative by some majority of voters \citep{Cond85a}. Condorcet winners need not exist, but many scholars agree that an SCC should uniquely return the Condorcet winner whenever one exists. The non-existence of Condorcet winners can be addressed by extending the notion of Condorcet winners to so-called dominant sets of alternatives. A set of alternatives $X$ is dominant if every element of $X$ is preferred to every element not in $X$ by a majority of voters. Dominant sets are guaranteed to exist since the set of all alternatives is trivially dominant, and they can be ordered by set inclusion.\footnote{\label{fn:dominantsets}Assume for contradiction that two dominant sets $X,Y$ are not contained in each other. Then, there exists $x\in X\setminus Y$ and $y\in Y\setminus X$. The definition of dominant sets requires that $x$ is majority-preferred to $y$ and that $y$ is majority-preferred to $x$, a contradiction.}
These observations have led to the definition of the \emph{top cycle}, an SCC that returns the unique smallest dominant set for any given preference profile. This set consists precisely of the maximal elements of the transitive closure of the weak majority relation. The top cycle has been reinvented several times and is known under various names such as Good set \citep{Good71a}, Smith set \citep{Smit73a}, weak closure maximality \citep{Sen77a}, and GETCHA \citep{Schw86a}.


In this paper, we characterize the class of strategyproof pairwise SCCs under relatively mild and common technical assumptions, namely the conditions of non-imposition, homogeneity, and neutrality. Our first result shows that every strategyproof pairwise SCC that satisfies these conditions always returns a dominant set. An important variant of this characterization is obtained when replacing non-imposition and neutrality with set not-imposition (every set of alternatives is returned for some preference profile): the top cycle is the \emph{only} strategyproof pairwise SCC that satisfies set non-imposition and homogeneity. This result effectively turns the Gibbard-Satterthwaite impossibility theorem into a complete characterization of the top cycle by moving from SCFs to SCCs. 

On top of strategyproofness, the top cycle is very robust in terms of changes to the set of feasible alternatives and preferences of the voters: it is invariant under removing losing alternatives as well as modifications of preferences between losing alternatives, and has been characterized repeatedly by choice consistency conditions implied by the the weak axiom of revealed preference. Finally, it is one of the most straightforward Condorcet extensions and can be easily computed. 
The main disadvantage of the top cycle is its possible inclusion of Pareto-dominated alternatives. 
We believe that this drawback is tolerable because empirical results suggest that the top cycle only rarely contains Pareto-dominated alternatives. This is due to the fact that Pareto dominances are increasingly unlikely for large numbers of voters and the persistent observation that an overwhelming number of real-world elections admit Condorcet winners \citep[see, e.g.,][]{RGMT06a,Lasl10a,GeLe11a,BrSe15a}. In these cases, the top cycle consists of a single Pareto-optimal alternative. Moreover, as we point out in \Cref{rem:tcpo}, the SCC that returns the top cycle of the subset of Pareto-optimal alternatives satisfies all our axioms except pairwiseness. In particular, this SCC satisfies strategyproofness with respect to Fishburn's extension.

\section{Related Work}

\citet{Gard79a} initiated the study of strategyproofness with respect to Fishburn's preference extension. He attributed this extension to Fishburn because it is the weakest extension that satisfies a set of axioms proposed by \citet{Fish72a}. A small number of SCCs were shown to be Fishburn-strategyproof, sometimes by means of stronger strategyproofness notions: the Pareto rule---which returns all Pareto-optimal alternatives \citep{Feld79a}, the omninomination rule---which returns all top-ranked alternatives \citep{Gard76a}, the Condorcet rule---which returns the Condorcet winner whenever one exists and all alternatives otherwise \citep{Gard76a}, the SCC that returns the Condorcet winner whenever one exists and all Pareto-optimal alternatives otherwise \citep{BrBr11a}, and the top cycle \citep{Band83a,BrBr11a,SaZw10a}. All other commonly studied SCCs fail Fishburn-strategyproofness \citep[see, e.g.,][]{Tayl05a,BBH15a}. A universal example showing the Fishburn-manipulability of many SCCs is given in \Cref{Fig:manipulation} of \Cref{sec:prelims}.

More recently, the limitations of Fishburn-strategyproofness were explored. \citet{BrGe15a} studied majoritarian SCCs, i.e., SCCs whose outcome only depends on the pairwise majority relation, and showed that no majoritarian SCC satisfies Fishburn-strategyproofness and Pareto-optimality. The condition of majoritarianess can be replaced with the much weaker condition of anonymity when allowing for ties in the preferences \citep{BSS19a}. Both results were obtained with the help of computer-aided theorem proving techniques. \citet[][Remark 3]{BrGe15a} observed that, in the absence of majority ties, the top cycle could be the finest majoritarian Condorcet extension that satisfies Fishburn-strategyproofness when there are at least five alternatives. A computer verified this claim for five, six, and seven alternatives using 24 hours of runtime. The claim now follows immediately from our \Cref{thm:F-SP} (see also \Cref{rem:variants}), irrespective of majority ties.

\citet{ChZh02a} considered a much stronger notion of strategyproofness based on Fishburn's preference extension: they require that the outcome when voting honestly is comparable and preferred to every choice set obtainable by a manipulation according to Fishburn's extension. Their main result shows that only constant and dictatorial SCCs are strategyproof according to this definition. \citet{BDS01a} derive a similar conclusion for a weaker notion of strategyproofness based on Fishburn's extension (but still stronger than the one considered in this paper). In their model, voters submit preference relations over sets of alternatives that adhere to certain structural restrictions. When these restrictions are given by Fishburn's extension, they prove that only dictatorships satisfy strategyproofness and unanimity. 

Several choice-theoretic characterizations of the top cycle exist. When assuming that choices from two-element sets are made according to majority rule, the influential characterization by \citet{Bord76a} entails that the top cycle is the finest SCC satisfying $\beta^+$, an expansion consistency condition implied by the weak axiom of revealed preference.\footnote{This result was recently rediscovered by \citet{ENO19a}.} \citet{EhSp08a} have shown that, in the absence of majority ties, the refinement condition can be replaced with two contraction consistency conditions. \citet{Bran11b}, \citet{Houy11a}, and \citet{BBSS14a} provide further characterization using choice consistency conditions. We are not aware of a characterization of the top cycle using strategyproofness.


\section{Preliminaries}
\label{sec:prelims}

Let $\mathbb{N}=\{1,2,\dots\}$ denote an infinite set of voters and $A$ a finite set of $m$ alternatives. 
Moreover, let $\mathcal{F}(\mathbb{N})$ denote the set of all finite and non-empty subsets of $\mathbb{N}$. 
Intuitively, $\mathbb{N}$ is the set of all possible voters, whereas an element $N\in\mathcal{F}(\mathbb{N})$ represents a concrete electorate. 
Given an electorate $N\in\mathcal{F}(\mathbb{N})$, each voter $i\in N$ has a \emph{preference relation}, represented by a strict total order \emph{$\succ_i$} on $A$. 
The set of all preference relations on $A$ is denoted by $\mathcal{R}(A)$. A \emph{preference profile $R$} is a vector of preference relations, i.e., $R\in \mathcal{R}(A)^{N}$ for some electorate $N\in \mathcal{F}(\mathbb{N})$. The set of all preference profiles on $A$ is denoted by $\mathcal{R}^*(A)=\bigcup_{N\in \mathcal{F}(\mathbb{N})} \mathcal{R}(A)^N$.

For a preference profile $R\in\mathcal{R}^*(A)$, let 
\[g_R(x,y)\;=\;|\{i\in N\colon x\succ_i y\}|-|\{i\in N\colon y\succ_i x\}| \tag{Majority margin}\] 
be the \emph{majority margin} of $x$ over $y$ in $R$. It describes how many more voters prefer $x$ to $y$ than $y$ to $x$. 
Whenever $g_R(x,y)\ge 0$ for some pair of alternatives, we say that $x$ weakly (majority) dominates $y$, denoted by $x\succsim_R y$. Note that the relation $\succsim_R$, which we call \emph{majority relation}, is complete. Its strict part will be denoted by $\succ_R$, i.e., $x \succ_R y$ iff $x \succsim_R y$ and not $y \succsim_R x$, and its indifference part by $\sim_R$, i.e., $x \sim_R y$ iff $x \succsim_R y$ and $y \succsim_R x$. Whenever the number of voters is odd, there can be no majority ties and $\succsim_R$ is anti-symmetric. We will extend both individual preference relations and the majority relation to sets of alternatives using the shorthand notation $X\succ Y$ whenever $x\succ y$ for all $x\in X$ and $y\in Y$.

The majority relation gives rise to a number of important concepts in social choice theory. A \emph{Condorcet winner} is an alternative $x$ such that ${x} \succ_R A\setminus \{x\}$.
In a similar vein, a \emph{Condorcet loser} is an alternative $x$ with $A\setminus \{x\} \succ_R {x}$. Neither Condorcet winners nor Condorcet losers need to exist, but whenever they do, each of them is unique. A natural extension of these ideas to sets of alternatives is formalized via the notion of dominant sets.
A non-empty set $X\subseteq A$ is \emph{dominant} if $X \succ_R A\setminus X$. Whenever a Condorcet winner exists, it forms a singleton dominant set. In contrast to Condorcet winners, dominant sets are guaranteed to exist since the set of all alternatives $A$ is trivially dominant. For every majority relation $\succsim_R$, the set of dominant sets is totally ordered by set inclusion, i.e., each majority relation induces a hierarchy of dominant sets that are strictly contained in each other.\footref{fn:dominantsets}

For an illustration of these concepts, consider the example given in \Cref{fig:example}, which shows a preference profile $R$ and its majority relation. The weights on the edges of the majority relation indicate the majority margins. The profile $R$ neither admits a Condorcet winner nor a Condorcet loser since each alternative has at least one incoming and outgoing edge. There are two dominant sets, $\{a,b,c\}$ and $\{a,b,c,d,e\}$. Note that the notions of Condorcet winners and dominant sets are independent of the exact weights of the edges, but only depend on their directions.

\begin{figure}[tb]
	\centering
		$
			\begin{array}{r@{\colon\;\;}l@{\;\;}l@{\;\;} l@{\;\;} l@{\;\;} l@{\;\;} l@{\;\;} l@{\;\;} l@{\;\;} l@{\;\;} l}
			1	& a	&	\succ_1	& e	&	\succ_1	&	b	&	\succ_1	&	c &	\succ_1	& d\\
			2	& b	&	\succ_2	& c	&	\succ_2	&	a	&	\succ_2	&	d &	\succ_2	& e\\
			3	& c	&	\succ_3	& a	&	\succ_3	&	b	&	\succ_3	&	e &	\succ_3	& d\\
			4	& d	&	\succ_4	& b	&	\succ_4	&	c	&	\succ_4	&	a &	\succ_4	& e
			\end{array}
		$
	\qquad
	\qquad
	\qquad
	\begin{tikzpicture}[baseline=-1ex]
	\tikzstyle{mynode}=[fill=white,circle,draw,minimum size=1.5em,inner sep=0pt]
	\tikzstyle{mylabel}=[fill=white,circle,inner sep=0.5pt]
	  \node[mynode,fill = black!10] (a) at (162:1.5) {$a$};
	  \node[mynode,fill = black!10] (b) at (90: 1.5) {$b$};
	  \node[mynode,fill = black!10] (c) at (18: 1.5) {$c$};
	  \node[mynode] (d) at (306:1.5) {$d$};
	  \node[mynode] (e) at (234:1.5) {$e$};
	  	  
	  \draw[Latex-Latex] (a) edge node[mylabel] {\small$\textcolor{black!20}0$} (b);
	  \draw[-Latex] (c) edge node[mylabel] {\small$\textcolor{black!20}2$} (a);
	  \draw[-Latex] (b) edge node[mylabel] {\small$\textcolor{black!20}2$} (c);
	  \draw[-Latex] (a) edge node[mylabel] {\small$\textcolor{black!20}2$} (d);
	  \draw[-Latex] (a) edge node[mylabel] {\small$\textcolor{black!20}4$} (e);
	  \draw[-Latex] (b) edge node[mylabel] {\small$\textcolor{black!20}2$} (d);
	  \draw[-Latex] (b) edge node[mylabel] {\small$\textcolor{black!20}2$} (e);
	  \draw[-Latex] (c) edge node[mylabel] {\small$\textcolor{black!20}2$} (d);
	  \draw[-Latex] (c) edge node[mylabel] {\small$\textcolor{black!20}2$} (e);
	  \draw[Latex-Latex] (d) edge node[mylabel] {\small$\textcolor{black!20}0$} (e);

	\end{tikzpicture}
			\caption{A preference profile with $N=\{1,2,3,4\}$ and $A=\{a,b,c,d,e\}$ (left-hand side) and the corresponding weighted majority graph (right-hand side). An edge from $x$ to $y$ with weight $w$ denotes that $g_{R}(x,y)=w$. Edges with weight $0$ are bidirectional since, in this case, both alternatives weakly majority dominate each other. The smallest dominant set, $\{a,b,c\}$, is highlighted in gray.
			}
			\label{fig:example}
\end{figure}

\subsubsection*{Social Choice Correspondences}

This paper is concerned with \emph{social choice correspondences (SCCs)}. An SCC maps a preference profile to a non-empty subset of alternatives called the choice set, i.e., it is a function of the form $f:\mathcal{R}^*(A)\rightarrow 2^A\setminus\{\emptyset\}$. Note that we employ a so-called variable population framework, i.e., SCCs are defined for all electorates. In this paper, we focus on two important classes of SCCs: \emph{majoritarian SCCs} and \emph{pairwise SCCs}. An SCC $f$ is called \emph{majoritarian} if its outcome merely depends on the majority relation, i.e., $f(R)=f(R')$ for all $R, R'\in\mathcal{R}^*(A)$ with ${\succ_R}={\succ_R'}$. Furthermore, an SCC $f$ is \emph{pairwise} if its outcome merely depends on the majority margins, i.e., $f(R)=f(R')$ for all $R,R'\in \mathcal{R}^*(A)$ with $g_R=g_{R'}$.
Majoritarian SCCs can be interpreted as functions that map an unweighted graph $(A,\succsim_R)$ to a non-empty subset of its vertices, while pairwise SCCs may additionally use the majority margins $g_R(x,y)$ as weights of the edges. The classes of majoritarian and pairwise SCCs are very rich and contain a variety of well-studied SCCs. For instance, Copeland's rule, the uncovered set, and the bipartisan set are majoritarian, and Borda's rule, the Simpson-Kramer rule, the essential set, Kemeny's rule, ranked pairs, and Schulze's rule are pairwise (the interested reader may consult Chapters 3 and 4 in \citet{BCE+14a} for definitions of these SCCs). All SCCs listed above, except Borda's rule, are \emph{Condorcet extensions}, i.e., they uniquely return the Condorcet winner whenever one exists. 

We say that an SCC $f$ is a \emph{refinement} of an SCC $g$ if $f(R)\subseteq g(R)$ for all preference profiles $R\in \mathcal{R}^*(A)$. 
In this case, $g$ is also said to be \emph{coarser} than $f$. For example, Copeland's rule, the uncovered set, the essential set, Kemeny's rule, ranked pairs, and Schulze's rule are known to be refinements of the top cycle while the Condorcet rule is a coarsening of the top cycle.

The top cycle is a majoritarian SCC that returns the smallest dominant set for a given preference profile. Every preference profile admits a unique smallest dominant set because dominant sets are ordered by set inclusion. Alternatively, the top cycle can be defined based on paths with respect to the majority relation. A path from an alternative $x$ to an alternative $y$ in $\succsim_R$ is a sequence of alternatives $(x_1,\dots, x_k)$ such that $x_1=x$, $x_k=y$, and $x_i\succsim_R x_{i+1}$ for all $i\in \{1,\dots, k-1\}$. The transitive closure $\succsim_R^\ast$ of the majority relation contains all pairs of alternatives $(x,y)$ such that there is a path from $x$ to $y$ in $\succsim_R$. Then, the top cycle can be defined as the set of alternatives that are maximal according to $\succsim_R^\ast$.

\begin{align*}
\tc(R) \;=\; \bigcap \left\{X\subseteq A\colon X\succ_R A\setminus X\right\}\;
=\;\{x\in A\colon x \succsim_R^\ast A \}\tag{Top cycle}
\end{align*}

In other words, the top cycle consists precisely of those alternatives that reach every other alternative on some path in the majority graph. For instance, the top cycle of the example profile in \Cref{fig:example} is $\{a,b,c\}$. Here, it is important that we interpret majority ties as bidirectional edges as otherwise, there would be no path from $a$ to $b$.\footnote{There is a refinement of the top cycle, sometimes called the Schwartz set or GOCHA, which is defined as the union of undominated sets or, alternatively, as the set of alternatives that reach every other alternative on a path according to $\succ_R$ (rather than $\succsim_R$) \citep[see, e.g.,][]{Schw72a,Deb77a,Schw86a}. We will not consider it further because it violates rather mild consistency and strategyproofness conditions.}

%

On top of majoritarianess and pairwiseness, which restrict the informational basis of SCCs, we now introduce a number of additional properties of SCCs. 

\begin{itemize}
\item An SCC is \emph{non-imposing} if for every alternative $x\in A$, there is a profile $R\in\mathcal{R}^*(A)$ such that $f(R)=\{x\}$.
\item An SCC is \emph{neutral} if $f(R')=\pi(f(R))$ for all electorates $N\in\mathcal{F}(\mathbb{N})$ and profiles $R, R'\in \mathcal{R}(A)^N$ for which there is a permutation $\pi:A\rightarrow A$ such that $x \succ_i y$ iff $\pi(x) \succ_i' \pi(y)$ for all alternatives $x,y\in A$ and voters $i\in N$.
\item An SCC is \emph{homogeneous} if for all preference profiles $R\in\mathcal{R}^*(A)$, $f(R)=f(k R)$ where the profile $k R$ consists of $k$ copies of $R$.
\end{itemize}
Non-imposition is a mild decisiveness requirement demanding that every alternative will be selected uniquely for some configuration of preferences. It is weaker than Pareto-optimality and unanimity (an alternative that is top-ranked by all voters has to be elected uniquely).
Neutrality requires that if alternatives are relabelled in a preference profile, the alternatives in the corresponding choice set are relabelled accordingly.
Homogeneity states that cloning the entire electorate will not affect the choice set. All three of these properties are very mild and satisfied by all SCCs typically considered in the literature, including the top cycle.

\subsubsection*{Fishburn's Extension and Strategyproofness}

An important desirable property of SCCs is strategyproofness, which demands that voters should never be better off by lying about their preferences. In order to make this formally precise for social choice \emph{correspondences}, we need to make assumptions about the voters' preferences over sets. In this paper, we extend the voters' preferences over alternatives to incomplete preference over sets by using Fishburn's preference extension. 
Given two sets of alternatives $X, Y\subseteq A$, $X\neq Y$, and a preference relation $\succ_i$, Fishburn's extension is defined by 
\[
X \succ_i^F Y \quad\text{iff}\quad X\setminus Y \succ_i Y \text{ and } X \succ_i Y\setminus X\text. \tag{Fishburn's extension}
\]
Fishburn's extension is frequently considered in social choice theory and can be justified in various ways \citep[see, e.g.,][]{Gard79a,ChZh02a,ErSa09a,BSS19a}. For example, one motivation assumes that a single alternative will eventually be selected from each choice set according to a linear tie-breaking ordering (such as the preference relation of a chairperson) and that voters are unaware of the concrete ordering used to break ties. Then, set $X$ is preferred to set $Y$ iff for all tie-breaking orderings, the voter weakly prefers the alternative selected from $X$ to that selected from $Y$ and there is at least one ordering for which this comparison is strict. Another motivation is based on a function that assigns an \emph{a priori} weight to each alternative such that each choice set can be mapped to a lottery over the chosen alternatives such that the probabilities are proportional to the alternatives' weights. Again the voters are unaware of the concrete weight function and prefer set $X$ to set $Y$ iff for all utility functions consistent with their ordinal preferences and all \emph{a priori} weight functions, the expected utility derived from $X$ is higher than that derived from $Y$.

Strategyproofness based on Fishburn's extension can be defined as follows.
An SCC $f$ is \emph{(Fishburn-)strategyproof} if for all electorates $N\in\mathcal{F}(\mathbb{N})$ and profiles $R\in\mathcal{R}(A)^N$, there is no profile $R'\in\mathcal{R}(A)^N$ such that ${\succ_j}={\succ_j'}$ for all $j\in N\setminus \{i\}$ and $f(R')\succ_i^F f(R)$.

\begin{figure}[tb]
	\centering	 $
	 \begin{array}{r@{\colon\;\;}l@{\;\;}l@{\;\;} l@{\;\;} l@{\;\;} l@{\;\;} l@{\;\;} l@{\;\;} l@{\;\;} l@{\;\;} l}
			i\in \{1,2\}	& a	&	\succ_i	& b	&	\succ_i	&	c\\
			i\in \{3,4\}	& c	&	\succ_i	& a	&	\succ_i	&	b\\
			5	& b	&	\succ_5	& c	&	\succ_5	&	a
			\end{array}
		$
		\hspace{2cm}
   	 $
   	 \begin{array}{r@{\colon\;\;}l@{\;\;}l@{\;\;} l@{\;\;} l@{\;\;} l@{\;\;} l@{\;\;} l@{\;\;} l@{\;\;} l@{\;\;} l}
			i\in \{1,2\}	& a	&	\succ_i	& b	&	\succ_i	&	c\\
			i\in \{3,4\}	& c	&	\succ_i	& a	&	\succ_i	&	b\\
			5	& c	&	\succ_5	& b	&	\succ_5	&	a
			\end{array}
   		$
		\caption{Example showing the Fishburn-manipulability of plurality rule. Plurality rule chooses $\{a,c\}$ for the left profile and $\{c\}$ for the right profile, so voter $5$ can manipulate by deviating from the profile on the left to the one on the right. The same example shows that many other popular SCCs are Fishburn-manipulable, e.g., Borda's rule, Nanson's rule, Black's rule, the maximin rule, Bucklin's rule, Young's rule, and Kemeny's rule (we refer to \citet{BCE+14a} for definitions of these SCCs).}
		\label{Fig:manipulation}
\end{figure}

Even though Fishburn-strategyproofness seems like a relatively weak strategyproofness notion, it is only satisfied by very few SCCs. In particular, the omninomination rule, the Pareto rule, the top cycle, and the Condorcet rule are Fishburn-strategyproof, while virtually all other commonly studied SCCs fail Fishburn-strategyproofness. As an example of this claim, we give an example demonstrating the Fishburn-manipulability of plurality rule (which chooses the alternatives top-ranked by most voters) and many other SCCs in \Cref{Fig:manipulation}.

\section{Results}\label{sec:PF}

	Our first result is a complete characterization of pairwise strategyproof SCCs in terms of dominant set rules. An SCC is a \emph{dominant set rule} if for every preference profile $R$, it returns a dominant set with respect to $\succsim_R$. Examples of dominant set rules are the top cycle, the Condorcet rule (which returns the Condorcet winner if it exists and all alternatives otherwise) and the Condorcet non-loser rule (which returns all alternatives but a Condorcet loser). Observe that---even though dominant set rules may seem rather restricted---they allow for a fair degree of freedom in the choice of dominant sets. For instance, dominant set rules can take the majority margins into account. This is demonstrated by the SCC that returns an alternative $x$ as unique winner if $g_{R}(x,y)>2$ for all $y\in A\setminus \{x\}$ and otherwise returns all alternatives. It is also possible to define rather unnatural dominant set rules such as the SCC that returns the smallest dominant set whenever the majority graph contains a cycle with identical weights and the second smallest dominant set otherwise. 
	
	For our analysis, it suffices to consider the particularly simple subclass of robust dominant set rules: a dominant set rule $f$ is \emph{robust} if $f(R')\subseteq f(R)$ for all preference profiles $R, R'$ such that $f(R)$ is dominant in $R'$. In other words, if the choice set $f(R)$ for some profile $R$ is also dominant in another profile $R'$, then no alternative outside of $f(R)$ can be chosen for $R'$.
It is easily seen that the top cycle, the Condorcet rule, and the Condorcet non-loser rule are robust dominant set rules, whereas the two artificial examples given above fail this condition. Moreover, robust dominant set rules are majoritarian and therefore also homogeneous.
	
Our first theorem shows that---under mild additional assumptions---robust dominant set rules are the \emph{only} pairwise SCCs that satisfy strategyproofness.
	
	\begin{restatable}{theorem}{FSP}\label{thm:F-SP}
		Let $f$ be a pairwise SCC that satisfies non-imposition, homogeneity, and neutrality. Then, $f$ is strategyproof iff it is a robust dominant set rule. 
	\end{restatable}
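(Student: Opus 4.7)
I would prove the two directions separately, beginning with the easier direction $(\Leftarrow)$. Let $f$ be a robust dominant set rule, and suppose for contradiction that voter $i$ Fishburn-manipulates from $R$ to $R'$, yielding $X := f(R)$ and $Y := f(R')$ with $Y \succ_i^F X$. I would analyze the three possible containment patterns between $X$ and $Y$. If both $X \setminus Y$ and $Y \setminus X$ are non-empty, pick $x \in X \setminus Y$ and $y \in Y \setminus X$: Fishburn forces $y \succ_i x$ in voter $i$'s true preferences, so voter $i$ already contributes $-1$ to $g_R(x,y)$. A one-voter deviation therefore gives $g_{R'}(x,y) \ge g_R(x,y) > 0$ by dominance of $X$ in $R$, contradicting dominance of $Y$ in $R'$. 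In the case $X \subsetneq Y$, an analogous margin computation shows that $X$ remains dominant in $R'$ (the critical pairs split into those in $X \times (Y \setminus X)$, handled via voter $i$'s Fishburn-contribution, and those in $X \times (A \setminus Y)$, handled via dominance of $Y$ in $R'$), so robustness yields $Y \subseteq X$, a contradiction. The case $Y \subsetneq X$ is symmetric: $Y$ turns out to be dominant in $R$, and robustness applied with the roles of $R$ and $R'$ swapped gives $X \subseteq Y$.

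For the forward direction $(\Rightarrow)$, assume $f$ is pairwise, non-imposing, neutral, homogeneous, and strategyproof. I would first show that $f(R)$ is always a dominant set, then deduce robustness. For the dominance claim, suppose toward contradiction that $X := f(R)$ is not dominant, so some $y \notin X$ satisfies $y \succsim_R x$ for some $x \in X$. Non-imposition supplies a profile with $\{y\}$ as output, and pairwiseness together with homogeneity let me replace profiles with any others realizing the same (scaled) margin functions. I would then construct a sequence of profiles from $R$ to a profile where $y$ is uniquely chosen, changing one voter's preferences at a time. Along this chain the output of $f$ must change, and by engineering the intermediate preferences via neutrality and McGarvey-style constructions enabled by pairwiseness, I aim to locate a single-voter transition that is a Fishburn-improvement for the deviator, contradicting strategyproofness.

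Once $f$ is known to be a dominant set rule, robustness follows by an analogous chain argument: if $X := f(R)$ is dominant in some $R'$ but $f(R') \not\subseteq X$, then picking $z \in f(R') \setminus X$ and constructing a path of single-voter transitions between $R'$ and $R$, one can exhibit a step at which some voter Fishburn-manipulates. The main obstacle, by some distance, is the dominance step in the forward direction. Because Fishburn's extension requires strict preference across entire set differences, manipulations are hard to produce, and the chain must be set up with great care; the interplay between pairwiseness (for realizing target margin functions), homogeneity (for aligning electorate sizes), and neutrality (for transporting structural properties across relabellings) is precisely what makes such a construction possible.
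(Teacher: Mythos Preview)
Your $(\Leftarrow)$ argument is correct and matches the paper's proof (the paper organizes it into two cases, $f(R')\subsetneq f(R)$ versus $f(R')\not\subseteq f(R)$, but the margin/robustness reasoning is the same).

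The $(\Rightarrow)$ direction, however, has a genuine gap. Your plan is a path argument: connect $R$ to a profile with output $\{y\}$ by single-voter changes and locate a Fishburn-improving step. The difficulty you yourself flag is fatal to this outline as written. Along a generic path the choice set can change in Fishburn-incomparable ways (e.g.\ from $\{a,b,c\}$ to $\{a,b,d\}$), and there is no reason any such transition must be an improvement for the deviator, regardless of how cleverly you pick the deviator's preferences. Pairwiseness, homogeneity, and neutrality let you realize and relabel margin functions, but they do not by themselves force any particular transition along the path to be a Fishburn manipulation. You have identified the obstacle without supplying the mechanism that overcomes it.

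The paper's route is substantially more elaborate and does not proceed by a single chain argument. It first extracts from strategyproofness plus pairwiseness four structural axioms on how the choice set may react to controlled changes in margins (a weak monotonicity property \WMON, a push-down property \WSMON, independence of unchosen alternatives \IUA, and a weak localizedness property \IR). Using these together with non-imposition, homogeneity, and neutrality, it derives strong Condorcet-consistency (singleton outputs occur exactly at Condorcet winners) and a stability property \CWCS (no alternative strictly beats the entire choice set). From there the argument is an induction on $|\tc(R)|$: one shows $\tc(R)\subseteq f(R)$ for all $R$, first handling $|\tc(R)|\le 3$ directly, then for larger top cycles proving that at most one top-cycle alternative can be missing from $f(R)$, and finally manufacturing a highly symmetric profile (all relevant margins equal to $2$) on which a contradiction is forced. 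Only after $\tc\subseteq f$ is established does one conclude that $f$ is a dominant set rule, then majoritarian, then robust. None of these intermediate steps is suggested by your outline, and the length of the paper's argument indicates that the direct path strategy does not close.
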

	
	The direction from right to left is relatively straightforward: every robust dominant set rule $f$ is strategyproof because robustness prohibits successful manipulations of dominant set rules. Consider, for example, that voter $i$ manipulates from a profile $R$ to another profile $R'$ such that $f(R')\subsetneq f(R)$. According to Fishburn's extension we have that $f(R') \succ_i f(R)\setminus f(R')$. Moreover, $f(R') \succ_{R'} A\setminus f(R')$ because $f(R')$ is dominant in $\succsim_{R'}$. Since voter $i$ can only weaken alternatives in $f(R')$ against those in $f(R)\setminus f(R')$ when moving from $R$ to $R'$, $f(R')$ will only be strengthened against $f(R)\setminus f(R')$ when moving from $R'$ to $R$. Hence, $f(R') \succ_{R} f(R)\setminus f(R')$. This implies that $f(R')$ is also dominant in $\succsim_{R}$ because $f(R')\subsetneq f(R)$ and $f(R)$ is dominant in $R$. 
Since $f(R')$ is dominant in both $R$ and $R'$, robustness implies $f(R)\subseteq f(R')$, which is at variance with our initial assumption that $f(R')\subsetneq f(R)$.
	A similar argument applies to the case that $f(R')\not\subseteq f(R)$ where strategyproofness now implies that $f(R')\setminus f(R)\succ_i f(R)$. 

	The converse direction---i.e., every pairwise SCC that satisfies non-imposition, homogeneity, neutrality, and strategyproofness is a robust dominant set rule---is much more difficult to prove. As a first step, we investigate the consequences of strategyproofness for pairwise SCCs. It turns out that we can abstract away from the concrete preferences of the voters and derive multiple axioms that describe how the choice set is affected when modifying the pairwise comparisons between pairs of alternatives.
	For instance, we show that rearranging unchosen alternatives in the voters' preferences does not affect the choice set of strategyproof and pairwise SCCs. The proofs of these implications heavily use the fact that two voters with preferences inverse to each other can be added to a preference profile without affecting the outcome of pairwise SCCs.
	As a second step, we use these axioms to derive some insights on the structure of choice sets returned by pairwise SCCs that satisfy strategyproofness, non-imposition, homogeneity, and neutrality. In more detail, we show for such an SCC $f$ that \emph{(i)} it chooses a single winner iff it is the Condorcet winner and \emph{(ii)} for every alternative $x\in A$, either $f(R)=\{x\}$ or there is an alternative $y\in f(R)\setminus \{x\}$ such that $y\succsim_R x$. The first condition is called strong Condorcet-consistency, and we show that every pairwise, homogeneous, strategyproof, and strongly Condorcet-consistent coarsening of the top cycle is a robust dominant set rule.
	As the last step, we show that every pairwise SCC that satisfies the given axioms is a coarsening of the top cycle, which completes the proof of the theorem.

	 \Cref{thm:F-SP} has a number of important and perhaps surprising consequences. For instance, it implies that every strategyproof and pairwise SCC that satisfies the given conditions has to be majoritarian. In other words, every such SCC completely ignores the absolute values of majority margins, even though these values allow for the definition of more sophisticated SCCs.\footnote{This insight resembles the fact that individual preference intensities can usually not be used by strategyproof voting rules \citep[see, e.g.,][]{Nand12a,EMMS20a}. Note, however, that majority margins represent collective preference intensities.} Moreover, \Cref{thm:F-SP} entails that many strategyproofness notions are equivalent under the assumptions of the theorem because robust dominant set rules satisfy much stronger notions of strategyproofness than Fishburn-strategyproofness (see \Cref{rem:strongsp}).
	 
	 Another interesting consequence of \Cref{thm:F-SP} is that the top cycle is the finest pairwise SCC that satisfies strategyproofness, non-imposition, homogeneity, and neutrality since it returns the smallest dominant set for any given preference profile. As shown in the sequel, we can turn this observation into a characterization of the top cycle by replacing non-imposition and neutrality with set non-imposition. An SCC $f$ satisfies \emph{set non-imposition} if for every non-empty set $X\subseteq A$, there is a profile $R$ such that $f(R)=X$. In other words, every set is chosen for some preference profile $R$, which is in line with the original motivation of non-imposition for SCFs: the functions's image coincides with its codomain. For neutral and pairwise SCCs, set non-imposition can be interpreted as a weak efficiency notion. To see this, assume that there is some set $X\subseteq A$ that is never returned by $f$. Now, consider a profile $R$ such that $X\succ_i A\setminus X$ for all $i\in N$ and $x\sim_R y$ for all $x,y\in X$ and $x,y\in A\setminus X$. Neutrality and pairwiseness imply that $f$ can only return $X$, $A$, or $A\setminus X$. Since $f$ never returns $X$ by assumption, we have that $A\setminus X \subseteq f(R)$. However, every voter $i\in N$ prefers $X$ to $A\setminus X$ and the choice set of $f$ is thus very inefficient.
	 
The following lemma shows how set non-imposition can be used to single out the top cycle among all robust dominant set rules.

	\begin{lemma}\label{lem:TCweak}
		The top cycle is the only robust dominant set rule that satisfies set non-imposition. 
	\end{lemma}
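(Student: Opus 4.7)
The plan is to establish the two directions separately: first verify that the top cycle is itself a robust dominant set rule satisfying set non-imposition, and then show that any such rule must coincide with the top cycle on every profile.

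For the first direction, it is already noted in the paper that the top cycle is a robust dominant set rule. To check set non-imposition, I would construct, for every non-empty $X\subseteq A$, a profile $R$ with $\tc(R)=X$. If $|X|=1$, say $X=\{x\}$, take any profile in which $x$ is the Condorcet winner. If $|X|\ge 2$, label $X=\{x_1,\dots,x_k\}$ and build a profile whose majority relation has $x_i\succ_R x_{i+1}$ (indices mod $k$) so that the alternatives in $X$ form a majority cycle, while $X\succ_R A\setminus X$. A standard construction is to take three voters whose preferences are cyclic shifts of $x_1\succ x_2\succ\dots\succ x_k\succ y_1\succ\dots\succ y_{m-k}$ (one can homogenize if needed). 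Then $X$ is dominant and contains no proper dominant subset, so $\tc(R)=X$.

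For the uniqueness direction, let $f$ be any robust dominant set rule satisfying set non-imposition, and fix an arbitrary profile $R$. Since $f(R)$ is dominant in $R$ and $\tc(R)$ is the smallest dominant set in $R$, we immediately have $\tc(R)\subseteq f(R)$. For the reverse inclusion, invoke set non-imposition to pick a profile $R'$ with $f(R')=\tc(R)$. Because $\tc(R)$ is dominant in $R$, the set $f(R')$ is dominant in $R$; applying the robustness condition (in the form: if $f(R')$ is dominant in $R$, then $f(R)\subseteq f(R')$) yields $f(R)\subseteq f(R')=\tc(R)$. Combining the two inclusions gives $f(R)=\tc(R)$, as desired.

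There is no real obstacle here: the argument is essentially a one-line application of robustness together with the defining minimality property of the top cycle, once set non-imposition is used to exhibit a profile where $f$ returns precisely $\tc(R)$. The only mildly technical step is the explicit construction showing that $\tc$ itself satisfies set non-imposition, and even this reduces to the classical cyclic-profile trick applied to the alternatives of the target set $X$.
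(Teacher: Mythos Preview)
Your proof is correct and follows essentially the same approach as the paper: both directions hinge on exactly the same use of robustness combined with the minimality of $\tc(R)$ among dominant sets, and the paper's argument by contradiction is just the contrapositive of your direct argument. The only cosmetic difference is in the construction witnessing set non-imposition of $\tc$: the paper takes a profile with $x\sim_R y$ for all $x,y\in X$ and $X\succ_R A\setminus X$ (via McGarvey), whereas you build a strict Hamiltonian majority cycle on $X$; either works, though your three-voter cyclic-shift description deserves a line of justification that no proper subset of $X$ is dominant.
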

	\begin{proof}
		It has already been stated that \tc is a robust dominant set rule. 
		Moreover, \tc satisfies set non-imposition because every set $X$ is the smallest dominant set for every profile $R$ such that $x\sim_R y$ for all $x,y\in X$ and $X\succ_R A\setminus X$; the existence of such a profile follows from McGarvey's construction \citep{McGa53a}. 
		
		For the other direction, assume for contradiction that there is another robust dominant set rule $f\neq \tc$. Since $f$ is not the top cycle, there is a profile $R$ with dominant sets $D_1,\dots, D_k$ such that $D_i\subseteq D_j$ iff $i\leq j$ and $f(R)=D_i$ with $i\geq 2$. This means that there is no profile $R'$ such that $f(R')=D_1$ since otherwise, robustness from $R'$ to $R$ implies that $f(R)\subseteq f(R')=D_1$ because $f(R')$ is dominant in $\succsim_R$. In other words, $f$ violates set non-imposition, contradicting our initial assumption. As a consequence, the top cycle is indeed the only robust dominant set rule that satisfies set non-imposition.
	\end{proof}
	
The combination of \Cref{thm:F-SP} and \Cref{lem:TCweak} already characterizes the top cycle as the only pairwise SCC that satisfies strategyproofness, set non-imposition, homogeneity, and neutrality. It turns out that neutrality is not required for this characterization as the insights of the proof of \Cref{thm:F-SP} can be leveraged to establish that only robust dominant set rules satisfy pairwiseness, strategyproofness, set non-imposition, and homogeneity. In particular, the axioms of \Cref{thm:TC} also imply strong Condorcet-consistency, which together with our previous insights and \Cref{lem:TCweak} yields the following characterization.
	
	\begin{restatable}{theorem}{TC}\label{thm:TC}
		The top cycle is the only pairwise SCC that satisfies strategyproofness, set non-imposition, and homogeneity.
	\end{restatable}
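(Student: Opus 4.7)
The plan is to prove Theorem \ref{thm:TC} by closely following the proof strategy of Theorem \ref{thm:F-SP}, with set non-imposition taking over the role played there by the combination of non-imposition and neutrality, and then concluding by invoking Lemma \ref{lem:TCweak}. The easy direction is immediate: the top cycle is pairwise, homogeneous, and strategyproof by Theorem \ref{thm:F-SP}, and it satisfies set non-imposition by Lemma \ref{lem:TCweak}. The substance of the argument is therefore to show that any pairwise SCC $f$ satisfying strategyproofness, set non-imposition, and homogeneity is a robust dominant set rule; once this is established, Lemma \ref{lem:TCweak} forces $f = \tc$.

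First I would reuse, essentially verbatim, the portion of the proof of Theorem \ref{thm:F-SP} that deduces abstract pairwise-comparison axioms from pairwiseness, strategyproofness, and homogeneity alone---for instance, the fact that unchosen alternatives can be freely rearranged in the voters' preferences, and the inverse-pair trick that exploits the variable-population framework. None of these derivations invokes neutrality or non-imposition, so they carry over unchanged. The next step is to recover strong Condorcet-consistency, namely that a singleton $\{x\}$ is chosen if and only if $x$ is a Condorcet winner. The ``only if'' direction follows directly from the structural axioms just mentioned. For the ``if'' direction, which is where the original proof used neutrality, set non-imposition is now strong enough: for every $x \in A$ there is some profile $R_x$ with $f(R_x) = \{x\}$, and the ``only if'' direction applied to $R_x$ tells us that $x$ is a Condorcet winner in $R_x$; the pairwise-comparison axioms subsequently let us morph $R_x$ into any other profile in which $x$ is a Condorcet winner while preserving the choice set $\{x\}$.

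With strong Condorcet-consistency in hand, the final part of the proof of Theorem \ref{thm:F-SP}---which shows that any pairwise, homogeneous, strategyproof, and strongly Condorcet-consistent SCC is a coarsening of the top cycle and in fact a robust dominant set rule---can be invoked without modification, since that argument likewise does not use neutrality. Lemma \ref{lem:TCweak} then yields $f = \tc$ and concludes the proof. The main obstacle is the ``if'' direction of strong Condorcet-consistency: with neutrality it would reduce to a one-line appeal to symmetry, whereas without it one must combine set non-imposition with the axioms derived from strategyproofness to transport the singleton-choice behavior from the profiles provided by set non-imposition to every profile with a Condorcet winner. I expect this transport argument to be the technically most delicate part of the proof.
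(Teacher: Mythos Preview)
Your high-level plan matches the paper exactly: reuse Lemma~\ref{lem:axioms} (the structural axioms \WMON, \WSMON, \IUA, \IR, which indeed need only pairwiseness and strategyproofness), establish strong Condorcet-consistency, invoke Lemma~\ref{lem:domset} to obtain a robust dominant set rule, and finish with Lemma~\ref{lem:TCweak}. However, you have the two directions of strong Condorcet-consistency swapped, and this creates a genuine gap.

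The ``if'' direction (Condorcet winner $\Rightarrow$ singleton choice) is the \emph{easier} one and is precisely where neutrality was \emph{not} used in Lemma~\ref{lem:CC}: Claim~1 there uses only non-imposition (implied by set non-imposition), \WSMON, \IUA, homogeneity, and strategyproofness. Your morph argument for this direction is essentially the paper's, though the detour through the ``only if'' direction is unnecessary---the starting profile $R_x$ need not have $x$ as Condorcet winner for the morph to work.

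The ``only if'' direction (singleton choice $\Rightarrow$ Condorcet winner) is the delicate one, and it does \emph{not} follow from the structural axioms alone as you claim. Even the strict case $g_R(y,x)>0$ already requires the ``if'' direction to be in hand (you reinforce $y$ via \IUA until it is a Condorcet winner and then need the ``if'' direction to force $f=\{y\}$, contradicting $f=\{x\}$). The genuinely hard case is the tie $g_R(x,y)=0$, which is exactly where Lemma~\ref{lem:CC} invoked neutrality. The paper's replacement argument (Lemma~\ref{lem:CC2}) uses set non-imposition in a way your proposal does not anticipate: it takes a profile $R^3$ with $f(R^3)=\{x,y\}$---a \emph{pair}, not a singleton---and shows via \WSMON, \IUA, \WMON, \IR, and homogeneity that this forces $f=\{x,y\}$ on a canonical two-voter profile, contradicting the assumed $f=\{x\}$ on (a homogeneous copy of) that same profile. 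Since your proposal only ever invokes set non-imposition to produce singletons, it cannot close this tie case.
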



	We conclude the paper with a number of remarks.
	
	\begin{remark}[Independence of the axioms]\label{rem:independence}
		We can show that all of the axioms, except non-imposition, are required for the direction from left to right of \Cref{thm:F-SP}.
		If we only omit pairwiseness, the omninomination rule satisfies all required axioms but is no dominant set rule. If we dismiss neutrality, the following SCC based on two special alternatives $a$ and $b$ satisfies all requirements, but is no dominant set rule: $f^{ab}$ returns $\{a\}$ if $a\succ_R  A\setminus \{a,b\}$ and $a\succsim_R b$; otherwise it returns the outcome of the Condorcet rule. All axioms except homogeneity are satisfied by the SCC $\tc^\ast$ which returns the top cycle with respect to the relation $x \succsim_{R}^\ast y$ iff $g_{R}(x,y)\geq -1$. However, $\tc^\ast$ is no robust dominant set rule because it depends on the majority margins. It is open whether non-imposition is required for \Cref{thm:F-SP}. We discuss a variant of \Cref{thm:F-SP} which uses strong Condorcet-consistency instead of neutrality and non-imposition in the appendix. For this variant, it is easy to prove that all axioms are indeed required.
		
		For the converse direction of \Cref{thm:F-SP}, none of the auxiliary axioms is required. In particular, every robust dominant set rule is homogenous and pairwise because robustness entails majoritarianess. Moreover, these SCCs satisfy strategyproofness regardless of whether they are neutral or non-imposing. For instance, the SCC that chooses the set $\{a,b,c\}$ if it is a dominant set and all alternatives otherwise is neither neutral nor non-imposing, but it is a robust dominant set rule and strategyproof.
		
	For \Cref{thm:TC}, we can show the independence of all axioms. Borda's rule only violates strategyproofness, the Condorcet rule only violates set non-imposition, the omninomination rule only violates pairwiseness, and $\tc^\ast$ only violates homogeneity.
	\end{remark}
	
	\begin{remark}[Tournaments]\label{rem:UC}
		A significant part of the literature focuses on the special case when there are no majority ties and the majority graph is a tournament \citep[see, e.g.,][]{Lasl97a,BBH15a}. This, for example, happens when the number of voters is odd.
		In the absence of majority ties and when $m\leq 4$, there is a strategyproof SCC known as the uncovered set which satisfies all requirements of \Cref{thm:F-SP} but is no dominant set rule. When $m\geq 5$, the uncovered set violates strategyproofness and \Cref{thm:F-SP} holds even in the absence of majority ties. 
	\end{remark}

	\begin{remark}[Dropping homogeneity]\label{rem:variants}
The example given in \Cref{rem:independence} for the independence of homogeneity only shows that robustness might be violated if we dismiss homogeneity, but the considered SCC is still a dominant set rule. It turns out that this observation is true in general if we mildly strengthen non-imposition to \emph{unanimity} (a unanimously top-ranked alternatively will be selected uniquely): every pairwise SCC that satisfies strategyproofness, unanimity, and neutrality is a dominant set rule if $m\neq 4$. The last condition is required because of the uncovered set discussed in \Cref{rem:UC}. By weakening robustness, one can thus obtain an alternative characterization of strategyproof SCCs based on \emph{weak robustness}: an SCC $f$ is weakly robust if $f(R')\subseteq f(R)$ for all preference profiles $R$, $R'$ such that $g_{R}(x,y)\leq g_{R'}(x,y)$ for all $x\in f(R)$, $y\in A\setminus f(R)$. Then, if $m\neq 4$, every pairwise SCC that satisfies unanimity and neutrality is strategyproof iff it is a weakly robust dominant set rule. 
	\end{remark}

	\begin{remark}[Weakening neutrality]
		Another variant of \Cref{thm:F-SP} can be obtained by weakening neutrality to the following condition: $x\in f(R)$ iff $y\in f(R)$ for every preference profile $R$ and all pairs of alternatives $x,y\in A$ such that $g_{R}(x,y)=0$ and $g_{R}(x,z)=g_R(y,z)$ for all $z\in A\setminus \{x,y\}$. This condition is void if there is an odd number of voters. As a result, homogeneity becomes more important for the proof as some steps only work for an even number of voters. 
	\end{remark}
	
	\begin{remark}[Weakening non-imposition]
		In the presence of neutrality, non-imposition can be weakened to a condition that merely requires that the SCC returns a singleton set for at least one profile. If we weaken neutrality, this is no longer possible and our proof suggests that, among the three auxiliary axioms, non-imposition plays the most important role as it is crucial for deriving strong Condorcet-consistency.
	\end{remark}
	
	\begin{remark}[Strengthening strategyproofness]
		\label{rem:strongsp}
		Fishburn-strategyproofness is a rather weak strategyproofness notion, which makes the direction from left to right in our characterizations strong. However, robust dominant set rules---especially the top cycle---are actually much more resistant against manipulation. In order to formalize this, we introduce a new preference extension, denoted by $\succsim^{F+}$ based on the relation $\succ^{\exists}_i$ over the subsets of $A$. This relation is defined as $X\succsim_i^{\exists} Y$ iff $X=\emptyset$, $Y=\emptyset$, or there are alternatives $x\in X$, $y\in Y$ such that $x\succ_i y$. Then, 
		\begin{align*}
			X\succsim_i^{F+} Y \quad\text{iff}\quad X\setminus Y\succ_i Y\setminus X \text{ and } X\setminus Y\succ_i^\exists X\cap Y \text{ and } X\cap Y\succ_i^{\exists} Y\setminus X.
		\end{align*}
		Clearly, $X\succsim_i^{F} Y$ implies $X\succsim_i^{F+} Y$ and consequently $F+$-strategyproofness is stronger than Fishburn-strategyproofness. We define an even stronger notion of strategyproofness based on the $\succsim^{F+}$ extension as follows: an SCC $f$ is \emph{strongly} $\succsim^{F+}$-strategyproof if $f(R)\succsim_i^{F+} f(R')$ for all voters $i\in N$ and preference profiles $R$, $R'$ with ${\succ_j}={\succ_j'}$ for all $j\in N\setminus \{i\}$. Strong $\succsim^{F+}$-strategyproof requires that all choice sets for manipulated preference profiles are comparable to the original choice set, making it much stronger than both $F+$-strategyproofness and Fishburn-strategyproofness. 
		Strong Fishburn-strategyproofness can be defined analogously. The top cycle is strongly $\succsim^{F+}$-strategyproof. Interestingly,  \citet{ChZh02a} have shown that only dictatorial and constant SCCs satisfy the slightly stronger notion of strong Fishburn-strategyproofness, which obviously rules out the top cycle. 
	\end{remark}
	
	\begin{remark}[Group-strategyproofness]
		An SCC $f$ is group-strategyproof if for all preference profiles $R$, $R'$ and sets of voters $G\subseteq N$ such that ${\succ_j}={\succ_j'}$ for $j\in N\setminus G$, it holds that $f(R')\not\succ_i^F f(R)$ for some voter $i\in G$. Since every robust dominant set rule is group-strategyproof, it follows from \Cref{thm:F-SP} that strategyproofness is equivalent to group-strategyproofness for pairwise SCCs that satisfy homogeneity, non-imposition, and neutrality.
	\end{remark}
	
	\begin{remark}[Pareto-optimality]\label{rem:tcpo}
		The main disadvantage of the top cycle is that it may return Pareto-dominated alternatives. In fact, \emph{every} strategyproof pairwise SCC that satisfies our assumptions violates Pareto-optimality. However, it is possible to circumvent this impossibility by first removing all Pareto-dominated alternatives and then computing the top cycle of the remaining alternatives. This SCC, $\tc(\po)$ where $\po$ stands for the Pareto rule, was already considered by \citet{Bord79a} and can be shown to be strategyproof. In fact, it satisfies all conditions of \Cref{thm:F-SP} except pairwiseness since it is not possible to compute the set of Pareto-dominated alternatives based on the majority margins only. Interestingly, the ``converse'' SCC, $\po(\tc)$, which first computes the top cycle and then removes all Pareto-dominated alternatives, is nested in between $\tc(\po)$ and $\tc$, but violates strategyproofness.
	\end{remark}

	\begin{remark}[Fishburn-efficiency]
		As discussed in the previous remark, the top cycle fails Pareto-optimality. 
However, the top cycle satisfies the weaker notion of \emph{Fishburn-efficiency}, which requires that for every profile $R$, there is no set of alternatives $X$ such that $X\succ_i^F f(R)$ for all $i \in N$. Fishburn-efficiency can be seen as a weak form of \emph{ex ante} efficiency, where outcomes are compared before ties are broken.
It is easy to see that the top cycle is the only robust dominant set rule satisfying this axiom since every other such rule already violates set non-imposition.
It can moreover be shown that the top cycle is the coarsest majoritarian SCC that satisfies Fishburn-efficiency, i.e., every majoritarian SCC $f$ that is Fishburn-efficient satisfies that $f(R)\subseteq \tc(R)$ for all preference profiles $R$. Since the top cycle is also the finest majoritarian SCC that satisfies Fishburn-strategyproofness, neutrality, and non-imposition, it can be completely characterized using strategyproofness and efficiency. 
	\end{remark}
	
	\begin{remark}[Beyond majority relations]
		Dominant set rules can be defined with respect to any complete binary relation derived from the preference profile. To formalize this idea, let the \emph{information base} $I(R)$ denote a function that maps $R\in\mathcal{R}^*(A)$ to a complete binary relation $\succsim_{I(R)}$ on $A$. Applying a dominant set rule to $\succsim_{I(R)}$ clearly results in an SCC. Moreover, if $I(R)$ is \emph{local} (i.e., $a\succsim_{I(R)} b$ iff $a\succsim_{I(R')} b$ for all $a,b\in A$ and $R, R'\in\mathcal{R}(A)^N$ such that $a\succ_i b$ iff $a\succ_i' b$ for all $i\in N$) and \emph{monotone} (i.e., $a\succsim_{I(R)} b$ implies $a\succsim_{I(R')} b$ for all $a,b\in A$ and $R,R'\in\mathcal{R}^*(A)$ such that $R'$ is derived from $R$ by reinforcing $a$ against $b$ in the preference relation of a voter $i$), then every robust dominant set rule on $\succsim_{I(R)}$ is strategyproof. This proves, for instance, that dominant set rules based on \emph{supermajority relations} (i.e., $a\succsim_{I(R)} b$ iff $g_R(a,b)\geq -k$ for some $k\in \mathbb{N}$) or on \emph{shifted majority relations} (i.e., $a\succ_{I(R)} b$ if $g_{R}(a,b)>k$, $a\sim_{I(R)} b$ if $g_{R}(a,b)=k$, and $b\succ_{I(R)} a$ otherwise) are strategyproof. 
		
		For some information bases $I(R)$, it is even possible to prove statements analogous to \Cref{thm:F-SP} when demanding exclusive dependence on $I(R)$. To this end, we say an SCC is \emph{$I(R)$-based} for some information basis $I(R)$ if $f(R)=f(R')$ for all preference profiles $R,R'\in\mathcal{R}^*(A)$ such that ${\succsim_{I(R)}}={\succsim_{I(R')}}$. This extends the definition of majoritarianess. 
		For instance, it is easy to derive from our proof that robust dominant set rules on a supermajority relation $I(R)$ are the only neutral, non-imposing, strategyproof and $I(R)$-based SCCs. An equivalent statement holds for shifted majority relations $I(R)$ when defining neutrality based on~$\succsim_{I(R)}$.
	\end{remark}
	
	\begin{remark}[Fixed electorates]
		A non-standard assumption in our model is that of a variable electorate. 
		This assumption is necessary, because, when fixing the number of voters, the Pareto rule satisfies all axioms of \Cref{thm:F-SP} and \Cref{thm:TC} but fails to be a dominant set rule. We now sketch two approaches to adapt our results to a fixed electorate framework. 
		Firstly, we may replace pairwiseness and homogeneity with majoritarianess. Since the construction of \citet{McGa53a} allows us to build every majority relation with at most $m^2$ voters, we need at most $m^2+2$ voters for our results to hold under majoritarianess. The second approach is to restrict attention to profiles whose maximal majority margin is bounded by a constant $c\geq 2$. This is possible because we never need to increase the maximal majority margin to a value larger than $c$ in our proofs.\footnote{This is not in conflict with the fact that we sometimes use homogeneity to duplicate preference profiles in the proof, because it is either possible to entirely avoid these homogeneity applications, or to ensure that all majority margins are $1$ before duplicating the profile.} Using again \citeauthor{McGa53a}'s construction, every profile with a majority margin of at most $c$ can be built with $cm^2$ voters and we thus need at most $cm^2+2$ voters for our proof. Hence, we can show that every non-imposing, neutral, strategyproof, homogeneous, and pairwise SCC is a robust dominant set rule for profiles with maximal majority margin of at most $c$ if there are $cm^2+2$ voters (here, homogeneity is defined for majority margins).
	\end{remark}

\section*{Acknowledgements}
This material is based on work supported by the Deutsche Forschungsgemeinschaft under grants {BR~2312/11-2} and {BR~2312/12-1}. 
Results from this paper were presented at the COMSOC video seminar (November 2021) and the 16th Meeting of the Society of Social Choice and Welfare in Mexico City (June 2022). 
The authors thank Florian Brandl and the anonymous referees for helpful comments.

	\newpage
	\appendix
	
	\section{Omitted proofs}
	\label{AppendixC}
	
	This appendix contains the proofs of \Cref{thm:F-SP,thm:TC}. Proof sketches for these results were given in \Cref{sec:PF} and we here focus on the details. Since the proofs are rather involved, we divide them into multiple lemmas which are organized in subsections to highlight related ideas. In particular, we discuss additional notation in \Cref{app:notation}, some general results on the structure of the top cycle in \Cref{app:structure}, implications of Fishburn-strategyproofness for pairwise SCCs in \Cref{app:axioms}, a variant of \Cref{thm:F-SP} that relies on strong Condorcet-consistency in \Cref{app:condorcet}, and finally the proofs of our main results in \Cref{app:mainresults}.
	
	\subsection{Notation}
	\label{app:notation}
	
	Before discussing our proofs, we need to introduce some additional notation. First, we specify how we denote preference relations. We usually write preference relations as comma-separated lists. In these lists, we use $\lex(X)$ and $\lex(X)^{-1}$ to indicate that the alternatives in a set $X$ are ordered lexicographically or inversely lexicographically. For instance, $a, \lex(\{b,c\}),d$ is equivalent to $a,b,c,d$ and means that $a$ is preferred to $b$, $b$ to $c$, and $c$ to $d$. Similarly, $a, \lex(\{b,c\})^{-1},d$ is equivalent to $a,c,b,d$. Furthermore, we occasionally interpret a voter's preference relation as a set of tuples and use set operations such as set intersections and set differences to form new preference relations. In particular, we write $\succ_i\!|_X$ to denote the restriction of $\succ_i$ to $X$, i.e., $\succ_i\!|_X={\succ_i}\cap X^2$. We use the same notation for the majority relation, i.e., $\succsim_R\!|_{X}$ denotes the restriction of $\succsim_R$ to $X$. For instance, $\succsim_R\!|_{X}=\;\succsim_R'\!|_{X}$ means that the majority relations of $R$ and $R'$ agree on the alternatives in $X$. 
	
	The second important concept is that of cycles in the majority relation. A \emph{cycle} in a majority relation $\succsim_R$ is a sequence of $q\geq 2$ alternatives $(a_1, \dots, a_q)$ such that $a_{i} \succsim_R a_{i+1}$ for all $i\in \{1,\dots, q-1\}$, $a_q \succsim_R a_1$, and $a_i\neq a_j$ for all distinct $i,j\in \{1,\dots, q\}$. Informally, a cycle is a path in $\succsim_R$ that starts and ends at the same alternative and visits every alternative on the cycle (except the first one) only once. For instance, in the majority graph in \Cref{fig:example}, $C=(a,b,c)$ is a cycle. While slightly overloading notation, we denote with $C$ both the ordered sequence of alternatives that defines a cycle and the set of alternatives contained in the cycle.
	
	Finally, we introduce the notions of connectors and connected sets. The \emph{connected set} $A_x$ of an alternative $x\in A$ in a profile $R$ contains all alternatives (except $x$) that drop out of the top cycle if we remove $x$ from the preference profile, i.e., $A_x=\tc(R)\setminus \left(\tc(R|_{A\setminus \{x\}})\cup \{x\}\right)$. 
	The notion of connected sets helps us to distinguish the alternatives in the top cycle further: we say an alternative $x\in A$ is a \emph{connector} in $R$ if $A_x\neq\emptyset$. This means intuitively that, if we remove $x$ from the preference profile, $x$ and additional alternatives drop out of the top cycle. In other words, $x$ connects the alternatives in $A_x$ to the rest of the top cycle. Note that an alternative $x\not\in \tc(R)$ cannot be a connector since $A_x=\emptyset$ for these alternatives and that connectors only exist if $|\tc(R)|\geq 3$.
	
	\subsection{Structure of the Top Cycle}
	\label{app:structure}
	
	For the proofs of our results, it will be helpful to have a deeper understanding of the structure of the top cycle. In more detail, we first show that the top cycle is closely connected to cycles in the majority relation. \citet{Moul86a} has shown such a statement under the assumption that there are no majority ties: there is a cycle in the majority relation $\succsim_R$ that connects all the alternatives in $\tc(R)$. Since we need to allow for majority ties, we generalize this result by interpreting majority ties as bidirectional edges.
	
	
	\begin{lemma}\label{lem:HC}
		Let $R$ be a preference profile. It holds for a set $X\subseteq A$ with $|X|\geq 2$ that $\tc(R)=X$ iff there is a cycle $C=(a_1, \dots, a_{|X|})$ in $\succsim_R$ such that $C=X$ and $X \succ_R A\setminus X$. Furthermore, $\tc(R)=\{x\}$ iff $x$ is the Condorcet winner in $R$. 
	\end{lemma}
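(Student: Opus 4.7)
The plan is to handle the biconditional in two directions (cycle plus dominance iff top cycle equals $X$), and to address the Condorcet-winner statement as a straightforward consequence of the definitions. For the singleton case: since $\tc(R)$ is the smallest dominant set, $\tc(R)=\{x\}$ holds iff $\{x\}$ is dominant iff $x\succ_R A\setminus\{x\}$, i.e., iff $x$ is a Condorcet winner.

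For the direction ``cycle plus dominance $\Rightarrow$ top cycle equals $X$'', assume a cycle $C=(a_1,\dots,a_{|X|})$ through $X$ and $X\succ_R A\setminus X$. The dominance condition gives $\tc(R)\subseteq X$ directly, because $\tc(R)$ is the smallest dominant set and $X$ is dominant. Conversely, the cycle shows that every $x\in X$ reaches every other element of $X$ along $\succsim_R$, and dominance gives $x\succ_R A\setminus X$ directly; hence $x\succsim_R^\ast A$, so $X\subseteq\tc(R)$.

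For the direction ``$\tc(R)=X$ $\Rightarrow$ cycle plus dominance'' with $|X|\geq 2$, dominance of the smallest dominant set gives $X\succ_R A\setminus X$ for free, so the real work lies in producing the cycle through $X$. I would first reformulate in graph-theoretic terms: since $X$ is dominant, no $\succsim_R$-edges go from $A\setminus X$ into $X$, so every $\succsim_R^\ast$-path between two elements of $X$ is confined to $X$; and since elements of $X$ are mutually reachable under $\succsim_R^\ast$ (being in the top equivalence class), the induced digraph on $X$ with edges $x\to y$ iff $x\succsim_R y$ is strongly connected. Completeness of $\succsim_R$ moreover makes this digraph semi-complete (every pair of distinct vertices has at least one edge, and a majority tie gives both). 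The claim then reduces to the graph-theoretic fact that every strongly connected semi-complete digraph admits a Hamiltonian cycle. I would prove this by adapting Camion's classical argument: take a longest cycle $C$; if some $v\in X\setminus C$ exists, classify each $c_i\in C$ relative to $v$ as strict predecessor ($c_i\succ_R v$), strict successor ($v\succ_R c_i$), or tied, and use a sign-chase around $C$ to show that either a consecutive pair $c_i\succsim_R v\succsim_R c_{i+1}$ exists (allowing insertion of $v$ and contradicting maximality), or all of $C$ strictly dominates $v$, or $v$ strictly dominates all of $C$. Partition $X\setminus C$ into $B$ (strictly below $C$) and $D$ (strictly above $C$); both are non-empty by strong connectivity, and tracing a shortest path from $B$ to $C$ yields $b\in B$ and $d\in D$ with $b\succsim_R d$. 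The length-two insertion $c_i\succsim_R b\succsim_R d\succsim_R c_{i+1}$ between any consecutive cycle vertices then extends $C$, again contradicting maximality and finishing the proof.

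The main obstacle is handling majority ties cleanly in the Camion-style argument: in the tournament setting classes of vertices outside $C$ split cleanly into ``in-only'' and ``out-only'', and edges have unique directions. With ties permitted, a vertex outside $C$ can be tied with some cycle vertices and strictly dominate or be dominated by others, and each of the insertion steps must be argued using weak edges $\succsim_R$ rather than strict ones. Verifying that the sign-chase around $C$ still forces the clean $B$ versus $D$ dichotomy, and that the edge $b\succsim_R d$ extracted from the $B$-to-$C$ path really does yield a valid length-two insertion, is the delicate bookkeeping required to push Moulin's result through to the tied setting.
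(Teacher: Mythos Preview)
Your proposal is correct and follows essentially the same Camion-style argument as the paper: take a maximal cycle $C$ in $\succsim_R|_X$, use a sign-chase to show each $v\in X\setminus C$ is either strictly above or strictly below $C$, argue both classes are nonempty, find an edge from the lower class to the upper class, and perform a length-two insertion. The only cosmetic difference is that you justify nonemptiness of $B,D$ and the existence of the crossing edge $b\succsim_R d$ via strong connectivity of the induced digraph on $X$, whereas the paper derives the same facts by noting that otherwise $C$ or $D$ would be a strictly smaller dominant set than $\tc(R)$; both framings are immediate from $X=\tc(R)$.
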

	\begin{proof}
		We first prove that $\tc(R)=\{x\}$ iff $x$ is the Condorcet winner in $R$. Thus, note that $\{x\}=\tc(R)$ implies that $x \succ_R A\setminus \{x\}$ because the top cycle returns a dominant set. Hence, $x$ is the Condorcet winner if it is the unique winner of the top cycle. Next, let $x$ denote the Condorcet winner in a preference profile $R$. It follows that $x \succ_R A\setminus \{x\}$ and therefore, $\{x\}$ is a dominant set. Even more, it is obviously the smallest dominant set and thus $\tc(R)=\{x\}$, which proves the first claim.
		
		Next, we focus on sets of alternatives $X\subseteq A$ with $|X|\geq 2$ and show first that if $X=\tc(R)$, there is a cycle $C$ in $\succsim_R$ such that $C=X$ and $X\succ_R A\setminus X$. Since the latter condition directly follows from the definition of the top cycle, we only have to show that there is a cycle in $\succsim_R$ containing all alternatives in $X$. Note for this that if there is an alternative $x\in X$ with $x \succ_R X\setminus\{x\}$, this alternative is the Condorcet winner and $\tc(R)=\{x\}\neq X$. Consequently, for every alternative $x\in X$, there is another alternative $y\in X\setminus \{x\}$ such that $y \succsim_R x$. This means that there is a cycle in $\succsim_R\!|_{X}$. Let $C=(a_1, ..., a_q)$ denote an inclusion maximal cycle in $\succsim_R|_X$ and assume for contradiction that there is an alternative $y\in X\setminus C$. 
		
		As a first step, consider the case that there are two distinct alternatives $a_i$, $a_j\in C$ such that $a_i \succsim_R y$ and $y \succsim_R a_j$. In this case, we can extend the cycle $C$ by adding $y$, which contradicts the inclusion maximality of $C$. Note for this that we can find two alternatives $a_k, a_{k+1}\in C$ such that $a_{k+1}$ is the successor of $a_k $ in $C$, $a_k \succsim_R y$, and $y \succsim_R a_{k+1}$. Otherwise, it holds for all $a_l\in C$ that $a_l \succsim_R y$ implies for its successor $a_{l+1}$ in $C$ that $a_{l+1} \succ_R y$. If we start at $a_i$ and subsequently apply this argument along the cycle $C$, we derive eventually that $a_l \succ_R y$ for all $a_l\in C$, which contradicts that $y \succsim_R a_j$. Hence, there must be such alternatives $a_k$ and $a_{k+1}$ and we can extend the cycle $C$ to $C'=(a_1, \dots a_k, y, a_{k+1}, \dots, a_q)$.
				
		As a consequence of the last case, it holds for all alternatives $x\in X\setminus C$ that either $x \succ_R C$ or $C \succ_R x$. We partition the alternatives in $X\setminus C$ with respect to these two options into the sets $X_1=\{x\in X\setminus C\colon x \succ_R C\}$ and $X_2=\{x\in X\setminus C\colon C \succ_R x\}$. If $X_1=\emptyset$, then $C \succ_R A\setminus C$, which contradicts that $X=\tc(R)$ because $C$ is a smaller dominant set than $X$. If $X_2=\emptyset$ or $X_1 \succ_R X_2$, then $X_1 \succ_R A\setminus X_1$ which again contradicts that $X=\tc(R)$ because $X_1$ is now a smaller dominant set than $X$. Thus, both $X_1$ and $X_2$ are non-empty and there is a pair of alternatives $x_1\in X_1$, $x_2\in X_2$ such that $x_2 \succsim_R x_1$. However, this means that we can extend the cycle $C$ by adding $x_1$ and $x_2$ as $a_1 \succ_R x_2$, $x_2 \succsim_R x_1$, and $x_1 \succ_R a_2$. This contradicts the inclusion maximality of $C$ and therefore, the initial assumption that $C\neq X$ was incorrect. 
		
		Finally, we prove that $\tc(R)=X$ for a set $X\subseteq A$ with $|X|\geq 2$ if there is a cycle $C=(a_1, ..., a_{|X|})$ in $\succsim_R$ with $C=X$ and $X \succ_R A\setminus X$. Note for this that $X$ is a dominant set in $\succsim_R$ if it satisfies these conditions. Since dominant sets are totally ordered by set inclusion and the top cycle is the smallest dominant set, it follows that $\tc(R)\subseteq X$. Next, assume that $X\setminus \tc(R)\neq \emptyset$, which means that $\tc(R)\succ_R X\setminus \tc(R)$ because of the definition of the top cycle. However, then there cannot be a cycle in $\succsim_R$ that connects all alternatives in $X$ because there is no path from an alternative in $X\setminus \tc(R)$ to an alternative in $\tc(R)$. This contradicts our assumptions and thus, the assumption $X\setminus \tc(R)\neq \emptyset$ was incorrect. Hence, it follows that $X=\tc(R)$. 	
	\end{proof}
	
	\Cref{lem:HC} is one of the most important insights for our subsequent proofs as the existence of the cycle provides paths between all alternatives $x,y\in \tc(R)$. This insight will also be used in the next lemma, where we investigate connected sets.
	
	\begin{lemma}\label{lem:connectors}
		Let $R$ be a preference profile and suppose that $x$ is a connector in $R$. Moreover, let $y\in A_x$ denote an alternative in the connected set of $x$. It holds that $A_y\subseteq A_x$ unless $x\succ_R A\setminus \{x,y\}$. 
	\end{lemma}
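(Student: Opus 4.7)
My plan is to show that, for any $z \in A_y$, one has $z \in A_x$ whenever the exception $x \succ_R A \setminus \{x, y\}$ fails. Writing $T = \tc(R)$, $T' = \tc(R|_{A \setminus \{x\}})$, and $T'' = \tc(R|_{A \setminus \{y\}})$, and using that $T' \subseteq T \setminus \{x\}$, $T'' \subseteq T \setminus \{y\}$ (since removing an alternative of $\tc(R)$ only shrinks the top cycle), and $y \notin T'$, the containment $A_y \subseteq A_x$ is equivalent to the conjunction of $x \in T''$ and $T' \subseteq T''$. I argue by contradiction: assume the exception fails and $A_y \not\subseteq A_x$, so either $x \notin T''$ or there exists $z \in T' \setminus T''$.

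If $x \notin T''$, consider $T'' \cup \{y\}$ as a candidate dominant set in $R$; $T''$ beats $A \setminus (T'' \cup \{y\})$ already, and $y$ beats $A \setminus T$, so if $y$ also strictly beat every element of $T \setminus (T'' \cup \{y\})$ we would get $T \subseteq T'' \cup \{y\} \subseteq T \setminus \{x\}$, contradicting $x \in T$. Hence there is $u \in T \setminus (T'' \cup \{y\})$ with $u \succsim_R y$. Performing the symmetric analysis with $T' \cup \{x\}$ and using the hypothesis $y \in A_x$ (so $T' \succ_R y$ with $T' \subseteq T \setminus \{x, y\}$), one finds $t' \in T'$ and $t'' \in T''$ with $t'' \succ_R x \succsim_R t'$ and $t' \succ_R y \succsim_R t''$. \Cref{lem:HC} applied to the cycle through $T$ then pins down $T = T' \cup T'' \cup \{x, y\}$, and a final contradiction is extracted from the witness $w \in T \setminus \{x, y\}$ with $w \succsim_R x$ provided by the failure of the exception.

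In the remaining case $x \in T''$ and $z \in T' \setminus T''$, a preliminary argument shows $T'' \subseteq T' \cup \{x\}$: any $t \in T''$ with $t \notin T' \cup \{x\}$ would satisfy $z \succ_R t$ (from the dominance of $T'$ in $R|_{A \setminus \{x\}}$) and $t \succ_R z$ (from $T'' \succ_R z$), a contradiction. So $T'' \setminus T' = \{x\}$. If $T'' = \{x\}$ then $x$ is the Condorcet winner in $R|_{A \setminus \{y\}}$, i.e., the exception holds. Otherwise \Cref{lem:HC} yields a cycle through $T''$; the predecessor $p$ of $x$ in that cycle lies in $T' \cap T''$ and satisfies $p \succsim_R x$. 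Combined with the cycle through $T'$, also supplied by \Cref{lem:HC}, and a careful tracking of edges between $T' \setminus T''$, $T' \cap T''$, and $\{x\}$, this forces a structural inconsistency.

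The main obstacle, common to both cases, is extracting a strict contradiction from a web of weak-dominance relations: the argument must combine the minimality of $T''$ as a dominant set in $R|_{A \setminus \{y\}}$ with the cycle structure of the top cycle (via \Cref{lem:HC}) to turn information about paths into direct edge constraints, and then check that these constraints cannot coexist with the failure of the exception.
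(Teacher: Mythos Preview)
Your reduction of $A_y\subseteq A_x$ to the conjunction ``$x\in T''$ and $T'\subseteq T''$'' is correct, and your identification of the subcase $T''=\{x\}$ with the exception $x\succ_R A\setminus\{x,y\}$ is clean. However, neither of your two main cases is actually proved. In Case~1 you assert that one can find $t'\in T'$, $t''\in T''$ with the stated dominance pattern, that ``\Cref{lem:HC} pins down $T=T'\cup T''\cup\{x,y\}$'', and that ``a final contradiction is extracted'' from the witness~$w$; none of these claims is substantiated, and the decomposition $T=T'\cup T''\cup\{x,y\}$ is not obviously true. In Case~2 with $|T''|\ge 2$ you stop at ``a careful tracking of edges \dots\ forces a structural inconsistency''---but you never exhibit the inconsistency.

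The idea you are missing is short and resolves almost everything at once: whenever $T'\cap T''$ is a nonempty proper subset of $T'$, it is already dominant in $R|_{A\setminus\{x\}}$. Indeed, $T'\cap T''\subseteq T'$ strictly beats $(A\setminus\{x\})\setminus T'$, and for $z\in T'\setminus T''$ we have $z\neq y$ (since $y\notin T'$) and hence $T''\succ_R z$, so $T'\cap T''\succ_R z$. This contradicts the minimality of $T'=\tc(R|_{A\setminus\{x\}})$ and thus disposes of your Case~2 with $|T''|\ge 2$ (where $T''\setminus\{x\}\subseteq T'$ gives $T'\cap T''\neq\emptyset$) and of the subcase of Case~1 where $T'\setminus T''\neq\emptyset$ and $T'\cap T''\neq\emptyset$. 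If $T'\cap T''=\emptyset$, then $T'\succ_R T''\setminus\{x\}$ and $T''\succ_R T'$ give an immediate contradiction unless $T''=\{x\}$, which is the exception. The only remaining subcase is $T'\subseteq T''$ with $x\notin T''$; here you need one concrete input from \Cref{lem:HC}: the successor $a_2$ of $x$ on a Hamiltonian cycle through $T$ satisfies $x\succsim_R a_2$ and $a_2\in T'$ (it reaches all of $A\setminus\{x\}$ along the cycle), hence $a_2\in T''$, contradicting $T''\succ_R x$.

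For comparison, the paper's proof is more constructive and takes a genuinely different route: it fixes a Hamiltonian cycle $C=(a_1{=}x,a_2,\dots,a_k)$ through $T$, first establishes the structural fact that $A_x$ is a tail segment $\{a_{l+1},\dots,a_k\}$ of $C$, and then case-splits on whether $y$ is the predecessor $a_k$ of $x$; in each case it shows directly that $T'\cup\{x\}\subseteq T''$, with the exception arising precisely when $y=a_k$ and $x$ beats all of $A\setminus\{x,y\}$. Your set-theoretic approach (once completed as above) is arguably slicker for this lemma, while the paper's ``tail segment'' description of $A_x$ is a reusable structural insight.
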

	\begin{proof}
		Consider an arbitrary preference profile $R$ and a connector $x$ in $R$. Note that the existence of a connector implies that $k=|\tc(R)|\geq 3$. Thus, let $C=(a_1, \dots, a_k)$ denote a cycle connecting the alternatives in $\tc(R)$; such a cycle exists because of \Cref{lem:HC}. Since connectors need to be in the top cycle, it follows that there is an index $i$ such that $x=a_i$. In the sequel, we assume without loss of generality that $x=a_1$ since we can decide on the starting point of the cycle. 
		
		As a first step, we show that there is an index $l\in \{2,\dots, k-1\}$ such that $A_x=\{a_{l+1}, \dots, a_k\}$. Consider for this the profile $R^{-x}=R|_{A\setminus \{x\}}$ derived from $R$ by removing $x$ from the preference profile. We determine next the top cycle in $R^{-x}$ because $A_x=\tc(R)\setminus \left(\tc(R^{-x})\cup\{x\}\right)$. First, note that $\tc(R^{-x})\subseteq \tc(R)\setminus \{x\}$ because all alternatives in $\tc(R)\setminus \{x\}\neq\emptyset$ still strictly dominate all alternatives outside of this set. This implies that $a_2$, the successor of $x=a_1$ on $C$, is in $\tc(R^{-x})$ because it can reach every other alternative $a_i\in\tc(R)\setminus \{a_1, a_2\}$ via $\succsim_{R^{-x}}$: we can simply traverse the cycle $C$ to go from $a_2$ to $a_i$. Now, if $a_2\succ_R \tc(R)\setminus \{a_1, a_2\}$, then $a_2$ is the Condorcet winner in $R^{-x}$ and thus, $A_x=\{a_3, \dots, a_k\}$, i.e., $l=3$ satisfies our condition. Otherwise, let $h_1\in \{3,\dots, k\}$ denote the largest index such that $a_{h_1}\succsim_R a_2$. It follows from the definition of the top cycle that $a_{h_1}\in\tc(R^{-x})$ because $a_2\in \tc(R^{-x})$ and thus, $\{a_2, \dots, a_{h_1}\}\subseteq \tc(R^{-x})$ because all these alternatives can reach $a_{h_1}$ by traversing the cycle $C$. It is easy to see that we can repeat this argument: if $\{a_2,\dots, a_{h_1}\}\succ_R \{a_{h_1+1},\dots, a_k\}$, then $\tc(R^{-x})=\{a_2, \dots, a_{h+1}\}$ and $A_x=\{a_{h_1+1},\dots, a_k\}$. Otherwise, we can find the largest index $h_2\in \{h_1+1, \dots, k\}$ such that $a_{h_2}$ dominates an alternative in $\{a_2, \dots, a_{h+1}\}$. Then, it follows that $a_{h_2}\in\tc(R^{-x})$ and consequently, $\{a_2,\dots, a_{h_2}\}\subseteq\tc(R^{-x})$ because we can traverse the cycle $C$ to find a path for every such alternative $a_i$ to $a_{h_2}$. By repeating this argument, we eventually arrive at an index $l$ such that $A_x=\{a_{l+1}, \dots, a_k\}$ since $A_x\neq \emptyset$.
		
		Next, consider an arbitrary alternative $y\in A_x$. Our goal is to prove that $A_y\subseteq A_x$ and thus, we consider the profile $R^{-y}=R|_{A\setminus \{y\}}$. We will show that $\tc(R^{-x})\cup\{x\}\subseteq \tc(R^{-y})$ because then $A_y=\tc(R)\setminus \left(\tc(R^{-y})\cup\{y\}\right)\subseteq \tc(R)\setminus \left(\tc(R^{-x})\cup\{x\}\right)=A_x$. For this, we employ a case distinction with respect to $y$ and first suppose that $y$ is not the direct predecessor of $x$ on $C$, i.e., $y=a_i$ for some $i<k$. Hence, let $y'={a_{i+1}}$ denote the successor of $y$ on $C$ and note that our previous insights show that $y'\in A_x$, too. It holds that $y'\in \tc(R^{-y})$ because $y'$ can reach every alternative $a_j\in \tc(R)\setminus \{y,y'\}$ in $\succsim_{R^{-y}}$ by traversing the cycle $C$. Next, note that $\tc(R^{-x})\succ_{R^{-x}} y'$ because $y'\not\in \tc(R^{-x})$ and therefore also $\tc(R^{-x})\succ_{R^{-y}} y'$. This proves that $\tc(R^{-x})\subseteq \tc(R^{-y})$ because $y'\in\tc(R^{-y})$. In particular, $x'=a_2$, the successor of $x=a_1$ on the cycle $C$, is in $\tc(R^{-y})$ because $x'\in \tc(R^{-x})$. Since $x \succsim_{R^{-y}} x'$, it follows also that $x\in \tc(R^{-y})$, which proves that $\tc(R^{-x})\cup\{x\}\subseteq \tc(R^{-y})$ and thus $A_y\subseteq A_x$. 
		
		As second case, suppose that $y=a_k$, i.e., $y$ is the direct predecessor of $x=a_1$ on $C$. In this case, we immediately derive that $x\in \tc(R^{-y})$ because we can again traverse the cycle $C$ to find a path from $x$ to every other alternative $a_i\in\tc(R)\setminus \{x,y\}$ in $\succsim_{R^{-y}}$. Next, it is important that there is an alternative $z\in A\setminus \{x,y\}$ such that $z\succsim_R x$. If there is no such alternative, then $x\succ_R A\setminus \{x,y\}$ and we have nothing to show as this is the exception stated in the lemma. Since $z\neq y$ and $z\succsim_R x$, it follows also that $z\in \tc(R)$ and $z\in \tc(R^{-y})$. Now, if $z\in A_x$, then $\tc(R^{-x})\subseteq \tc(R^{-y})$ because $\tc(R^{-x})\succ_R z$. On the other hand, if $z\in \tc(R^{-x})=\tc(R)\setminus( A_x\cup\{x\})$, we use the fact that there is a cycle $C'$ connecting the alternatives $\tc(R^{-x})$ in $\succsim_{R^{-x}}$. This cycle exists also in $\succsim_R$ and, since $y\not\in \tc(R^{-x})$, also in $\succsim_{R^{-y}}$. Hence, there is a path from every alternative $a_i\in \tc(R^{-x})$ to $z$, which proves that $\tc(R^{-x})\cup\{x\}\subseteq \tc(R^{-y})$. Thus, it follows also in this case that $A_y\subseteq A_x$, which proves the lemma.
	\end{proof}
		
	\subsection{Implications of Strategyproofness}
	\label{app:axioms}
	
	In the context of pairwise SCCs, it is inconvenient to work with the preference relations of individual voters since the main idea of these SCCs is to abstract away from profiles. However, strategyproofness requires information about a voter's preference relation to deduce which choice sets are possible before and after a manipulation. In order to mitigate this tradeoff, we analyze the implications of strategyproofness for pairwise SCCs in this section. This leads to the definition of four axioms, all of which are satisfied by every pairwise and strategyproof SCC. Also, the first three of these axioms are weakened versions of a property known as set-monotonicity \citep[see][]{Bran11c, BBH15a}
	
	In more detail, we investigate how the choice set of a strategyproof and pairwise SCC is allowed to change if a voter reinforces or weakens an alternative against some other alternatives. Formally, reinforcing an alternative $a$ against some other alternative $b$ in the preference relation of voter $i$ means that voter $i$ switches from $b \succ_i a$ to $a \succ_i' b$ and nothing else changes in voter $i$'s preference relation or in the preference relations of other voters. Conversely, weakening an alternative $a$ against some other alternative $b$ in the preference relation of voter $i$ means that voter $i$ reinforces $b$ against $a$. Note that weakening or reinforcing an alternative $a$ against another alternative $b$ requires that $a$ and $b$ are adjacent in $\succ_i$, i.e., there is no alternative $z\in A\setminus \{a,b\}$ such that $a \succ_i z \succ_i b$ or $b \succ_i z \succ_i a$, respectively. 
		
	Depending on whether the alternatives $a$ and $b$ are chosen, strategyproofness has different consequences when reinforcing $a$ against another alternative $b$. The first case that we consider is to reinforce a chosen alternative $a$ against another alternative $b$. A natural requirement in this situation is monotonicity, which demands that a chosen alternative is still chosen after reinforcing it \citep[see, e.g.,][]{Moul88a}. Unfortunately, we cannot show that strategyproofness implies monotonicity for pairwise SCCs. For instance, assume that a voter submits $b,a,c$ and $\{a,c\}$ is chosen. Next, voter $i$ reinforces $a$ against $b$ and as result $\{b,c\}$ is chosen. In this example, Fishburn's set extension does not allow to compare $\{a,c\}$ to $\{b,c\}$ and hence, this is no violation of strategyproofness. As a consequence, we consider a weakened variant of monotonicity, which we refer to as weak monotonicity (\WMON). This axiom requires that, if a voter reinforces a chosen alternative $a$ against another alternative $b$, then $a$ is still in the choice set unless $b$ is chosen after the manipulation but not before.
	
	\begin{definition}[Weak monotonicity (\WMON)]
		An SCC $f$ satisfies weak monotonicity (\WMON) if $a\in f(R)$ implies $a\in f(R')$ or $b\in f(R')\setminus f(R)$ for all alternatives $a,b\in A$ and preference profiles $R$, $R'$ for which there is a voter $i$ such that 
${\succ'_j}={\succ_j}$ for all $j\in N\setminus \{i\}$ and 
${\succ'_i}={\succ_i}\setminus\{(b,a)\}\cup\{(a,b)\}$.
	\end{definition}

	\WMON has multiple important consequences. Firstly, if we reinforce a chosen alternative $a$ against another chosen alternative $b$, it guarantees that $a$ remains chosen because $b\not\in f(R')\setminus f(R)$. Secondly, if we reinforce a chosen alternative $a$ against an unchosen alternative $b$, either $a\in f(R')$ and $b\not\in f(R')$, or $a\not\in f(R')$ and $b\in f(R')$. If both alternatives were chosen after this step, we could reinforce $b$ against $a$ in voter $i$'s preference relation to revert back to the original preference profile $R$, and \WMON implies that $b$ remains chosen. However, this is in conflict with the assumption that $b$ is not chosen for $R$. On the other hand, it follows directly from the definition of \WMON that it is not possible that $a,b\not\in f(R')$ if $a\in f(R)$. 
Finally, it should be mentioned that monotonicity implies weak monotonicity because it requires that a chosen alternative $a$ remains chosen after reinforcing it, i.e., it excludes additionally the case that $a$ becomes unchosen and $b$ becomes chosen after reinforcing $a$ against $b$. 

Unfortunately, \WMON does not guarantee that weakening an unchosen alternative means that the unchosen alternative remains unchosen. We thus introduce weak set-monotonicity (\WSMON) as our second axiom, which is concerned with what happens if we weaken an unchosen alternative not against a single alternative, but against all other alternatives. 
	
	\begin{definition}[Weak set-monotonicity (\WSMON)]
		An SCC $f$ satisfies weak set-monotonicity (\WSMON) if $f(R)=f(R')$ for all preference profiles $R$, $R'$ for which a voter $i\in N$ and an alternative $a\not\in f(R)$ exist such that ${\succ_j}={\succ_j'}$ for all $j\in N\setminus \{i\}$, ${\succ_i\!|_{A\setminus \{a\}}}={\succ_i'\!|_{A\setminus \{a\}}}$, $a \succ_i A\setminus \{a\}$, and $A\setminus \{a\} \succ_i' a$. 
	\end{definition}

The idea of \WSMON is that moving an unchosen alternative from the first place to the last place in a voter's preference relation should not affect the outcome. This is a weaker variant of set-monotonicity, which requires that weakening an unchosen alternative against a single alternative does not affect the choice set. Unfortunately, we cannot prove this stronger variant because we cannot even prove monotonicity at this point. However, pushing the top-ranked alternative to the bottom of the preference ranking is a rather common operation in the analysis of strategyproof SCCs which is often referred to as push-down lemma \citep[see, e.g.,][]{Zwic15a}. 
		
			The third situation that we are concerned with is that a voter only reorders unchosen alternatives. Intuitively, such an operation should not change the choice set as no relevant comparisons change. This idea is formalized as independence of unchosen alternatives (\IUA).
		
		\begin{definition}[Independence of unchosen alternatives (\IUA)]\label{def:IUA}
			An SCC $f$ satisfies independence of unchosen alternatives (\IUA) if $f(R)=f(R')$ for all preference profiles $R$, $R'$ for which a voter $i\in N$ and alternatives $B\subseteq A\setminus f(R)$ exist such that ${\succ_j}={\succ_j'}$ for all voters $j\in N\setminus \{i\}$ and ${\succ_i\setminus \succ_i\!|_B}={\succ_i'\setminus \succ_i'\!|_B}$. 
		\end{definition}

		Independence of unchosen alternatives, also called independence of losers, is a well-known axiom \citep[see, e.g.,][]{Lasl97a,Bran11b,Bran11c}, which requires that the choice set is invariant with respect to modifications of preferences between unchosen alternatives. In particular, if a voter reinforces an unchosen alternative against another unchosen alternative, the choice set is not allowed to change. Just like \WMON and \WSMON, \IUA is implied by set-monotonicity.
				
		Finally, we introduce an axiom with a different spirit than the previous ones: instead of asking whether an alternative $a$ is chosen after weakening or reinforcing it, we ask whether alternatives that are not involved in the swap are chosen or not. Intuitively, it seems plausible that if an alternative is not affected by a manipulation, its membership in the choice set should not change. However, this condition, whose spirit is similar to the localizedness property used in the characterization of strategyproof randomized social choice functions by \citet{Gibb77a}, is extremely restrictive. Here, we consider a weaker variant: if a voter changes his preference relation between some alternatives $B$ and the inclusion of the alternatives in $B$ in the choice set is unaffected by this modification, then the choice set should not change at all. This idea leads to weak localizedness (\IR) which is formalized below. 
		
		\begin{definition}[Weak localizedness (\IR)]\label{def:IR}
			An SCC $f$ satisfies weak localizedness (\IR) if $f(R)=f(R')$ for all preference profiles $R$, $R'$ for which a voter $i\in N$ and alternatives $B\subseteq A$ exist such that ${\succ_j}={\succ_j'}$ for all voters $j\in N\setminus \{i\}$, ${\succ_i\setminus \succ_i\!|_B}={\succ_i'\setminus \succ_i'\!|_B}$, and $B\cap f(R)=B\cap f(R')$. 
		\end{definition}

		To the best of our knowledge, neither \IR nor similar axioms have been studied before for social choice correspondences. Furthermore, it should be mentioned that \IR---even though it might seem weak when considered in isolation---is quite powerful when combined with other axioms. For instance, the combination of \WMON and \IR implies that swapping two chosen alternatives can only affect the choice set if the weakened alternative becomes unchosen.
	
	We now prove that strategyproof and pairwise SCCs satisfy all axioms discussed in this section. 
	
		\begin{lemma}\label{lem:axioms}
			Every strategyproof and pairwise SCC satisfies \WMON, \WSMON, \IUA, and \IR.
		\end{lemma}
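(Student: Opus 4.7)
My plan is to prove all four axioms by a common strategy that exploits pairwiseness through an \emph{inverse-voter trick}: appending to any profile two voters with mutually reverse preferences leaves every majority margin---and hence the output of $f$---unchanged. This lets me introduce an auxiliary voter $j$ whose preferences $\succ_j$ can be freely designed, provided that a partner voter $k$ with the reverse preferences is added alongside. I will choose $\succ_j$ so that (i)~voter $j$ mimicking voter $i$'s deviation moves the margin vector precisely from $g_R$ to $g_{R'}$, and (ii)~a Fishburn comparison between the two choice sets can be read off directly from $\succ_j$.

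For each axiom I argue by contradiction: let $X=f(R)$, $Y=f(R')$ and suppose the axiom's conclusion fails, forcing $X\neq Y$. Adjoin to $R$ the two voters $j,k$; by pairwiseness the augmented profile $\hat R$ still has $f(\hat R)=X$. The preference $\succ_j$ is built from a block layout placing $Y\setminus X$, $X\cap Y$, $X\setminus Y$, and $A\setminus(X\cup Y)$ top-to-bottom, which guarantees both $Y\setminus X\succ_j X$ and $Y\succ_j X\setminus Y$, so $Y\succ_j^F X$. Distinguished elements are inserted axiom-specifically: for \WMON, $b$ and $a$ are placed as consecutive elements with $b$ immediately above $a$ at the top of the $X\setminus Y$ block (overriding $b$'s ``natural'' block); for \WSMON, $a$ is placed on top of $\succ_j$ so that voter $j$ can subsequently push $a$ to the bottom; for \IUA and \IR the modified set $B$ forms a consecutive sub-block ordered by $\succ_i|_B$, positioned so the Fishburn dominance is preserved---for \IR in particular, $B$ must sit between $Y\setminus X$ and $(X\cap Y)\setminus B$ so that the common elements $B\cap X\cap Y$ occupy a ``middle'' slot above $X\setminus Y$ yet below $Y\setminus X$.

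Once $\succ_j$ is fixed, voter $j$ deviates to $\succ_j'$ obtained by applying exactly the same local modification to $\succ_j$ as voter $i$ applies to $\succ_i$ (the adjacent swap of $b$ and $a$ for \WMON, pushing $a$ from top to bottom for \WSMON, reordering within the $B$-block for \IUA and \IR). By construction voter $j$'s contribution to the majority margins changes by precisely $g_{R'}-g_R$, so pairwiseness yields $f(\hat R')=Y$ for the resulting profile $\hat R'$. Strategyproofness applied to voter $j$'s deviation from $\hat R$ to $\hat R'$ then forbids $Y\succ_j^F X$, contradicting the construction of $\succ_j$.

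The main technical obstacle is the case analysis inside \WMON: $\succ_j$ must have $b,a$ adjacent \emph{and} induce $Y\succ_j^F X$, and this is non-trivial because $b$ may lie in $X\cap Y$, in $X\setminus Y$, or outside $X\cup Y$. The resolution is to place $b$ immediately above $a$ regardless of $b$'s natural block, and to verify in each sub-case that both Fishburn inequalities still hold. A second small subtlety arises in \WSMON when $a\notin Y$: there the layout should instead stack $X\setminus Y$, $X\cap Y$, $Y\setminus X$ from the top so that $X\succ_{j}'^F Y$ after $a$ has been moved down, and the contradiction now comes from voter $j$'s deviation in the reverse direction from $\hat R'$ back to $\hat R$. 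The remaining axioms go through smoothly once the block layout is fixed.
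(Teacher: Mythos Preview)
Your proposal is correct and takes essentially the same approach as the paper: both add an inverse pair of voters and design the auxiliary voter's preference as a block layout over $Y\setminus X$, $X\cap Y$, $X\setminus Y$ (with the distinguished elements inserted so the required local deviation is feasible), then read off a Fishburn-manipulation contradicting strategyproofness. The only differences are cosmetic---the paper places the unchosen block $A\setminus f(R)$ at the top rather than the bottom, and it handles \WSMON uniformly without your case split on $a\notin Y$ (your main layout with $a$ on top already works in that case, so the split is unnecessary though not wrong).
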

		\begin{proof}
			Let $f$ denote a strategyproof and pairwise SCC. We consider each axiom listed in the lemma separately, but each proof relies on the same idea: we assume for contradiction that $f$ fails the considered axiom, which means that there are two profiles $R$ and $R'$ that differ in the preference relation of a single voter $i$ and $f(R)$ and $f(R')$ violate the conditions of the axiom. We add two new voters $i^*$ and $j^*$ with inverse preferences such that voter $i^*$ the can make the same modification as voter $i$. This leads to new preference profiles $R^1$ and $R^2$ such that $f(R^1)=f(R)$ and $f(R^2)=f(R')$ due to pairwiseness. Finally, we can choose the preference relation of voter $i^*$ such that deviating from $R^1$ to $R^2$ is a manipulation, and thus obtain a contradiction to the strategyproofness of $f$.\medskip
			
			\textbf{\WMON}: Following the idea explained above, we assume for contradiction that $f$ violates \WMON. This means that there are preference profiles $R$, $R'$, alternatives $a\in f(R)$, $b\in A\setminus \{a\}$, and a voter $i\in N$ such that ${\succ'_j}={\succ_j}$ for all $j\in N\setminus \{i\}$ and 
	${\succ'_i}={\succ_i}\setminus\{(b,a)\}\cup\{(a,b)\}$,
	but $a\not\in f(R')$ and $b\not\in f(R')\setminus f(R)$. Next, we let $R^1$ denote the profile derived from $R$ by adding the voters $i^*$ and $j^*$. The preference relations of these voters are shown below, where $\bar A=A\setminus \{a,b\}$ and $\bar f(R)=f(R)\setminus \{a,b\}$. Moreover, the profile $R^2$ evolves out of $R^1$ by letting voter $i^*$ swap $a$ and $b$. 
			\begin{align*}
				\succ_{i^*}^1&=\lex\left(\bar A\setminus f(R)\right), \lex\left(\bar f(R) \cap f(R')\right), b,a, \lex\left(\bar f(R) \setminus f(R')\right)\\
				\succ_{j^*}^1&=\lex\left(\bar f(R) \setminus f(R')\right)^{-1},  a,b, \lex\left(\bar f(R) \cap f(R')\right)^{-1}, \lex\left(\bar A\setminus f(R)\right)^{-1}
			\end{align*} 
		
			Since the preference relations of voter $i^*$ and $j^*$ are inverse in $R^1$, pairwiseness implies that $f(R^1)=f(R)$. Moreover, this axiom also requires that $f(R^2)=f(R')$. It then follows that voter $i^*$ can manipulate by deviating from $R^1$ to $R^2$ as he prefers all alternatives in $f(R')\setminus f(R)$ to those in $f(R)$ and all alternatives in $f(R')$ to those in $f(R)\setminus f(R')$. This can be seen by making a case distinction on whether $b\in f(R)$: if $b\not\in f(R)$, then our contradiction assumption implies that $b\not\in f(R')$, too. Hence, no alternative in $(f(R)\setminus f(R'))\cup\{b\}$ is chosen, which ensures that this is a manipulation for voter $i^*$ since $a\in f(R)\setminus f(R')$. On the other hand, if $b\in f(R)$, then $b$ is either in $f(R)\cap f(R')$ or in $f(R)\setminus f(R')$. Both cases constitute again a manipulation as $b\succ_{i^*} f(R)\setminus (f(R')\cup\{b\})$ and $(f(R)\cap f(R'))\setminus \{b\}\succ_{i^*} b$. Hence, switching from $R^1$ to $R^2$ is in all cases a manipulation for voter $i^*$, which contradicts the strategyproofness of $f$. Consequently, the initial assumption that $f$ violates \WMON was incorrect.\medskip 				
				
			\textbf{\WSMON}: As second case, assume that $f$ fails \WSMON. Thus, there are preference profiles $R$, $R'$, a voter $i\in N$, and an alternative $a\not\in f(R)$ such that $R$ and $R'$ only differ in the fact that $a\succ_i A\setminus \{a\}$ and $A\setminus \{a\}\succ_i' a$, but $f(R)\neq f(R')$. Consider the profile $R^1$ which is derived from $R$ by adding the voters $i^*$ and $j^*$ with the preferences shown below. Moreover, $R^2$ evolves out of $R^1$ by letting voter $i^*$ make $a$ into his least preferred alternative.  
			
			\begin{align*}
			\succ_{i^*}^{1}&=a, \lex\left(A\setminus (\{a\}\cup f(R))\right), \lex\left(f(R)\cap f(R')\right), \lex\left(f(R)\setminus f(R')\right)\\
			\succ_{j^*}^{1}&=\lex\left(f(R)\setminus f(R')\right)^{-1}, \lex\left(f(R)\cap f(R')\right)^{-1}, \lex\left(A\setminus (\{a\}\cup f(R))\right)^{-1}, a
			\end{align*}
		
			It is again easy to verify that $f(R^1)=f(R)$ and $f(R^2)=f(R')$ because of pairwiseness. Thus, voter $i^*$ can manipulate $f$ by switching from $R^1$ to $R^2$ because he prefers all alternatives in $A\setminus f(R)$ to all alternatives in $f(R)$ and all alternatives in $f(R)\cap f(R')$ to all alternatives in $f(R)\setminus f(R')$. This contradicts the strategyproofness of $f$ and therefore the initial assumption that $f$ violates \WSMON was incorrect.\medskip
			
			\textbf{\IUA}: Thirdly, assume that $f$ violates \IUA, which means that there are preference profiles $R$, $R'$, a voter $i\in N$, and a set of alternatives $B\subseteq A\setminus f(R)$ such that ${\succ_j}={\succ_j'}$ for all voters $j\in N\setminus \{i\}$, ${\succ_i\setminus \succ_i\!|_B}={\succ_i'\setminus \succ_i'\!|_B}$, and $f(R)\neq f(R')$. Now, consider the profile $R^1$ derived from $R$ by adding two voters $i^*$ and $j^*$. The preference relations of these two voters are shown below, where $\bar A=A\setminus B$, $\succ_i\!|_B$ indicates that the alternatives in $B$ are ordered as in $\succ_i$, and $\succ_i^{-1}\!|_B$ that the alternatives in $B$ are ordered exactly inverse to $\succ_i$. Moreover, let $R^2$ denote the profile derived from $R^1$ by letting voter $i^*$ order the alternatives in $B$ as voter $i$ does in $R'$. 
				\begin{align*}
				\succ_{i^*}^{1}&=\succ_i\!|_B, \lex\left(\bar A \setminus f(R)\right), \lex\left(f(R)\cap f(R')\right), \lex\left(f(R)\setminus f(R')\right)\\
				\succ_{j^*}^{1}&=\lex\left(f(R)\setminus f(R')\right)^{-1}, \lex\left(f(R)\cap f(R')\right)^{-1}, \lex\left(\bar A\setminus f(R)\right)^{-1}, \succ_i\!|_B^{-1}
				\end{align*}
				
				Just as before, we infer from pairwiseness that $f(R^1)=f(R)$ and $f(R^2)=f(R')$. However, this means that voter $i^*$ can manipulate by deviating from $R^1$ to $R^2$: by construction, he prefers all alternatives in $A\setminus f(R)$ to all alternatives in $f(R)$, and all alternatives in $f(R)\cap f(R')$ to all alternatives in $f(R)\setminus f(R')$. Since $f(R)\neq f(R')$, this is in conflict with strategyproofness.\medskip
				
			\textbf{\IR}: Finally, suppose for contradiction that $f$ violates \IR. Hence, there are two preference profiles $R$, $R'$, a non-empty set of alternatives $B\subseteq A$, and a voter $i\in N$ such that ${\succ_j}={\succ_j'}$ for all $j\in N\setminus \{i\}$, ${\succ_i\setminus \succ_i\!|_B}={\succ_i'\setminus \succ_i'\!|_B}$, $f(R)\cap B =f(R')\cap B$, but $f(R)\neq f(R')$. Once again, we derive a new profile $R^1$ from $R$ by adding two voters $i^*$ and $j^*$. The preferences of these voters are shown below, where $\bar A=A\setminus B$ and $\bar f(R)=f(R)\setminus B$. Moreover, let $R^2$ denote the profile derived from $R^1$ by letting voter $i^*$ arrange the alternatives in $B$ according to $\succ_i'$. 
			
			\begin{align*}
			\succ_{i^*}^{1}&=\lex\left(\bar A\setminus f(R)\right), \succ_i\!|_B, \lex\left(\bar f(R)\cap f(R')\right), \lex\left(\bar f(R)\setminus f(R')\right)\\
			\succ_{j^*}^{1}&=\lex\left(\bar f(R)\setminus f(R')\right)^{-1}, \lex\left(\bar f(R)\cap f(R')\right)^{-1},  \succ_i^{-1}\!|_B, \lex\left(\bar A\setminus f(R)\right)^{-1}
			\end{align*}
		
			Also in this case, pairwiseness shows that $f(R^1)=f(R)$ and $f(R^2)=f(R')$. Thus, voter $i^*$ can manipulate by switching from $R^1$ to $R^2$ because $f(R')\succ_{i^*}^1 f(R)\setminus f(R')$ and $f(R')\setminus f(R)\succ_{i^*}^1 f(R)$. For both claims, it is important that $B$ is disjoint to both $f(R)\setminus f(R')$ and $f(R')\setminus f(R)$ since an alternative $x\in B$ is in $f(R)$ iff it is in $f(R')$. Hence, the first claim follows directly as the alternatives in $f(R)\setminus f(R')$ are the least preferred ones of voter $i^*$, and the second claim follows since $f(R')\setminus f(R)\subseteq A\setminus (B\cup f(R))$. Thus, deviating from $R^1$ to $R^2$ is a manipulation for voter $i^*$, which contradicts the strategyproofness of $f$.
		\end{proof}

\subsection{Consequences of Strong Condorcet-Consistency}
\label{app:condorcet}

In this section, we prove a variant of \Cref{thm:F-SP} which relies on \emph{strong Condorcet-consistency} instead of neutrality and non-imposition. This axiom requires of an SCC $f$ that $f(R)=\{x\}$ iff $x$ is the Condorcet winner in $R$. Less formally, strongly Condorcet-consistent SCCs have to elect the Condorcet winner whenever there is one, and cannot elect a single alternative in the absence of a Condorcet winner. The main result of this section states that robust dominant set rules are the only SCCs that satisfy pairwiseness, homogeneity, strong Condorcet-consistency, and strategyproofness. As we will see in \Cref{app:mainresults}, the combination of pairwiseness, homogeneity, strategyproofness, neutrality, and non-imposition implies strong Condorcet-consistency, which means that this auxiliary claim is actually more general than \Cref{thm:F-SP}. Moreover, the axioms of \Cref{thm:TC} imply strong Condorcet-consistency and we can therefore use the results of this section also to characterize the top cycle.

For proving the results of this section, we rely on the lemmas of the previous subsections. In particular, we often say that we reinforce an alternative $x$ against another alternative $y$ without specifying which voter reinforces $x$ against $y$. This is possible because we can always add two voters with inverse preferences such that one of them can perform the required manipulation. Adding these two voters does not affect the choice set because of pairwiseness and the consequences of the deviation will be specified by the axioms of \Cref{app:axioms}. Hence, we can abstract away from the exact preference profiles and focus on the majority margins. For the readers' convenience, we repeat the four axioms of the last section because they form the basis of the following proofs. Let $f$ be a pairwise SCC, $a,b\in A$, and $R,R'\in \mathcal{R}^*(A)$.
\begin{itemize}
	\item \WMON: If $R'$ is derived from $R$ by reinforcing $a$ against $b$ and $a\in f(R)$, then $a\in f(R)$ or $b\in f(R')\setminus f(R)$.
	\item \WSMON: If $R'$ is derived from $R$ by weakening $a$ against all other alternatives $x\in A\setminus \{a\}$ and $a\not\in f(R)$, then $f(R)=f(R')$.
	\item \IUA: If $R'$ is derived from $R$ by reordering some alternatives in $A\setminus f(R)$, then $f(R)=f(R')$.
	\item \IR: If $R'$ is derived from $R$ by reordering the alternatives in $B\subseteq A$ such that $f(R)\cap B=f(R')\cap B$, then $f(R)=f(R')$.
\end{itemize}

As shown in \Cref{app:axioms}, every strategyproof and pairwise SCC satisfies these axioms. We now use these properties to show our first key insight, namely that all such SCCs that satisfy strong Condorcet-consistency also satisfy a new property called Condorcet-stability (\CWCS). This axiom requires that there should be no alternative---within or outside of the choice set---that strictly dominates all other alternatives in the choice set. 
Note that \CWCS implies that an alternative can only be a single winner if it weakly dominates every other alternative.\footnote{This axiom is quite useful for characterizing SCCs that are strategyproof in profiles that admit a Condorcet winner: a majoritarian and non-imposing SCC is strategyproof in such profiles iff it satisfies \CWCS.}

\begin{definition}[Condorcet-stability (\CWCS)] 
	An SCC $f$ satisfies Condorcet-stability (\CWCS) if for every preference profile $R$, there is no alternative
	$x\in A$ such that $x\succ_R f(R)\setminus \{x\}$ whenever $f(R)\setminus \{x\}$ is non-empty.
\end{definition}
This condition is equivalent to requiring that every alternative is weakly dominated by another chosen alternative unless it is the unique winner, i.e., for every alternative $x\in A$ with $f(R)\ne\{x\}$ there is an alternative $y\in f(R)\setminus \{x\}$ such that $y \succsim_R x$. It is thus closely connected to the notion of external stability, which requires that for every alternative $x\in A\setminus f(R)$, there is an alternative $y\in f(R)$ such that $y \succsim_R x$ \citep[see, e.g.,][]{MGF90a, Dugg11a}. Indeed, Condorcet-stability is a stronger requirement than external stability as it also includes a notion of internal stability. 

As we show next, the conjunction of our axioms implies \CWCS. 

\begin{lemma}\label{lem:CWCS}
	Every pairwise SCC that is strategyproof and strongly Condorcet-consistent satisfies \CWCS. 
\end{lemma}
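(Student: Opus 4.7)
The plan is to argue by contradiction. Suppose some profile $R$ and alternative $x \in A$ witness a violation of \CWCS, i.e., $f(R) \setminus \{x\} \neq \emptyset$ and $x \succ_R f(R) \setminus \{x\}$. The goal is to transform $R$ into a profile $R^*$ for which $f(R^*) = f(R)$ while $x$ is simultaneously the Condorcet winner of $R^*$. Strong Condorcet-consistency then forces $f(R^*) = \{x\}$, whence $f(R) = \{x\}$, directly contradicting $f(R) \setminus \{x\} \neq \emptyset$ (and also handling the sub-case $x \notin f(R)$ as a by-product).

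To build $R^*$, I will use two tools. First, \Cref{lem:axioms} supplies \WSMON for $f$. Second, pairwiseness lets me append any pair of voters with mutually inverse preferences without altering either the majority margins or the choice set. For each alternative $z \in A \setminus (f(R) \cup \{x\})$ in turn, I would iterate the following two-step move: (i) append a pair of inverse voters $i^*, j^*$ arranged so that $i^*$ places $z$ at the top of his ranking (and hence $j^*$ has $z$ at the bottom); (ii) invoke \WSMON on voter $i^*$ and the unchosen alternative $z$ to slide $z$ from the top to the bottom of $i^*$'s preference. Step (i) preserves both the choice set and all majority margins; step (ii) preserves the choice set and, by direct inspection of $i^*$'s contribution before and after the move, reduces $g(z, y)$ by exactly $2$ for every $y \neq z$. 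After sufficiently many iterations for each $z$, the margin $g(x, z)$ becomes strictly positive, so $x \succ_{R^*} z$. Because these operations only affect margins incident to the currently processed $z$, the inequalities $x \succ_R y$ for $y \in f(R) \setminus \{x\}$ are never disturbed, so $x \succ_{R^*} y$ holds for every $y \in A \setminus \{x\}$ in the final profile, i.e., $x$ is the Condorcet winner of $R^*$.

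The main subtlety I expect to watch is that the \WSMON hypothesis stays valid across the entire construction: at every intermediate step, the alternative being pushed down must still lie outside the current choice set. This is ensured because each individual step preserves the choice set, so the chosen $z \notin f(R)$ remains unchosen in every intermediate profile, and \WSMON continues to apply at every iteration. Beyond this, the argument is essentially mechanical: appending inverse voter pairs preserves all pairwise quantities by construction, and only the final invocation of strong Condorcet-consistency is needed to close the contradiction.
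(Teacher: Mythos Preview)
Your proposal is correct and follows essentially the same approach as the paper's proof: use \WSMON (via the add-an-inverse-pair trick, which the paper also uses implicitly) to weaken unchosen alternatives until the violating alternative $x$ becomes the Condorcet winner, then invoke strong Condorcet-consistency to force $f(R^*)=\{x\}$, contradicting $f(R)\setminus\{x\}\neq\emptyset$. The only cosmetic difference is that the paper first disposes of profiles with a Condorcet winner separately and then weakens only those unchosen alternatives $z$ with $z\succsim_R x$, whereas you weaken every $z\in A\setminus(f(R)\cup\{x\})$; this is slightly more work but equally valid, and your careful tracking of which margins are affected and why \WSMON stays applicable throughout is exactly what is needed.
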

\begin{proof}
	Let $f$ denote a pairwise SCC that satisfies strategyproofness and strong Condorcet-consistency. First, recall that strong Condorcet-consistency requires that $f(R)=\{x\}$ iff $x$ is the Condorcet winner in $R$. Hence, strong Condorcet-consistency implies \CWCS for all profiles $R$ with a Condorcet winner $x$ because $f(R)=\{x\}$ entails $x\succ_R A\setminus \{x\}$. Next, we focus on profiles without a Condorcet winner and assume for contradiction that $f$ fails \CWCS for such a profile. More formally, this assumption means that there is a profile $R$ without a Condorcet winner and an alternative $a\in A$ such that $a\succ_R f(R)\setminus \{a\}$. It follows from the absence of a Condorcet winner that there is at least one alternative $x$ with $x \succsim_R a$, but no such alternative is chosen. Hence, we can repeatedly use \WSMON to weaken the alternatives $x$ with $x \succsim_R a$ against all other alternatives until we arrive at a profile $R'$ such that $a \succ_{R'} A\setminus\{a\}$. Now, \WSMON implies that the choice set does not change during these steps and thus it holds that $f(R')=f(R)$. This means in particular that $|f(R')|\geq 2$ because strong Condorcet-consistency requires that $|f(R)|\geq 2$. However, $a$ is the Condorcet winner in $R'$ and consequently, strong Condorcet-consistency also shows that $f(R')=\{a\}$. These two observations contradict each other and consequently the assumption that $f$ violates \CWCS was incorrect.
\end{proof}

\CWCS plays an important role in our proofs because we can use it to force an alternative into the choice set. In particular, the combination of strong Condorcet-consistency and \CWCS have rather strong consequences: the first axiom states that we choose a single winner iff it is the Condorcet winner and the second one requires therefore that every alternative is weakly dominated by a chosen alternative if there is no Condorcet winner. We now use this interaction to prove our first lemma for pairwise SCCs that satisfy homogeneity, strategyproofness, and strong Condorcet-consistency. 

\begin{lemma}\label{lem:TCnosubset}
	Let $f$ denote a pairwise SCC that satisfies strong Condorcet-consistency, homogeneity, and strategyproofness. If $\tc(R)\subseteq f(R)$ for all profiles $R$, then $f$ is a robust dominant set rule.
\end{lemma}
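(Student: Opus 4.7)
The lemma breaks into two parts: (A) show that $f(R)$ is dominant in $R$ for every $R$, and (B) show the robustness condition that $f(R)$ dominant in $R'$ forces $f(R')\subseteq f(R)$. The main leverage in both parts is that pairwiseness together with homogeneity allows any prescribed change in the majority margins to be realised by adding a pair of voters with inverse preferences (which leaves $f$ invariant) and then letting one of them perform a single adjacent swap; the resulting movement of the choice set is then tightly controlled by the four axioms \WMON, \WSMON, \IUA, \IR proved for $f$ in \Cref{lem:axioms}, together with \CWCS from \Cref{lem:CWCS}.

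For Part (A), I would argue by contradiction. Let $X:=f(R)$ and assume $X$ is not dominant. Because $\tc(R)\subseteq X$ is already a dominant set of $R$, the only way $X$ can fail to be dominant is via a pair $x\in X\setminus\tc(R)$ and $z\in A\setminus X$ with $z\succsim_R x$; this forces $z\notin\tc(R)$ (otherwise $\tc(R)\succ_R X$ via $x$ would be violated) and $x\notin\tc(R)$ (otherwise dominance of $\tc(R)$ would give $x\succ_R z$). I would then iteratively reinforce alternatives $w\in\tc(R)\subseteq X$ against $x$, one single-voter adjacent swap at a time. By \WMON each such $w$ remains in the choice set, and by \IR applied to $B=\{w,x\}$, the full choice set either stays $X$ or loses $x$. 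Keeping track of how $\tc$ of the perturbed profile evolves via \Cref{lem:HC,lem:connectors}, one reaches a profile $\tilde R$ in which $z$ has entered the smallest dominant set and hence, by the hypothesis $\tc\subseteq f$, is forced into the choice set; chasing the \IR and \WMON applications along the chain of modifications shows that $z$ could not have entered along the way, giving the desired contradiction.

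For Part (B), suppose $X:=f(R)$ is dominant in $R'$. By Part (A) applied to $R'$, $f(R')$ is itself a dominant set of $R'$, and since dominant sets of $R'$ are linearly ordered by inclusion, either $f(R')\subseteq X$ (as wanted) or $X\subsetneq f(R')$. Assuming the latter for contradiction, I would use pairwiseness and homogeneity together with McGarvey's construction to build an auxiliary profile $R^{*}$ whose inside-$X$ and inside-$(A\setminus X)$ majority margins agree with $R'$ while its cross margins between $X$ and $A\setminus X$ agree with $R$. The profile $R^{*}$ is reachable from $R$ by a sequence of single-voter adjacent swaps confined entirely to $X$ or entirely to $A\setminus X$; \IUA controls the swaps within the unchosen set $A\setminus X$, while \WMON together with \IR applied to two-element subsets of $X$ controls the swaps within $X$, so that $f(R^{*})\subseteq X$. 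Conversely, $R^{*}$ is reachable from $R'$ by single-voter swaps that only modify the cross comparisons; because $X$ is dominant in both $R$ and $R'$, the cross margins never change sign along the path, so \WSMON-style arguments (together with the fact that alternatives in $A\setminus X$ are unchosen in $R'$) yield $f(R^{*})\supseteq f(R')\supsetneq X$, contradicting $f(R^{*})\subseteq X$.

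\textbf{Main obstacle.} The decisive difficulty is Part (A): the axioms of \Cref{lem:axioms} each speak about a single-voter adjacent swap, while the claim is a global structural statement about $f(R)$ relative to the whole weighted majority graph. The subtle part is choosing the sequence of single-swap modifications so that the top cycle of every intermediate profile still sits inside the intermediate choice set, so that the hypothesis $\tc\subseteq f$ can be reinvoked at every step; here the structural results \Cref{lem:HC,lem:connectors} on how the top cycle reacts to local perturbations of the majority relation are indispensable.
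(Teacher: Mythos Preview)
Your Part~(A) reverses the crucial swap direction. You propose to reinforce $w\in\tc(R)$ against the alternative $x\in f(R)\setminus\tc(R)$, but since $\tc(R)$ is dominant we already have $w\succ_R x$; these swaps therefore increase an already-positive margin and leave the majority relation, and hence the top cycle, unchanged at every intermediate profile. The alternative $z$ never enters the top cycle along your sequence, so the hypothesis $\tc\subseteq f$ cannot be invoked to force $z$ into the choice set. The paper does the opposite: it repeatedly reinforces $a$ (your $x$) against a fixed $w\in\tc(R)$, using \WMON and \IR to freeze the choice set as long as the majority relation is unchanged, until one further swap yields $a\succsim w$. At that moment $a$, and therefore $b$ (your $z$, since $b\succsim a$), lie in the new top cycle, while a \WMON/\IR argument shows $\{b,w\}\not\subseteq f$ after that last swap---contradicting $\tc\subseteq f$.

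Your Part~(B) also has a gap and omits a step the paper treats as essential. Under the contradiction hypothesis $X\subsetneq f(R')$, at least one alternative of $A\setminus X$ is chosen in $R'$, so your parenthetical ``alternatives in $A\setminus X$ are unchosen in $R'$'' is false and the \WSMON-type argument for $f(R^{*})\supseteq f(R')$ does not go through: the cross swaps involve chosen alternatives on both sides of the cut. The paper proceeds in three steps rather than two: after showing $f$ is a dominant set rule, it uses homogeneity in a substantial way (normalise all strict majority margins to a common constant, rescale down to margin~$2$ via homogeneity, then build back up to match $g_{R'}$) to prove that $f$ is in fact majoritarian; only with majoritarianess in hand does it derive robustness, by constructing from $R$ and $R'$ two profiles with identical majority relations but different choice sets. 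Homogeneity enters precisely in this middle step, which your two-part decomposition does not accommodate.
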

\begin{proof}
	Let $f$ denote a pairwise, homogeneous, strategyproof, and strongly Condorcet-consistent SCC. We prove this lemma in three steps: first, we show that if $\tc(R)\subseteq f(R)$ for all profiles $R$, then $f$ is a dominant set rule. Next, we prove that, if $f$ is a dominant set rule, our assumptions require it to be majoritarian. As last point, we show that $f$ is even robust if it is a majoritarian dominant set rule. Combining all three steps thus shows the lemma.\bigskip
	
	\textbf{Step 1: If $\tc(R)\subseteq f(R)$ for all profiles $R$, $f$ is a dominant set rule.}

		We prove this claim by contradiction and thus assume that $f$ always chooses a superset of $\tc$ but is no dominant set rule. This means that there is a profile $R$ such that $f(R)$ is no dominant set in $\succsim_R$, which implies that $f(R)\neq \tc(R)$. Moreover, there is no Condorcet winner in $R$ because, otherwise, strong Condorcet-consistency would require that $f$ chooses this alternative as unique winner, which would contradict that $f(R)$ is no dominant set. We infer from this observation that $|\tc(R)|>1$ because $\tc$ is strongly Condorcet-consistent. Next, note that there are alternatives $a\in f(R)$ and $b\in A\setminus f(R)$ such that $b \succsim_R a$ because $f(R)$ is no dominant set. Even more, $a\not\in \tc(R)$; otherwise, $b$ would also be in $\tc(R)$ because $b\succsim_R a$, which would imply that $\tc(R)\not\subseteq f(R)$ since $b\not\in f(R)$. However, this contradicts our assumptions.
	
		Next, let $x$ denote an alternative in $\tc(R)\subseteq f(R)$, which implies that $x\succ_R a$. We repeatedly reinforce $a$ against $x$ until we arrive at a profile $R'$ such that $a\succsim_{R'} x$. Moreover, let $\bar R'$ denote the last profile constructed before $R'$, i.e., $x\succ_{\bar R'} a$ and a single voter needs to reinforce $a$ against $x$ to derive $R'$. First, we show that $f(\bar R')=f(R)$ and for this consider two consecutive profiles $\hat R$ and $\hat R'$ in the sequence that leads from $R$ to $\bar R'$. This means that $\hat R'$ is derived from $\hat R$ by reinforcing $a$ against $x$. Moreover, it holds that ${\succsim_R}={\succsim_{\hat R}}={\succsim_{\hat R'}}$ because the majority relation between $a$ and $x$ has not changed yet. This implies that $\tc(R)=\tc(\hat R)=\tc(\hat R')$ and therefore $x\in \tc(\hat R)\subseteq f(\hat R)$ and $x\in \tc(\hat R')\subseteq f(\hat R')$. Now, if $a\in f(\hat R)$, then \WMON implies that $a\in f(\hat R')$ since $a$ is reinforced against $x$ to derive $\hat R'$. Finally, \IR implies then that $f(\hat R)=f(\hat R')$ because $a$ and $x$ are both chosen before and after the manipulation. Since we start this process at the profile $R$ with $\{a,x\}\subseteq f(R)$, it follows from a repeated application of this argument that $f(\bar R')=f(R)$. 
		
		
	Finally, we show that $\{b,x\}\not\subseteq f(R')$ but $\{b,x\}\subseteq \tc(R')$. This contradicts the assumption that $\tc$ is always contained in $f$ and thus proves that our initial assumption that $f$ is no dominant set rule was incorrect. First, we show that $\{b,x\}\not\subseteq f(R')$. Observe for this that $b\not\in f(\bar R')=f(R)$ and that $\{a,x\}\subseteq f(\bar R')=f(R)$. Since $R'$ is derived from $\bar R'$ by reinforcing $a$ against $x$, \WMON{} implies that $a\in f(R')$. Now, if $x\in f(R')$, it follows from \IR that the choice set is not allowed to change, which implies that $b\not\in f(R')$. This means that $b\in f(R')$ is only possible if $x\not\in f(R')$, which shows that $\{b,x\}\not\subseteq f(R')$. Next, we prove the second claim that $\{b,x\}\subseteq \tc(R')$. For proving this, it is important that $|\tc(R)|>1$ and that $\succsim_{R'}$ differs from $\succsim_{R}$ only in the fact that $a\succsim_{R'} x$ and $x\succ_R a$. The first point means that there is a cycle $C$ in $\succsim_R$ connecting all alternatives in $\tc(R)$ because of \Cref{lem:HC} and the second one that this cycle also exists in $\succsim_{R'}$. Hence, there is a path from $x$ to every alternative $y\in \tc(R)\setminus\{x\}$. Moreover, there is a path from $x$ to every alternative $z\in A\setminus \tc(R)$ because we can go from $x$ to another alternative $y\in \tc(R)\setminus \{x\}$ using the cycle $C$ and from $y$ to $z$ because $y\succ_{R'} z$. This means that $x\in \tc(R')$ as it reaches every other alternative on some path. Furthermore, $a\in \tc(R')$ because $a \succsim_{R'} x$, and $b\in \tc(R')$ because $b \succsim_{R'} a$. This shows that $\{b,x\}\subseteq \tc(R')$, even though $\{b,x\}\not\subseteq f(R')$. Hence, $\tc(R')\not\subseteq f(R')$, which contradicts the assumption that $\tc(R)\subseteq f(R)$ for all profiles $R$. This proves that the assumption that $f$ is no dominant set rule was incorrect.\bigskip
		
	\textbf{Step 2: If $f$ is a dominant set rule, it is majoritarian.}
	
	Our goal in this step is to show that if $f$ is a dominant set rule, it is majoritarian. Thus, assume for contradiction that $f$ is a dominant set rule but violates majoritarianess. The latter point means that there are two preference profiles $R$ and $R'$ such that ${\succsim_R}={\succsim_{R'}}$ but $f(R)\neq f(R')$. We assume that both $R$ and $R'$ are defined by an even number of voters. This is without loss of generality as we can just duplicate the profiles if required. The majority relations do not change by this step since the majority margins are only doubled, and the choice sets do not change because of homogeneity. Thus, we can also work with these larger profiles instead. Next, observe that $f(R)\neq f(R')$ implies that $g_{R}\neq g_{R'}$ because $f$ is pairwise. Moreover, $f(R)$ and $f(R')$ are both dominant sets in $\succsim_R$ because $f$ is a dominant set rule. Since dominant sets are ordered by set inclusion, it follows that $f(R)\subsetneq f(R')$ or $f(R')\subsetneq f(R)$. We assume without loss of generality that $f(R)$ is a subset of $f(R')$; otherwise, we can just exchange the role of $R$ and $R'$ in the subsequent arguments. Our goal is to transform $R$ into a profile $R^*$ such that $g_{R^*}=g_{R'}$ and $f(R^*)\subseteq f(R)\subsetneq f(R')$. This is in conflict with the pairwiseness of $f$ and shows therefore that the assumption $f(R)\neq f(R')$ was incorrect. 
	
	We use the largest majority margin $c=\max_{x,y\in A} g_{R}(x,y)$ in $R$ for the derivation of $R^*$. In more detail, we first construct a profile $R^1$ such that $g_{R^1}(x,y)=c$ for all alternatives $x,y\in A$ with $x\succ_R y$. For this, we repeatedly use the following steps: first, identify a pair of alternatives $x,y\in A$ such that $x\succ_R y$ but the majority margin between $x$ and $y$ is not $c$ yet. Then, reinforce $x$ against $y$. By repeating these steps, we eventually arrive at a profile $R^1$ which satisfies $g_{R^1}(x,y)=c$ for all $x,y\in A$ with $x\succ_R y$. We show next that $f(R^1)\subseteq f(R)$ by a case distinction with respect to $x$ and $y$. For this, consider a single step of our process, and let $\bar R$ denote the profile before reinforcing $x$ against $y$ and $\bar R'$ denote the profile after reinforcing $x$ against $y$. If $x\not\in f(\bar R)$ and $y\not\in f(\bar R)$, it follows from \IUA that $f(\bar R)=f(\bar R')$. If $x\in f(\bar R)$ and $y\not\in f(\bar R)$, it follows from \WMON that $x\in f(\bar R)$ and $y\not\in f(\bar R')$ because we have $x\succ_R y$ and therefore also $x\succ_{\bar R'} y$. Hence, if $y\in f(\bar R')$, then $x\in f(\bar R')$ as $f$ chooses a dominant set. However, this is conflict with \WMON: if we revert the swap, this axiom implies that $y\in f(\bar R)$, which contradicts our assumptions. Thus, $x\in f(\bar  R')$, $y\not\in f(\bar R')$ and \IR implies that $f(\bar R)=f(\bar R')$. As third point, note that $x\not\in f(\bar R)$ and $y\in f(\bar R)$ is impossible because we assume that $x\succ_R y$. Hence, this case contradicts that $f(\bar R)$ is a dominant set. The last case is that $x\in f(\bar R)$ and $y\in f(\bar R)$. In this case, it follows from \WMON that $x\in f(\bar R')$. If now also $y\in f(\bar R')$, \IR implies that $f(\bar R)=f(\bar R')$. On the other hand, if $y\not\in f(\bar R')$, then $f(\bar R')\subsetneq f(\bar R)$. Otherwise, an alternative $z\in A\setminus f(\bar R)$ is in $f(\bar R')$, which is in conflict with the fact that $f(\bar R')$ is a dominant set since $y\succ_{\bar R'} z$ because $y\in f(\bar R)$ and $z\not\in f(\bar R)$. Hence, we derive in all possible cases that $f(\bar R')\subseteq f(\bar R)$. By repeatedly applying this argument, it follows that $f(R^1)\subseteq f(R)$. 
	
	Next, note that ${\succsim_{R^1}}={\succsim_{R}}$ because we only increase the majority margins between alternatives $x,y\in A$ with $x\succ_{R} y$. Furthermore, there are only two possible majority margins in $R^1$: if $x\sim_{R^1} y$, then $g_{R^1}(x,y)=0$ and if $x\succ_{R^1} y$, then $g_{R^1}(x,y)=c$. This means that we can use homogeneity to derive a profile $R^2$ with smaller majority margins: we set $g_{R^2}(x,y)=2$ for all alternatives $x,y\in A$ with $x\succ_{R^1} y$ and $g_{R^2}(x,y)=0$ for all alternatives $x,y\in A$ with $x\sim_{R^1} y$. Such a preference profile $R^2$ exists because we can use McGarvey's construction to build a preference profile for all majority margins that have the same parity \citep{McGa53a}. It follows from homogeneity and pairwiseness that $f(R^2)=f(R^1)$ because we can just multiply $R^2$ such that all majority margins are equal to those in $R^1$. Note here that the assumption that $R$ is defined by an even number of voters is important because it ensures that $c$ is a multiple of $2$. As last point, observe that ${\succsim_{R^2}}={\succsim_{R}}={\succsim_{R'}}$ because we did not change the sign of a majority margin. Moreover, $g_{R^2}(x,y)\leq g_{R'}(x,y)$ for all $x,y\in A$ because $R'$ is defined by an even number of voters. Hence, if $x\succ_{R'} y$, then $g_{R'}(x,y)\geq 2=g_{R^2}(x,y)$, and if $x\sim_{R'} y$, then $g_{R^2}(x,y)=g_{R'}(x,y)=0$. 
	
	As last step, we derive a preference profile $R^3$ with $g_{R^3}=g_{R'}$ from $R^2$ by applying the same process as in the construction of $R^1$: we repeatedly identify a pair of alternatives $x,y\in A$ such that $x\succ_{R} y$ and the current majority margin between $x$ and $y$ is less than the one in $R'$, and reinforce $x$ against $y$. Clearly, this process results in a profile $R^3$ with $g_{R^3}=g_{R'}$ and the same arguments as for $R^1$ show that $f(R^3)\subseteq f(R^2)$. We derive therefore from pairwiseness that $f(R')=f(R^3)\subseteq f(R^2)\subseteq f(R^1)\subseteq f(R)\subsetneq f(R')$, which is a contradiction because the last subset relation is by assumption strict. Hence, our initial assumption was incorrect and $f$ is indeed majoritarian.\bigskip
	
	\textbf{Step 3: If $f$ is majoritarian dominant set rule, it is robust.}
	
	As the last step, we show that $f$ is robust if it is a majoritarian dominant set rule. Thus, assume for contradiction that $f$ is a majoritarian dominant set rule that fails robustness. This means that there are two preference profiles $R$ and $R'$ such that $f(R)$ is dominant in $\succsim_{R'}$, but $f(R')\not\subseteq f(R)$. As a consequence, there is an alternative $y\in f(R')\setminus f(R)$. Moreover, since $f(R)$ is dominant in $\succsim_{R'}$, it follows that $f(R) \succ_{R'} y$ and hence, $f(R)\subsetneq f(R')$ as $f$ is a dominant set rule. We derive a contradiction to this assumption by constructing two preference profiles $R^2$ and $R^3$ such that $f(R^2)=f(R)$, $f(R^3)=f(R')$, and ${\succsim_{R^2}}= {\succsim_{R^3}}$. These observations are conflicting since ${\succsim_{R^2}}= {\succsim_{R^3}}$ requires that $f(R^2)=f(R^3)$ because of majoritarianess, but $f(R)\neq f(R')$. Note that we assume in the sequel that both $R$ and $R'$ are defined by an even number of voters as we want to introduce majority ties. This is without loss of generality as $f$ is homogeneous. 
	
	First, we explain how to derive $R^2$ from $R$. As a first step, we reorder the alternatives in $A\setminus f(R)$ to derive a profile $R^1$ with ${\succsim_{R^1}\!|_{A\setminus f(R)}}={\succsim_{R'}\!|_{A\setminus f(R)}}$. As a consequence of \IUA, it follows that $f(R^1)=f(R)$ since this step does not affect chosen alternatives. Next, let $D_{i^*}$ denote the dominant set in $R^1$ that is currently chosen, i.e., $f(R^1)=D_{i^*}$. 
	We derive the profile $R^2$ by repeating the following procedure with $R^1$ as starting profile: in the current preference profile $\bar R$, we choose a pair of alternatives $x,y\in D_{i^*}$ such that $y\succ_{\bar R} x$ and reinforce $x$ against $y$ until we arrive at a profile $\bar R'$ with $x\sim_{\bar R'} y$. It follows from \WMON and majoritarianess that $x\in f(\bar R')$ if $x,y\in f(\bar R)$. Moreover, as $f$ is a dominant set rule, $x\in f(\bar R')$ implies $y\in f(\bar R')$ because $y\succsim_{\bar R'} x$. Hence, we infer from \IR that $f(\bar R)=f(\bar R')$ if $x,y\in f(\bar R)$. Since $f(R^1)=D_i^*$, we can thus repeat this process until we arrive at a profile profile $R^2$ with $x\sim_{R^2} y$ for all $x,y\in D_{i^*}$, and it follows from the previous argument that $f(R^1)=f(R^2)$. Moreover, the majority relation of $R^2$ is completely specified: we have $f(R) \succ_{R^2} A\setminus f(R)$, $x\sim_{R^2} y$ for all $x,y\in f(R)$, and ${\succsim_{R^2}}|_{A\setminus f(R)}={\succsim_{R'}}|_{A\setminus f(R)}$.
	
	Finally, we apply the same construction as for $R^2$ to derive the profile $R^3$ from $R'$. In more detail, observe that, by assumption, $D_i^*=f(R)$ is dominant in $\succsim_{R'}$ and $D_{i^*}\subsetneq f(R')$. Hence, we can use the same construction as for $R^2$ to introduce majority ties between all alternatives in $D_{i^*}$ in $\succsim_{R'}$. The same reasoning as in the previous paragraph shows that this step does not change the choice set, and it hence holds for the resulting profile $R^3$ that $f(R')=f(R^3)$. In particular, $R^3$ has now the same majority relation as $R^2$, which is in conflict with majoritarianess since ${\succsim_{R^2}}={\succsim_{R^3}}$ but $f(R^2)=f(R)\neq f(R')=f(R^3)$. This is a contradiction to our assumptions and $f$ is therefore robust if it is a majoritarian dominant set rule.
\end{proof}

\Cref{lem:TCnosubset} presents a simple criterion for deciding when a strategyproof, homogeneous, pairwise, and strongly Condorcet-consistent SCC $f$ is a robust dominant set rule, namely when $\tc(R)\subseteq f(R)$ for all profiles $R$. Our next goal is to prove that every such SCC meets this condition without further assumptions. Hence, suppose for contradiction that this is not the case, i.e., there are an SCC $f$ that satisfies all our axioms and a profile $R$ such that $\tc(R)\not\subseteq f(R)$. If such a profile $R$ exists, we may as well focus on the profile $R^f$ that minimizes the size of the top cycle among all profiles $R$ with $\tc(R)\not\subseteq f(R)$. Furthermore, for every SCC $f$, we define $k_f\in \{1,\dots, m+1\}$ as the maximal value such that $\tc(R)\subseteq f(R)$ for all preference profiles $R$ with $|\tc(R)|<k_f$. Note that $k_f=|\tc(R^f)|$ if $\tc(R)$ is not always a subset of $f(R)$, and $k_f=m+1$ otherwise.

As the next step, we show that $k_f\geq 4$ for all pairwise SCCs $f$ that satisfy strategyproofness, homogeneity, and strong Condorcet-consistency. In general, this means that such SCCs can only fail to choose a superset of the top cycle if $\tc(R)$ is sufficiently large. For the special case where $m\leq 3$, \Cref{lem:TCnosubset,lem:3alt} already imply that $f$ needs to be a robust dominant set rule because the size of the top cycle is bounded by the number of alternatives.

\begin{lemma}\label{lem:3alt}
	Let $f$ denote a pairwise SCC that satisfies strong Condorcet-consistency, strategyproofness, and homogeneity. Then, $k_f\geq 4$. 
\end{lemma}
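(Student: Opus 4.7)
My plan is to handle the three cases $|\tc(R)|\in\{1,2,3\}$ separately.

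For $|\tc(R)|=1$, \Cref{lem:HC} identifies $\tc(R)$ with the Condorcet winner, so strong Condorcet-consistency gives $f(R)=\tc(R)$ immediately. For $|\tc(R)|=2$, write $\tc(R)=\{a,b\}$, which forces $a\sim_R b$ (otherwise a singleton would be dominant) and $\{a,b\}\succ_R A\setminus\{a,b\}$. If $b\notin f(R)$, then $f(R)\setminus\{a\}\subseteq A\setminus\{a,b\}$, so $a\succ_R f(R)\setminus\{a\}$ whenever the latter is nonempty, violating \CWCS (which holds by \Cref{lem:CWCS}); the degenerate case $f(R)=\{a\}$ is excluded by strong Condorcet-consistency since $a$ is not a Condorcet winner, and the symmetric cases ($a\notin f(R)$ or $\{a,b\}\cap f(R)=\emptyset$) are handled analogously.

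For $|\tc(R)|=3$, write $\tc(R)=\{a,b,c\}$ and suppose for contradiction that some $z\in\{a,b,c\}$ is missing from $f(R)$. I case-split on the structure of $\succsim_R|_{\{a,b,c\}}$, which by \Cref{lem:HC} must contain a cycle; this yields four types, indexed by the number of ties among the three pairwise comparisons (zero, one, two, three). I handle these in a reduction chain. The strict $3$-cycle type is resolved directly by \CWCS: under any configuration of $f(R)\cap\{a,b,c\}$, the fact that $\{a,b,c\}\succ_R A\setminus\{a,b,c\}$ together with the cyclic strict dominances produces an $x\in\{a,b,c\}$ with $x\succ_R f(R)\setminus\{x\}$, and strong Condorcet-consistency rules out the singleton $|f(R)|=1$ subcase (since no element of $\tc(R)$ is a Condorcet winner).

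For the remaining one-, two-, and three-tie types, I first observe that \CWCS still forces the two alternatives of $\{a,b,c\}\setminus\{z\}$ into $f(R)$ in every configuration where \CWCS is not directly violated. Calling these two chosen alternatives $x$ and $y$, I then reinforce a tied pair $\{x',y'\}\subseteq f(R)\cap\{a,b,c\}$---using the standard device of adding two voters with inverse preferences and having one of them perform the swap, as in the proof of \Cref{lem:axioms}---to obtain a profile $R'$ whose majority on $\{a,b,c\}$ has strictly fewer ties. By \WMON, both $x'$ and $y'$ remain in $f(R')$. Choosing the reinforcement carefully so that $R'$ falls into the strictly earlier case in the chain (three-tie $\to$ two-tie $\to$ one-tie $\to$ strict, or $|\tc(R')|=2$, or $|\tc(R')|=1$), the already-resolved case applied to $R'$ forces $\{a,b,c\}\subseteq f(R')$, and in particular $z\in f(R')$. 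Since $R$ and $R'$ differ only on the pair $\{x',y'\}\subseteq f(R)\cap f(R')$, \IR from \Cref{lem:axioms} gives $f(R)=f(R')$, contradicting $z\in f(R')\setminus f(R)$.

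The main obstacle will be the bookkeeping for the two- and three-tie types, where several candidate pairs could be reinforced and one must pick the one that both preserves chosenness of both members in $R'$ (enabling \IR) and drives $R'$ into a profile of the strictly preceding type in the chain. This requires enumerating, for each tied structure and each identity of $z$, the \CWCS-compatible choice sets $f(R)\cap\{a,b,c\}$ and verifying that the intended reinforcement truly lands in an already-resolved configuration. The rest is routine use of \WMON, \IUA, and \IR.
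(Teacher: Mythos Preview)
Your decomposition by $|\tc(R)|$ matches the paper, and your handling of sizes $1$ and $2$ is essentially identical. For $|\tc(R)|=3$ you take a genuinely different route: you organise a downward induction on the number of ties among the three pairwise comparisons, reducing each tied configuration to one with strictly fewer ties (and the strict $3$-cycle as base case), whereas the paper goes in the opposite direction. The paper starts from an arbitrary $R$ with some $a\notin f(R)$, uses \CWCS to \emph{derive} that the edge opposite $a$ must be a tie, breaks that tie, tracks via \WMON and \IR how the unchosen alternative shifts around the cycle, and iterates until all three edges are strict---at which point \CWCS gives the contradiction. Your reduction-by-tie-count is arguably more modular (each case is self-contained), while the paper's single chain avoids the bookkeeping of enumerating tied structures and possible identities of the missing alternative.

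Two points in your write-up need tightening. First, ``By \WMON, both $x'$ and $y'$ remain in $f(R')$'' is not what \WMON gives: when you reinforce $x'$ against $y'$ with both chosen in $R$, \WMON only guarantees $x'\in f(R')$; the weakened alternative $y'$ may drop out. This does not actually break your argument, because once you have established $\{a,b,c\}\subseteq f(R')$ from the already-resolved case you get $x',y'\in f(R')$ for free---so the \WMON appeal is redundant rather than essential. Second, and more substantively, you list ``$|\tc(R')|=2$, or $|\tc(R')|=1$'' as admissible landing spots for $R'$, but if $|\tc(R')|\leq 2$ the already-resolved cases only yield $\tc(R')\subseteq f(R')$, which need \emph{not} contain all of $\{a,b,c\}$, and then your \IR step fails. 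You must therefore choose the reinforcement direction so that $\tc(R')=\{a,b,c\}$ throughout. A short case check confirms this is always possible: from three ties, either direction on the one available chosen tied pair lands in a two-tie profile with $\tc(R')=\{a,b,c\}$; from two ties or one tie, exactly one direction on the relevant chosen tied pair keeps $\tc(R')=\{a,b,c\}$ (the other creates a Condorcet winner or a two-element dominant set). With that correction, your argument goes through.
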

\begin{proof}
	Let $f$ denote a pairwise SCC that satisfies strong Condorcet-consistency, homogeneity, and strategyproofness. Furthermore, suppose for contradiction that there is a profile $R^*$ such that $k=|\tc(R^*)|\leq 3$ but $\tc(R^*)\not\subseteq f(R^*)$. We proceed with a case distinction with respect to $|\tc(R^*)|$ to derive a contradiction for the three possible cases.\bigskip
	
	\textbf{Case 1: $|\tc(R^*)|=1$}
	
	If $|\tc(R^*)|=1$, there has to be a Condorcet winner in $R^*$ since $\tc$ is strongly Condorcet-consistent. Consequently, the strong Condorcet-consistency of $f$ requires that $f(R^*)=\tc(R^*)$, which contradicts the assumption that $\tc(R^*)\not\subseteq f(R^*)$.\bigskip
	
	\textbf{Case 2: $|\tc(R^*)|=2$}
	
	The top cycle only elects two alternatives $x,y\in A$ if $x\sim_{R^*} y$ and $\{x,y\}\succ_{R^*}A\setminus \{x,y\}$. Hence, strong Condorcet-consistency requires that $f$ chooses at least two alternatives. In turn, \CWCS implies then that both $x$ and $y$ are chosen because $x$ is the only alternative that dominates $y$ and $y$ is the only alternative that dominates $x$. This shows that $\tc(R^*)\subseteq f(R^*)$ and we again have a contradiction.\bigskip
	
	\textbf{Case 3: $|\tc(R^*)|=3$}
	
	Next, assume there are three alternatives $a$, $b$, and $c$ such that $\tc(R^*)=\{a,b,c\}$, but $\{a,b,c\}\not\subseteq f(R^*)$. 
	First, note that strong Condorcet-consistency requires that $|f(R^*)|\geq 2$ because there is no Condorcet winner in $R^*$; otherwise, it would hold that $|\tc(R^*)|=1$ as $\tc$ uniquely chooses the Condorcet winner whenever it exists. 
	Next, observe that, according to \Cref{lem:HC}, there has to be a cycle $C$ that connects $a$, $b$, $c$ since $\tc(R^*)=\{a,b,c\}$. 
	Without loss of generality, suppose that $C=(a,b,c)$, i.e., $a\succsim_{R^*} b$, $b\succsim_{R^*}c$, $c\succsim_{R^*}a$. Moreover, we also suppose without loss of generality that $a\not\in f(R^*)$ because $\tc(R^*)\not\subseteq f(R^*)$.  \CWCS then implies that $b\sim_{R^*} c$ and $\{b,c\}\subseteq f(R^*)$ because, otherwise, no chosen alternative dominates $b$ or $c$. 
	Next, consider the profile $R^1$ derived from $R^*$ by reinforcing $b$ against $c$, i.e., we have $b\succ_{R^1} c$ instead of $b\sim_{R^*}c$. 
	First, note that there is no Condorcet winner in $R^1$ and thus, strong Condorcet-consistency requires us to choose at least two alternatives. 
	Hence, \CWCS implies that $a\in f(R^1)$ because it is the only alternative that dominates $b$ in $R^1$. 
	Moreover, \WMON shows that $b\in f(R^1)$ since we swap two chosen alternatives to derive $R^1$ from $R^*$. Finally, the contraposition of \IR implies that $c\not\in f(R^1)$ since $a\in f(R^1)\setminus f(R^*)$. 
	These observations entail that $a\sim_{R^1} b$ because, otherwise, no chosen alternative dominates $a$ which violates \CWCS. 
	Thus, we can repeat the previous steps by reinforcing $a$ against $b$, which results in a profile $R^2$ such that $a\succ_{R^2} b$, $b\succ_{R^2} c$, $\{a,c\}\subseteq f(R^2)$, and $b\not\in f(R^2)$. 
	Hence, \CWCS implies once again that $c\sim_{R^2} a$ and we can again break this majority tie to derive the profile $R^3$. 
	In more detail, all edges of $C$ are now strict, and $a\not\in f(R^3)$. This contradicts \CWCS because $b\succ_{R^3} A\setminus \{a,b\}$, i.e., no chosen alternative dominates $b$. 
	Hence, the assumption that $\tc(R^*)\not\subseteq f(R^*)$ was incorrect, which proves the lemma.
	\end{proof}
	
	Due to \Cref{lem:3alt}, it follows that every pairwise SCC that satisfies strategyproofness, homogeneity, and strong Condorcet-consistency can only fail to choose a superset of the top cycle if $|\tc(R)|\geq 4$. Hence, we subsequently investigate profiles with a top cycle that contains at least $4$ alternatives. For this, we first need to discuss some auxiliary lemmas and start by showing that a pairwise SCC $f$ that satisfies homogeneity, strategyproofness, and strong Condorcet-consistency must choose almost all alternatives of the top cycle for profiles $R$ with $|\tc(R)|=k_f\geq 4$. 
	
	\begin{lemma}\label{lem:size}
		Let $f$ denote a pairwise SCC that satisfies strong Condorcet-consistency, homogeneity, strategyproofness, and $4\leq k_f\leq m$. It holds that $|\tc(R)\cap f(R)|\geq k_f-1$ for all profiles $R$ with $|\tc(R)|=k_f$.
	\end{lemma}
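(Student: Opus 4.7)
The plan is to argue by contradiction: suppose there is a profile $R$ with $|\tc(R)|=k_f$ and $|\tc(R)\cap f(R)|\leq k_f-2$, so $U=\tc(R)\setminus f(R)$ has $|U|\geq 2$. I would construct a profile $R^*$ satisfying $f(R^*)=f(R)$, $|\tc(R^*)|<k_f$, and $\tc(R^*)\not\subseteq f(R^*)$, which directly contradicts the definition of $k_f$.

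My first step would be to locate distinct $x,y\in U$ with $y\in\tc(R|_{A\setminus\{x\}})$, equivalently $y\notin A_x$. Suppose for contradiction that no such pair exists, so that $U\setminus\{x\}\subseteq A_x$ for every $x\in U$. Fix distinct $x,y\in U$; then $y\in A_x$ and symmetrically $x\in A_y$, so in particular both alternatives are connectors. Applying \Cref{lem:connectors} in both directions yields $A_y\subseteq A_x$ unless $x\succ_R A\setminus\{x,y\}$, and $A_x\subseteq A_y$ unless $y\succ_R A\setminus\{x,y\}$. If neither exception holds then $A_x=A_y$, contradicting $x\in A_y$ together with $x\notin A_x$. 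Hence without loss of generality $x\succ_R A\setminus\{x,y\}$, so $y$ is the only member of $A\setminus\{x\}$ that can weakly dominate $x$. By \Cref{lem:CWCS}, some chosen alternative in $f(R)$ must weakly dominate $x$, forcing this alternative to be $y$. But $y\in U$ is unchosen, a contradiction.

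Given such $x$ and $y$, I would build $R^*$ by iteratively turning $x$ into a Condorcet loser. In each round I first add two voters with inverse preferences, one of whom ranks $x$ on top; since this pair contributes zero to every majority margin, pairwiseness leaves $f$ unchanged. I then apply \WSMON to the newly added voter, moving $x$ from top to bottom; this preserves $f$ (as $x$ remains unchosen) and increases $g(z,x)$ by $2$ for every $z\neq x$, while leaving all margins among $A\setminus\{x\}$ untouched. After sufficiently many rounds $g_{R^*}(z,x)>0$ for every $z\neq x$, so $x$ is a Condorcet loser in $R^*$ and $f(R^*)=f(R)$. Because the majority relation on $A\setminus\{x\}$ is unchanged and $x$ is dominated by everything, $\tc(R^*)=\tc(R|_{A\setminus\{x\}})\subseteq\tc(R)\setminus\{x\}$, and this set contains $y$ by the first step. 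Hence $|\tc(R^*)|\leq k_f-1<k_f$ and $y\in\tc(R^*)\setminus f(R^*)$, contradicting the minimality of $k_f$.

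The hardest part will be the combinatorial first step: isolating a peripheral pair requires a delicate interplay between the structural constraint of \Cref{lem:connectors} and Condorcet-stability, where the exceptional case of the connector lemma forces an alternative that strictly dominates everything except one other, and \CWCS then compels that one other to be chosen, ruling out the case. The subsequent Condorcet-loser construction is a routine composition of pairwiseness and \WSMON, and the containment of $\tc(R^*)$ in $\tc(R)\setminus\{x\}$ is a standard consequence of removing a Condorcet loser.
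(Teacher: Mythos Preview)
Your proof is correct and follows essentially the same approach as the paper: locate an unchosen top-cycle alternative $x$ whose removal still leaves another unchosen alternative $y$ in the (smaller) top cycle, then push $x$ to Condorcet loser via pairwiseness and \WSMON to contradict the minimality of $k_f$. The only cosmetic difference is that you package the case analysis as a preliminary existence claim and invoke \Cref{lem:connectors} in both directions to force one of the exceptional cases, whereas the paper picks an arbitrary $x$, handles the case $X\not\subseteq A_x\cup\{x\}$ directly, and in the remaining case uses \CWCS first to rule out the exception and then applies \Cref{lem:connectors} once to pass to $y$; both routes are equivalent.
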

	\begin{proof}
		Let $f$ denote a pairwise SCC that satisfies all axioms of the lemma and assume that $k_f\in \{4,\dots, m\}$. Furthermore, suppose for contradiction that there is a profile $R$ such that $|\tc(R)|=k_f$ and $|f(R)\cap\tc(R)|\leq k_f-2$. This means that at least two alternatives of the top cycle are not chosen, i.e., the set $X=\tc(R)\setminus f(R)$ contains at least two alternatives. Next, let $x\in X$ denote one of these alternatives. We proceed with a case distinction with respect to the connected set $A_x$ and first suppose that $X\not\subseteq A_x\cup\{x\}$. Because of the definition of connected sets, this assumption means that $X\cap \tc(R^{-x})\neq \emptyset$, where $R^{-x}=R|_{A\setminus \{x\}}$ denotes the profile derived from $R$ by removing $x$. We use this fact to derive a contradiction as follows: starting at $R$, we repeatedly weaken $x$ against all alternatives until we derive a profile $R'$ in which $x$ is the Condorcet loser. \WSMON entails for every step that the choice set does not change, which means that $f(R)=f(R')$. Moreover, $\tc(R')=\tc(R)\setminus (A_x\cup\{x\})$ because for the top cycle it is irrelevant whether an alternative is a Condorcet loser or not present at all. However, this means that $\tc(R')\not\subseteq f(R')$ because $X\cap \tc(R')\neq\emptyset$, but $X \cap f(R')=X\cap f(R)=\emptyset$. Since $|\tc(R')|\leq |\tc(R)\setminus \{x\}|<k_f$, this contradicts the definition of $k_f$ which requires that $\tc(\bar R)\subseteq f(\bar R)$ for all profiles $\bar R$ with $|\tc(\bar R)|<k_f$.
	
		As second case, suppose that $X\subseteq A_x\cup\{x\}$. In this case, consider a second alternative $y\in X\setminus \{x\}$, which means that $y\in A_x$. We want to use \Cref{lem:connectors}. Note that there is an alternative $z\in f(R)$ with $z\succsim_R x$ because of \CWCS and strong Condorcet-consistency. Hence, $x$ does not dominate all other alternatives but $y$ and \Cref{lem:connectors} consequently shows that $A_y\subseteq A_x$. In particular, this means that $x\not\in A_y\cup\{y\}$, and therefore $X\not\subseteq A_y\cup\{y\}$. Hence, we can use the same argument as in the last case to derive a contradiction by focusing on $y$. Since both cases result in a contradiction, it follows that the assumption $|\tc(R)\cap f(R)|\leq k_f-2$ was incorrect, i.e., $|\tc(R)\cap f(R)|\geq k_f-1$ holds for all profiles $R$ with $|\tc(R)|= k_f$.
	\end{proof}	
	
	\Cref{lem:size} is important because it implies for all profiles $R$ with $|\tc(R)\cap f(R)|<|\tc(R)|=k_f$ that there is a single alternative of the top cycle which is unchosen. As we demonstrate next, this insight can be used to strengthen the axioms in \Cref{app:axioms} when we restrict attention to profiles $R, R'$ with $|\tc(R)|=|\tc(R')|=k_f$. In particular, the next lemma is concerned with what happens when we weaken an alternative $y\in f(R)\cap\tc(R)$ against multiple alternatives $X\subseteq f(R)\cap \tc(R)$ when $|\tc(R)|=k_f$.
	
	\begin{lemma}\label{lem:swap}
		Let $f$ denote a pairwise SCC that satisfies strong Condorcet-consistency, homogeneity, strategyproofness, and $4\leq k_f\leq m$, and consider two preference profiles $R$, $R'$ such that $\tc(R)=\tc(R')$ and $|\tc(R)|=k_f$. If there are a set of alternatives $X\subseteq f(R)\cap \tc(R)$ and an alternative $y\in \left(f(R)\cap \tc(R)\right)\setminus X$ such that $g_{R'}(x,y)=2+g_R(x,y)$ for all $x\in X$, and $g_{R'}(x',y')=g_R(x',y')$ for all other pairs of alternatives, it holds that $f(R)=f(R')$ or $f(R')\cap\tc(R')=\tc(R')\setminus \{y\}$.  
	\end{lemma}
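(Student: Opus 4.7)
The plan is to handle the disjunction in the conclusion case by case. First, applying \Cref{lem:size} to $R'$ (using $|\tc(R')|=k_f$) yields $|f(R')\cap\tc(R')|\ge k_f-1$, so at most one alternative of $\tc(R')$ is missing from $f(R')$. If $y\notin f(R')$, then $f(R')\cap\tc(R')\subseteq\tc(R')\setminus\{y\}$, and since the right-hand side has size $k_f-1$, the bound forces equality and the second disjunct of the conclusion holds.

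The substantive case is $y\in f(R')$, where I intend to prove $f(R)=f(R')$ by contradiction. Suppose $f(R)\neq f(R')$. The strategy is to construct a single-voter Fishburn manipulation by enlarging the electorate. By pairwiseness I add two voters $i^*, j^*$ whose preferences will be mutual inverses so that their contribution to all majority margins cancels. For $\succ_{i^*}^1$ I choose a total order whose alternatives, from top to bottom, partition $A$ into the six blocks $f(R')\setminus f(R)$, $(f(R')\cap f(R))\setminus(X\cup\{y\})$, $\{y\}$, $X\cap f(R')$, $X\setminus f(R')$, and $A\setminus(f(R')\cup X\cup\{y\})$, with arbitrary internal orderings. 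The resulting profile $R^+$ has $g_{R^+}=g_R$ and hence $f(R^+)=f(R)$. Next I have voter $i^*$ move $y$ down past all of $X$ through $|X|$ adjacent swaps, arriving at a preference $\succ_{i^*}^2$; this flips exactly the pairs $(y,x)$ for $x\in X$, so the new profile $R^{++}$ satisfies $g_{R^{++}}=g_{R'}$, and hence $f(R^{++})=f(R')$ by pairwiseness.

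The final step is verifying $f(R')\succ_{i^*}^F f(R)$ under $\succ_{i^*}^1$, which contradicts strategyproofness. Since $X\cup\{y\}\subseteq f(R)$, the set $f(R')\setminus f(R)$ is disjoint from $X\cup\{y\}$ and therefore sits entirely in the topmost block of $\succ_{i^*}^1$; combined with the fact that $f(R)$ is contained in the five blocks below, every alternative of $f(R')\setminus f(R)$ beats every alternative of $f(R)$, establishing the first Fishburn condition. For the second condition, every element of $f(R')$ lies in one of the top four blocks, while every element of $f(R)\setminus f(R')$ is disjoint from $f(R')\cup\{y\}$ and hence lies in either the fifth block $X\setminus f(R')$ or in the bottom block $A\setminus(f(R')\cup X\cup\{y\})$; block-by-block comparison therefore shows each element of $f(R')$ beats each element of $f(R)\setminus f(R')$. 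Voter $i^*$ thus successfully manipulates by deviating from $R^+$ to $R^{++}$, contradicting strategyproofness and forcing $f(R)=f(R')$.

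The main obstacle is designing $\succ_{i^*}^1$ so that both Fishburn inequalities hold uniformly, regardless of how the symmetric differences $f(R)\setminus f(R')$ and $f(R')\setminus f(R)$ interact with $X$ and $y$; the six-block partition above is chosen precisely to make these inequalities reduce to a routine block-membership check. I do not expect to need the step-by-step sequence of intermediate profiles, nor the auxiliary axioms \WMON and \IR used in neighboring lemmas, since one application of \Cref{lem:size} together with a single direct appeal to strategyproofness already closes the argument.
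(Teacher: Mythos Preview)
Your proof is correct. Both you and the paper reduce the claim to \Cref{lem:size} plus a single Fishburn manipulation obtained by adjoining a pair of voters with inverse preferences, so the overall strategy is the same.

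The decomposition differs slightly. The paper first argues that $X\subseteq f(R')$ (by contradiction, constructing a manipulation when some $z\in X$ is unchosen in $R'$), and then invokes the packaged axiom \IR with $B=X\cup\{y\}$ to conclude $f(R)=f(R')$ in the case $y\in f(R')$. You bypass both the intermediate claim $X\subseteq f(R')$ and the appeal to \IR: your six-block ordering handles the possibility $X\setminus f(R')\neq\emptyset$ directly (that is precisely your fifth block), so a single manipulation covers the entire case $y\in f(R')$, $f(R)\neq f(R')$. This buys a small amount of economy---one manipulation instead of a manipulation followed by \IR---at the cost of a slightly more elaborate block partition. Conversely, the paper's route makes the structural fact $X\subseteq f(R')$ explicit, which is conceptually informative even though it is not logically needed for the stated conclusion.
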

	\begin{proof}
		Consider a pairwise SCC $f$ that satisfies homogeneity, strategyproofness, and strong Condorcet-consistency and let $R$, $R'$, $X$, and $y$ be defined as in the lemma. In particular, it holds that $g_{R'}(x,y)=g_{R}(x,y)+2$ for all $x\in X$, and $g_{R'}(x',y')=g_R(x',y')$ for all other pairs of alternatives. This means that we can transform $R$ into a profile $R^*$ with the same majority margins as $R'$ by reinforcing all alternatives in $X$ against $y$. Consequently, the lemma follows if we show that $X\subseteq f(R')$: if also $y\in f(R')$, then \IR entails that $f(R)=f(R')$, and if $y\not\in f(R')$, then \Cref{lem:size} implies that $f(R')\cap\tc(R')=\tc(R')\setminus \{y\}$ because $|\tc(R')|=k_f\geq 4$ and $y\in \tc(R')\setminus f(R')$. 
		
		Thus, suppose for contradiction that $X\not\subseteq f(R')$. Then, \Cref{lem:size} shows that there is an alternative $z\in X$ such that $f(R')\cap\tc(R')=\tc(R')\setminus \{z\}$ because $X\subseteq \tc(R')$. On the other hand, we assume that $X\cup \{y\}\subseteq f(R)$. We can turn these observations into a manipulation of $f$ by adding two voters $i^*$ and $j^*$ with inverse preferences to $R$. In more detail, the profile $R^1$ consists of $R$ and the voters $i^*$ and $j^*$ whose preference relations are specified subsequently. In the definitions of these preference relations, we use $\bar f(R)=f(R)\setminus (X\cup \{y\})$ and $\bar X=X\setminus \{z\}$. 
		\begin{align*}
			\succ_{i^*}^1&=\lex\left(A\setminus f(R)\right), \lex\left(\bar f(R)\cap f(R')\right), y, \lex\left(\bar X\right), z, \lex\left(\bar f(R)\setminus f(R')\right)\\
			\succ_{j^*}^1&= \lex\left(\bar f(R)\setminus f(R')\right)^{-1}\!, z, \lex\left(\bar X\right)^{-1}\!, y, \lex\left(\bar f(R)\cap f(R')\right)^{-1}\!, \lex\left(A\setminus f(R)\right)^{-1}
		\end{align*}
		
		Since the preferences of these voters are inverse, it follows from pairwiseness that $f(R^1)=f(R)$. Next, we derive $R^2$ from $R^1$ by letting voter $i$ reinforce all alternatives in $X$ against $y$. Since we derive $R'$ from $R$ by the same modification, pairwiseness shows that $f(R^2)=f(R')$. However, this means that voter $i^*$ can manipulate by deviating from $R^1$ to $R^2$. Note for this that $f(R')\setminus f(R)\succ_{i^*} f(R)$ because $A\setminus f(R)\succ_{i^*} f(R)$. Furthermore, it holds that $f(R')\succ_{i^*} f(R)\setminus f(R')$ because the alternatives in $f(R)\setminus f(R')=(\bar f(R)\setminus f(R'))\cup \{z\}$ are bottom-ranked by voter $i^*$. Finally, since $z\in f(R)\setminus f(R')$, this is indeed a manipulation for voter $i^*$. Hence, the assumption that $X\not\subseteq f(R')$ was incorrect, which proves the lemma. 
	\end{proof}

\Cref{lem:swap} significantly strengthens \WMON for profiles $R, R'$ with $\tc(R)=\tc(R')$ and $|\tc(R)|=k_f\geq 4$ and alternatives in $\tc(R)$. In particular, we can now reinforce sets of alternatives against single alternatives and there are only two possible outcomes under the given assumptions. Therefore, we ensure in the following that the premises of \Cref{lem:swap} are always true: in all subsequent profiles $R$, it holds that $|\tc(R)|=k_f$, we only modify the preferences between alternatives in the top cycle, and the top cycle will never change. As the next step, we derive a profile $R$ for which all majority margins are known, $|\tc(R)|=k_f$, and $\tc(R)\not\subseteq f(R)$.

\begin{lemma}\label{lem:cycle}
	Let $f$ denote a pairwise SCC that satisfies strong Condorcet-consistency, homogeneity, and strategyproofness. If $4\leq k_f\leq m$, there is a profile $R$ such that $\tc(R)\not\subseteq f(R)$, $|\tc(R)|=k_f$ and there is a cycle $C=(a_1,\dots, a_{k_f})$ in $\succsim_{R}\!\!|_{\tc(R)}$ with $g_{R}(a_{k_f}, a_1)=2$ and $g_{R}(a_i, a_j)=2$ for all other indices $i,j\in\{1,\dots, k_f\}$ with $i<j$.
\end{lemma}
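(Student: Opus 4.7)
The starting point is the profile guaranteed by the hypothesis $k_f \leq m$: by the definition of $k_f$, there exists some $R^{(0)}$ with $|\tc(R^{(0)})| = k_f$ and $\tc(R^{(0)}) \not\subseteq f(R^{(0)})$. Set $T := \tc(R^{(0)})$. \Cref{lem:size} pins down $|T \setminus f(R^{(0)})| = 1$, and \Cref{lem:HC} supplies a cycle $(c_1, \ldots, c_{k_f})$ in $\succsim_{R^{(0)}}\!|_T$ that visits every element of $T$. This cycle will become the eventual backbone $(a_1, \ldots, a_{k_f})$ after an appropriate relabeling, and the remaining task is to edit $R^{(0)}$ into a profile $R$ whose majority graph on $T$ matches the claimed ``linear-plus-back-edge'' pattern with every relevant margin equal to $2$.

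The engine for those edits is \Cref{lem:swap}: reinforcing a set $X \subseteq f(R) \cap T$ against some $y \in (f(R) \cap T) \setminus X$ preserves the invariants $|\tc|=k_f$ and $\tc \not\subseteq f$, because either the choice set is unchanged or the unique missing alternative shifts to $y$. Each such operation shifts every margin $g(x,y)$ with $x \in X$ by $+2$, so it can both \emph{flip} a wrongly-oriented edge and \emph{shrink} a correctly-oriented but oversized margin (by reinforcing the opposite direction), provided both endpoints of the edge lie in $f \cap T$ at the moment of the edit. Combined with \IUA (to rearrange preferences over $A \setminus T$) and homogeneity (to enlarge the profile and create enough room for the edits), this gives enough expressive power to drive the majority graph on $T$ step by step toward the target while keeping $T$ as the top cycle throughout.

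The main obstacle is orchestrating those edits so that each is legal: at the moment we want to modify an edge $\{c_i, c_j\}$, neither endpoint should be the currently missing alternative, and the modification must not break $T$ as a dominant and strongly connected subset of the majority graph. The plan is to prescribe an explicit editing order---e.g., finalize all edges incident to $c_1$, then those incident to $c_2$, and so on---and, whenever the next intended edit is blocked, to first perform an auxiliary \Cref{lem:swap} reinforcement that steers the missing alternative to an element whose incident edges are already in their final form. Because $k_f \geq 4$, such an auxiliary move always exists; and homogeneity is used to start with margins large enough that a single $+2$ reinforcement never overshoots the target value $2$ on an edge already pointing the right way. The preservation of $\tc(R) = T$ under each edit is verified directly via \Cref{lem:HC}, noting that the backbone cycle is never destroyed by a single-edge modification.

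Finally, a \Cref{lem:CWCS} check confirms that the unique missing alternative in the resulting $R$ must lie in $\{a_2, \ldots, a_{k_f-1}\}$: if it were $a_1$ then $a_2$ would strictly dominate $f(R) \setminus \{a_2\} = \{a_3, \ldots, a_{k_f}\}$, and if it were $a_{k_f}$ then $a_1$ would strictly dominate $f(R) \setminus \{a_1\} = \{a_2, \ldots, a_{k_f-1}\}$; either case contradicts \Cref{lem:CWCS}. Since $k_f \geq 4$, the allowed range for the missing alternative is non-empty, so $R$ satisfies every requirement of the lemma.
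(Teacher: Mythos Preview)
Your plan to use \Cref{lem:swap} as the editing engine is natural, but the ``auxiliary move'' mechanism has a real gap. \Cref{lem:swap} has two possible conclusions---either $f(R')=f(R)$ or the missing alternative becomes~$y$---and you have no control over which one occurs. So if the missing alternative is currently $c_i$ and you need to edit an edge incident to $c_i$, reinforcing some other pair hoping to shift the hole elsewhere may simply leave the choice set unchanged while modifying an edge you did not want to touch. Repeating such moves gives no termination argument, and the assertion that ``because $k_f\ge 4$ such a move always exists'' addresses existence, not efficacy. A related issue is that your claim ``the backbone cycle is never destroyed by a single-edge modification'' needs justification: some of your intended edits (e.g., reducing an oversized margin in the direction of the cycle) do weaken a cycle edge, and \Cref{lem:swap} cannot even be invoked unless you first verify $\tc(R)=\tc(R')$.

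The paper resolves these difficulties by a structural detour you are missing. In Step~1 it does \emph{not} try to edit arbitrary edges; instead it reinforces a single chosen alternative $b$ (the successor of some $a$ on the cycle) against every alternative dominating it, until $b\succ_R A\setminus\{a,b\}$. This requires a nontrivial recovery procedure when $b$ temporarily drops out of $f$, using the successor of $b$ on the cycle. The payoff is that \CWCS now \emph{forces} $a\in f(R)$ regardless of other edits, as long as $a$ remains the unique alternative dominating $b$. With this anchor in place, Step~2 can reinforce $a$ against any $x\in\tc(R)\setminus\{a,b,a_{k_f-1}\}$---chosen or not---and a \WMON/\IR argument (using that $a$ is guaranteed chosen on both sides) shows the choice set can only shrink within $\tc(R)$. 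One then traverses the cycle, re-establishing the anchor at each step. Your approach never creates such an anchor, which is why you cannot handle edits touching the currently missing alternative.
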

\begin{proof}
	Let $f$ denote a pairwise SCC that satisfies strong Condorcet-consistency, homogeneity, and strategyproofness, and suppose that $k_f\in \{4,\dots, m\}$. Moreover, consider a profile $R^*$ such that $\tc(R^*)\not\subseteq f(R^*)$ and $k_f=|\tc(R^*)|$; such a profile exists by the definition of $k_f$. Additionally, we assume in the sequel that $R^*$ is defined by an even number of voters. This is possible as we can simply duplicate the profile $R^*$ if it is defined by an odd number of voters. This step does neither affect the top cycle nor $f$ since both SCCs are homogeneous, and we can thus work with this larger profile if $R^*$ was defined by an odd number.
	
	We prove this lemma in two steps: first, we construct a profile $\hat R^1$ such that $\tc(R^*)=\tc(\hat R^1)\not\subseteq f(\hat R^1)$ and there is a pair of alternatives $a,b\in\tc(\hat R^1)$ with $b \succ_{\hat R^1} A\setminus \{a,b\}$. This profile is essential since \CWCS shows now that $a$ must be chosen, even after various manipulations. Based on this insight, we construct as a second step a profile $\hat R^2$ that satisfies all requirements of our lemma.\bigskip
	
	\textbf{Step 1: Constructing the profile $\hat R^1$.}
	
	Our first goal is to construct a profile $\hat R^1$ such that $\tc(R^*)=\tc(\hat R^1)\not\subseteq f(\hat R^1)$ and there is a pair of alternatives $a,b\in\tc(\hat R^1)$ with $b \succ_{\hat R^1} A\setminus \{a,b\}$. For this, consider a cycle $C=(a_1, \dots, a_{k_f})$ in $\succsim_{R^*}$ that contains all alternatives in $\tc(R^*)$; such a cycle exists because of \Cref{lem:HC}. Furthermore, let $b=a_{i+1}$ denote an arbitrary alternative in $\tc(R^*)\cap f(R^*)$ and let $a=a_i$ denote its predecessor on the cycle $C$. Our goal is to reinforce $b$ against all alternatives $A\setminus \{a,b\}$ such that $b\succ_R A\setminus \{a,b\}$.
	
	The first key insight for this is that strong Condorcet-consistency and \CWCS entail that there is always a chosen alternative $c$ that dominates $b$ if there is no Condorcet winner. Based on this observation, we repeat the following steps starting at profile $R^*$: in the current profile $R'$, we identify an alternative $c\in f(R')\setminus \{a,b\}$ with $c \succsim_{R'} b$ and reinforce $b$ against $c$. First, note that, during all of these steps, $b$ remains chosen because of \WMON and the fact that we only swap chosen alternatives. Next, observe that these steps do not affect the cycle $C$ because $c$ is not the predecessor of $b$. Hence, \Cref{lem:HC} implies that the top cycle does not change and that $c\in \tc(R^*)=\tc(R')$ because $c\succsim_{R'} b$. The latter observation and \Cref{lem:swap} also entail that not all alternatives in the top cycle are chosen after reinforcing $b$ against $c$ because either $c$ is now unchosen or the choice set is not allowed to change at all. Thus, this process terminates at a profile $R^1$ such that $b \succ_{R^1} f(R^1)\setminus \{a,b\}$, $\tc(R^1)=\tc(R^*)\not\subseteq f(R^1)$, and $\{a,b\}\subseteq f(R^1)$. The last point is true since \WMON shows that $b\in f(R^1)$ and \CWCS requires that $a\in f(R^1)$ because $a\succsim_{R^1} b$ and $b \succ_{R^1} f(R^1)\setminus \{a,b\}$. We are done after this step if $b \succ_{R^1} A\setminus \{a,b\}$, but this is not guaranteed. 
	
	Hence, assume that there are alternatives $x \in A\setminus f(R^1)$ with $x \succsim_{R^1} b$. Note that this assumption implies that $x\in \tc(R^1)$ because $b\in \tc(R^*)=\tc(R^1)$. We want to repeatedly identify such an alternative $x\in \tc(R^1)\setminus f(R^1)$ with $x\succsim_{R^1}b$ and reinforce $b$ against $x$. \WMON and \IR imply for each of these steps that either the choice set does not change, or $b$ becomes unchosen and $x$ chosen. In particular, this means that after such a step, not all alternatives of the top cycle are chosen because the top cycle is not affected by these changes. However, we cannot guarantee that $b$ remains chosen during these steps, and therefore we need to treat the case that we arrive at a profile $R^2$ with $b\not\in f(R^2)$ separately. Given such a profile $R^2$, we show how we can find another profile $R^3$ such that $b\in f(R^3)$, $\tc(R^*)=\tc(R^3)\not\subseteq f(R^3)$, and $g_{R^3}(b,x)=g_{R^2}(b,x)$ for all $x\in A$. For this, note that the cycle $C=(a_1, \dots, a_{k_f})$ exists also in $R^2$ and thus, $\tc(R^2)=\tc(R^*)$. Since $|\tc(R^*)|=k_f$, \Cref{lem:size} shows that $\tc(R^2)\setminus \{b\}\subseteq f(R^2)$ if $b\not\in f(R^2)$. In particular, this means that $c=a_{i+2}$, i.e., the successor of $b$ on $C$, is in $f(R^2)$. We apply next a similar idea as in the construction of $R^1$: at every preference profile $R'$, we reinforce $c$ against a chosen alternative $x\in f(R')\setminus \{b,c\}$ with $x \succsim_{R'} c$. Just as in the first step, \Cref{lem:swap} implies that $c$ remains chosen during these steps and that we never choose all alternatives of $\tc(R')$. Also, we do not flip any edge in the cycle $C$ during this process because we never reinforce $c$ against its predecessor $b$. Hence, neither the top cycle nor a majority margin involving $b$ change. Finally, this process terminates at a profile $R^3$ such that $c\succ_{R^3} f(R^3)\setminus \{b,c\}$ and $c\in f(R^3)$. Moreover, \CWCS now requires that $b\in f(R^3)$ because $c\succ_{R^3} A\setminus \{b,c\}$, i.e., if $b\not\in f(R^3)$, no chosen alternative dominates $c$. Hence, profile $R^3$ indeed satisfies all our requirements.
	
	Thus, if $b$ drops out of the choice set after reinforcing it against an unchosen alternative, we can apply this construction to derive a profile $R^3$ with $b\in f(R^3)$ and $\tc(R^*)= \tc(R^3)\not\subseteq f(R^3)$. At this point, we can simply repeat the same constructions used in the derivation of $R^1$ and $R^2$, and eventually, we will arrive at a profile $\hat R^1$ such that $b \succ_{\hat R^1} A\setminus \{a,b\}$ because the majority margins of $b$ are non-decreasing during all steps and strictly decreasing during the constructions of $R^1$ and $R^2$. Also, none of the constructions requires us to invert edges of the cycle $C$, and thus $\tc(\hat R^1)=\tc(R^*)$, whereas \Cref{lem:swap} shows that $\tc(\hat R^1)\not\subseteq f(\hat R^1)$.\bigskip
	
	\textbf{Step 2: Constructing the profile $\hat R^2$.}
	
	As a second step, we construct the profile $\hat R^2$ that satisfies all requirements of the lemma. In more detail, $\hat R^2$ has to satisfy that $\tc(R^*)=\tc(\hat R^2)\not\subseteq f(\hat R^2)$ and that there is a cycle $C=(a_1, \dots, a_{k_f})$ in $\succsim_{\hat R^2}$ that connects all alternatives in $\tc(\hat R^2)$ such that $g_{\hat R^2}(a_{k_f}, a_i)=2$ and $g_{\hat R^2}(a_i, a_j)=2$ for all other indices $i,j\in \{1,\dots, k_f\}$ with $i<j$. For the construction of this profile, let $R^1$ denote the profile constructed in the last step, and let $C=(a_1, \dots, a_{k_f})$ denote a cycle that connects all alternatives $x\in\tc(R^*)=\tc(R^1)$ in $\succsim_{R^1}$. By construction, there are alternatives $a,b\in \tc(R^1)$ such that $b\succ_{R^1} A\setminus \{a,b\}$. In the sequel, we assume without loss of generality that $b=a_1$ since we can pick the starting point of the cycle. This means that $a=a_{k_f}$, i.e., $a$ is the predecessor of $b$ on the cycle $C$, because $a$ is the only alternative that dominates $b$. Finally, recall that $R^*$ and therefore also $R^1$ are defined by an even number of voters, which implies that the majority margins are even. 
	
	The central observation for the construction of $\hat R^2$ is that \CWCS and strong Condorcet-consistency guarantee that $a\in f(R^1)$ because $a$ is the only alternative that dominates $b$. Even more, this is true as long as $b\succ_R A\setminus \{a,b\}$ and $a\succsim_{R} b$. We use this observation to reinforce $a=a_{k_f}$ against the alternatives $x\in \tc(R^1)\setminus \{a_1, a_{k_f-1}, a_{k_f}\}$: in each step, we identify an alternative $x\in \tc(R^1)\setminus \{a_1, a_{k_f-1}, a_{k_f}\}$ with $x\succsim_{R'} a$ in the current profile $R'$ and reinforce $a$ against $x$. As mentioned before, \CWCS implies that $a$ has to be chosen during all steps. Moreover, if $x\not\in f(R')$, it follows from \WMON that $x$ remains unchosen after this step; otherwise, we could revert the modification and \WMON entails that $x\in f(R')$, contradicting our previous assumption. Hence, \IR implies that the choice set cannot change in this case. On the other hand, if $x\in f(R')$, it follows either that $x$ is no longer chosen after this step, or that the choice set is not allowed to change because of \IR. In particular, this shows that not all alternatives in $\tc(R^*)=\{a_1, \dots, a_{k_f}\}$ are chosen after this step. Hence, we can repeat these steps until we arrive at a profile $R^2$ such that $a \succ_{R^2} A\setminus \{a,a_1, a_{k_f-1}\}$. Also note that $g_{R^2}(a,x)=2$ for all $x\in \tc(R^1)\setminus \{a, a_1, a_{k_f-1}\}$ with $x\succsim_{R^1} a$ because we only reinforce $a$ against such alternatives $x$ until $a$ strictly dominates them. Finally, none of these steps involves an edge of the cycle $C$, which implies that $\tc(R^2)=\tc(R^1)=\tc(R^*)$. Hence, $\tc(R^2)=\tc(R^*)\not\subseteq f(R^2)$.
	
	As next step, we reinforce $a$ against $b$ if $a \sim_{R^2} b$, and against its predecessor $a_{k_f-1}$ on the cycle $C$ until $g_{R^3}(a,a_{k_f-1})=-2$ if $g_{R^2}(a,a_{k_f-1})\leq -4$. This results in a new profile $R^3$ and, by the same arguments as before, it follows that not all alternatives in $\tc(R^*)$ are chosen. Also, it is easy to see that the top cycle did not change since $a \succ_{R^3} b$ and $a_{k_f-1} \succsim_{R^3} a$. As last point, observe that all new outgoing edges $a \succ_{R^2} x$ have weight $2$ and that the incoming edge from $a_{k_f-1}$ has a weight of at most $2$. 
	
	Finally, note that $a$ dominates each alternative $x\in A\setminus \{a, a_{k_f-1}\}$ in $R^3$, and thus, \CWCS and strong Condorcet-consistency imply now that $a_{k_f-1}$ must be chosen. Hence, we can repeat the previous steps for $a_{k_f-1}$, or more generally, we can traverse along the cycle $C$ using these steps. Thus, we repeat this process until we applied our constructions to $a_1$. It follows from the construction that each edge in the final profile $\hat R^2$ has weight $2$. Furthermore, the cycle $C$ also exists in the final profile $\hat R^2$ and thus, $\tc(\hat R^2)=\tc(R^*)$. Moreover, it is a consequence of \CWCS, \WMON, and \IR that $\tc(\hat R^2)\not\subseteq f(\hat R^2)$. Finally, it follows for the profile $\hat R^2$ that $a_1$ dominates each alternative but $a_{k_f}$, and each alternative $a_i$ with $1<i<k_f$ dominates all alternatives $a_j$ with $j>i$. This claim follows by inspecting our construction in more detail: if $j={i+1}$, i.e., if $a_j$ is the successor of $a_i$ in $C$, this follows immediately as we do not break the cycle. If $j>i+1$, we first apply our construction to $a_j$ ensuring that $a_j$ dominates $a_i$. Later, we apply our construction to $a_i$, which reverts this edge and ensures that it has a weight of $2$. Since this majority margin will not be modified anymore, this proves that the profile $\hat R^2$ indeed satisfies all criteria of the lemma.
\end{proof}

If we consider a pairwise SCC $f$ that satisfies all required axioms but is no robust dominant set rule, \Cref{lem:cycle} states the exact majority margins of a profile $R^*$ such that $\tc(R^*)\not\subseteq f(R^*)$ and $|\tc(R^*)|=k_f\leq m$. As the last step, we derive a contradiction to this by showing that $\tc(R^*)\subseteq f(R^*)$ is required. By considering the contraposition of \Cref{lem:cycle}, we infer from this that $k_f\not\in \{4,\dots, m\}$. Together with \Cref{lem:3alt}, this means that $k_f=m+1$, which shows that every pairwise SCC that satisfies strong Condorcet-consistency, homogeneity and strategyproofness is a robust dominant set rule. 

\begin{figure}[tbp]
		\begin{tikzpicture}
		\newcommand\x{0}
		\newcommand\y{0}
		\newcommand\step{1.4}
		\newcommand\titleheight{7.8}
		\tikzstyle{rednode}=[shape=circle,draw=black, minimum size = 1.6em, inner sep =0, fill=red!25]
		\tikzstyle{greennode}=[shape=circle,draw=black, minimum size = 1.6em, inner sep =0, fill=green!25]
		
		\node[greennode] (6) at (\x,\y) {$a_{6}^\pi$};
		\node[greennode] (5) at (\x,\y+1*\step) {$a_{5}^\pi$};
		\node[rednode] (4) at (\x,\y+2*\step) {$a_{4}^\pi$};
		\node[greennode] (3) at (\x,\y+3*\step) {$a_{3}^\pi$};
		\node[greennode] (2) at (\x,\y+4*\step) {$a_{2}^\pi$};
		\node[greennode] (1) at (\x,\y+5*\step) {$a_1^{\pi}$};	
		\node [align = center]  at (\x,\y+\titleheight) {\large$\succsim_{R^\pi}$};
		\draw [line width=1, -Latex] (6) to[bend left = 25] (1);
		
		\renewcommand{\x}{2.4}
		\node[greennode] (6) at (\x,\y) {$a_{6}^\pi$};
		\node[greennode] (5) at (\x,\y+1*\step) {$a_{5}^\pi$};
		\node[rednode] (4) at (\x,\y+2*\step) {$a_{4}^\pi$};
		\node[greennode] (3) at (\x,\y+3*\step) {$a_{3}^\pi$};
		\node[greennode] (2) at (\x,\y+4*\step) {$a_{2}^\pi$};
		\node[greennode] (1) at (\x,\y+5*\step) {$a_1^{\pi}$};
		\node [align = center]  at (\x,\y+\titleheight) {\large$\succsim_{R^{\pi,1}}$};
		\draw [line width=1, Latex-Latex] (6) to[bend left = 25] (1);
	
		\renewcommand{\x}{4.8}
		\node[greennode] (6) at (\x,\y) {$a_{6}^\pi$};
		\node[greennode] (5) at (\x,\y+1*\step) {$a_{5}^\pi$};
		\node[rednode] (4) at (\x,\y+2*\step) {$a_{4}^\pi$};
		\node[greennode] (3) at (\x,\y+3*\step) {$a_{3}^\pi$};
		\node[greennode] (2) at (\x,\y+4*\step) {$a_{2}^\pi$};
		\node[greennode] (1) at (\x,\y+5*\step) {$a_1^{\pi}$};
		\node [align = center]  at (\x,\y+\titleheight) {\large$\succsim_{\hat R^{\pi,2}}$};
		\draw [line width=1, Latex-Latex] (6) to[bend left = 25] (1);
		\draw [line width=1, Latex-Latex] (2) to (1);
		
		\renewcommand{\x}{7.2}
		\node[greennode] (6) at (\x,\y+1) {$a_{6}^\pi$};
		\node[greennode] (5) at (\x,\y+2.4) {$a_{5}^\pi$};
		\node[rednode] (4) at (\x,\y+3.8) {$a_{4}^\pi$};
		\node[greennode] (3) at (\x,\y+5.2) {$a_{3}^\pi$};
		\node[greennode] (2) at (\x-0.75,\y+6.6) {$a_{2}^\pi$};
		\node[greennode] (1) at (\x+0.75,\y+6.6) {$a_1^{\pi}$};	
		\node[shape=ellipse, draw=black, minimum width = 2.6cm, minimum height=1.2cm] (e1) at (\x,\y+6.6) {};
		\node [align = center]  at (\x,\y+\titleheight) {\large$\succsim_{R^{\pi,2}}$};
		\draw [line width=1, Latex-Latex] (6) to[bend left = 25] (e1);
		\draw [line width=1, Latex-Latex] (2) to (1);
		
		\renewcommand{\x}{9.6}
		\node[greennode] (6) at (\x,\y) {$a_{6}^\pi$};
		\node[greennode] (5) at (\x,\y+1*\step) {$a_{5}^\pi$};
		\node[rednode] (4) at (\x,\y+2*\step) {$a_{4}^\pi$};
		\node[greennode] (3) at (\x,\y+3*\step) {$a_{3}^\pi$};
		\node[greennode] (2) at (\x,\y+4*\step) {$a_{2}^\pi$};
		\node[greennode] (1) at (\x,\y+5*\step) {$a_1^{\pi}$};
		\node [align = center]  at (\x,\y+\titleheight) {\large$\succsim_{\hat R^{\pi,3}}$};
		\draw [line width=1, Latex-Latex] (6) to[bend left = 25] (1);
		\draw [line width=1, Latex-Latex] (2) to (1);
		\draw [line width=1, Latex-Latex] (3) to[bend right = 30] (1);
		
		\renewcommand{\x}{12}
		\node[greennode] (6) at (\x,\y+1) {$a_{6}^\pi$};
		\node[greennode] (5) at (\x,\y+2.4) {$a_{5}^\pi$};
		\node[rednode] (4) at (\x,\y+3.8) {$a_{4}^\pi$};
		\node[greennode] (3) at (\x,\y+5.4) {$a_{3}^\pi$};
		\node[greennode] (2) at (\x-0.75,\y+6.6) {$a_{2}^\pi$};
		\node[greennode] (1) at (\x+0.75,\y+6.6) {$a_1^{\pi}$};	
		\node[shape=ellipse, draw=black, minimum width = 2.6cm, minimum height=2.6cm] (e1) at (\x,\y+6.2) {};
		\node [align = center]  at (\x,\y+\titleheight) {\large$\succsim_{R^{\pi,3}}$};
		\draw [line width=1, Latex-Latex] (6) to[bend left = 20] (e1);
		\draw [line width=1, Latex-Latex] (2) to (1);
		\draw [line width=1, Latex-Latex] (3) to (1);
		\draw [line width=1, Latex-Latex] (2) to (3);
		
		\renewcommand{\y}{-9}
		\renewcommand{\x}{0}
		\node[greennode] (6) at (\x,\y+1) {$a_{6}^\pi$};
		\node[greennode] (5) at (\x,\y+2.4) {$a_{5}^\pi$};
		\node[rednode] (4) at (\x,\y+3.8) {$a_{4}^\pi$};
		\node[greennode] (3) at (\x,\y+5.4) {$a_{3}^\pi$};
		\node[greennode] (2) at (\x-0.75,\y+6.6) {$a_{2}^\pi$};
		\node[greennode] (1) at (\x+0.75,\y+6.6) {$a_1^{\pi}$};	
		\node[shape=ellipse, draw=black, minimum width = 2.6cm, minimum height=2.6cm] (e1) at (\x,\y+6.2) {};
		\node[shape=ellipse, draw=black, minimum width = 1.2cm, minimum height=2.8cm] (e2) at (\x,\y+1.7) {};
		\node [align = center]  at (\x,\y+\titleheight) {\large$\succsim_{\bar R}$};
		\draw [line width=1, Latex-Latex] (e2) to[bend left = 30] (e1);
		\draw [line width=1, Latex-Latex] (2) to (1);
		\draw [line width=1, Latex-Latex] (3) to (1);
		\draw [line width=1, Latex-Latex] (2) to (3);
		
		\renewcommand{\x}{4}
		\node[greennode] (6) at (\x,\y+1) {$a_{6}^\pi$};
		\node[greennode] (5) at (\x,\y+2.4) {$a_{5}^\pi$};
		\node[greennode] (4) at (\x,\y+3.8) {$a_{4}^\pi$};
		\node[greennode] (3) at (\x,\y+5.4) {$a_{3}^\pi$};
		\node[greennode] (2) at (\x-0.75,\y+6.6) {$a_{2}^\pi$};
		\node[rednode] (1) at (\x+0.75,\y+6.6) {$a_1^{\pi}$};	
		\node[shape=ellipse, draw=black, minimum width = 2.6cm, minimum height=2.6cm] (e1) at (\x,\y+6.2) {};
		\node[shape=ellipse, draw=black, minimum width = 1.2cm, minimum height=2.8cm] (e2) at (\x,\y+1.7) {};
		\node [align = center]  at (\x,\y+\titleheight) {\large$\succsim_{\hat R}$};
		\draw [line width=1, -Latex] (e2) to[bend left = 30] (e1);
		\draw [line width=1, Latex-Latex] (2) to (1);
		\draw [line width=1, Latex-Latex] (3) to (1);
		\draw [line width=1, Latex-Latex] (2) to (3);
		
		\renewcommand{\x}{8}
		\node[greennode] (6) at (\x,\y+1) {$a_{6}^\pi$};
		\node[greennode] (5) at (\x,\y+\step+1) {$a_{5}^\pi$};
		\node[greennode] (4) at (\x,\y+2*\step+1) {$a_{4}^\pi$};
		\node[greennode] (3) at (\x-0.75,\y+3*\step+1) {$a_{3}^\pi$};
		\node[greennode] (2) at (\x+0.75,\y+3*\step+1) {$a_{2}^\pi$};
		\node[rednode] (1) at (\x,\y+4*\step+1.3) {$a_1^{\pi}$};	
		\node[shape=ellipse, draw=black, minimum width = 1.2cm, minimum height=2.8cm] (e2) at (\x,\y+1+0.5*\step) {};
		\node[shape=ellipse, draw=black, minimum width = 2.6cm, minimum height=1.2cm] (e1) at (\x,\y+1+3*\step) {};
		\node [align = center]  at (\x,\y+\titleheight) {\large$\succsim_{\hat R^1}$};
		\draw [line width=1, Latex-Latex] (e2) to[bend left = 30] (e1);
		\draw [line width=1, Latex-Latex] (1) to (e1);
		\draw [line width=1, Latex-Latex] (2) to (3);
		\draw [line width=1, -Latex] (e2) to[bend right = 55] (1);
		
		\renewcommand{\x}{12}
		\node[greennode] (6) at (\x,\y) {$a_{6}^\pi$};
		\node[greennode] (5) at (\x,\y+1*\step) {$a_{5}^\pi$};
		\node[rednode] (4) at (\x,\y+2*\step) {$a_{4}^\pi$};
		\node[greennode] (3) at (\x,\y+3*\step) {$a_{3}^\pi$};
		\node[greennode] (2) at (\x,\y+4*\step) {$a_{2}^\pi$};
		\node[greennode] (1) at (\x,\y+5*\step) {$a_1^{\pi}$};
		\node[shape=ellipse, draw=black, minimum width = 1.2cm, minimum height=2.8cm] (e2) at (\x,\y+0.5*\step) {};
		\node[shape=ellipse, draw=black, minimum width = 1.2cm, minimum height=2.8cm] (e1) at (\x,\y+3.5*\step) {};
		\node [align = center]  at (\x,\y+\titleheight) {\large$\succsim_{\tilde{R}^\pi}$};
		\draw [line width=1, Latex-Latex] (e1) to[bend left = 40] (1);
		\draw [line width=1, -Latex] (e2) to[bend right = 30] (1);
		\end{tikzpicture}
	\captionsetup{belowskip=0pt}
	\captionsetup{aboveskip=2pt}
	\caption{The (weighted) majority relations used in the proof of \Cref{lem:domset} for the case that $k_f=6$. Alternatives outside of the top cycle are not depicted, and we assume that $j^*=4$ and $a_1^\pi\not\in f(\hat R)$, where $\pi$ denotes a permutation in $\Pi^{j^*}_{\pi^*}$. Alternatives placed in an ellipse have identical relationships to all alternatives outside of the ellipse, and all missing edges point downwards. All directed edges indicate a majority margin of $2$ and all bidirectional edges indicate a majority margin of $0$. Green alternatives are chosen and red ones are unchosen by $f$.}
	\label{fig:exdomset}
	\end{figure}

\begin{lemma}\label{lem:domset}
	Every pairwise SCC that satisfies strong Condorcet-consistency, homogeneity, and strategyproofness is a robust dominant set rule.
\end{lemma}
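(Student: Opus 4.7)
The plan is to prove the contrapositive via \Cref{lem:TCnosubset}: I will show that $\tc(R) \subseteq f(R)$ holds for every profile $R$, which by that lemma forces $f$ to be a robust dominant set rule. Arguing by contradiction, suppose there exists a profile with $\tc(R) \not\subseteq f(R)$. Then $k_f \leq m$, and by \Cref{lem:3alt} also $k_f \geq 4$, so $k_f \in \{4, \dots, m\}$; the remainder of the argument will derive a contradiction from this standing assumption.

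By \Cref{lem:cycle}, there is a canonical profile $R^*$ with $|\tc(R^*)| = k_f$ and a highly rigid weighted-majority structure: a cycle $(a_1, \dots, a_{k_f})$ with $g_{R^*}(a_{k_f}, a_1) = 2$ and $g_{R^*}(a_i, a_j) = 2$ for all other $i < j$. \Cref{lem:size} forces exactly one alternative $a_{j^*} \in \tc(R^*)$ to be unchosen. Since $f$ is pairwise and a matching weighted majority graph exists for every relabelling of the cycle via McGarvey's construction, I obtain a companion family of profiles $R^\pi$ indexed by permutations $\pi$ of $\{1, \dots, k_f\}$, each with identical structure up to relabelling and each with exactly one unchosen top-cycle alternative.

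The crux of the argument is to cross-reference these permuted profiles by reaching a common auxiliary profile from two of them along different manipulation paths. As depicted in \Cref{fig:exdomset}, I introduce majority ties between selected pairs of alternatives in $\tc(R^\pi)$ and then break them, producing a chain $R^{\pi,1}, \hat R^{\pi,2}, R^{\pi,2}, \hat R^{\pi,3}, R^{\pi,3}, \dots$ culminating in profiles $\bar R$ and $\hat R$ whose structures no longer distinguish the individual $a_i^\pi$. For each elementary manipulation inside the top cycle I invoke \Cref{lem:swap} together with \WMON, \IR, \IUA, and \CWCS (which is available by \Cref{lem:CWCS}) to track how the unchosen alternative may shift or whether the choice set must remain fixed. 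By exhibiting two manipulation sequences that start from $R^\pi$ for different $\pi$ drawn from a carefully chosen set $\Pi^{j^*}_{\pi^*}$ but terminate at the same merged profile, I obtain two mutually incompatible conclusions about which alternative is unchosen at the common endpoint, contradicting \Cref{lem:size}.

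The main obstacle will be the asymmetry introduced by the single back-edge $a_{k_f} \to a_1$: alternative $a_1$ has only one dominator while $a_{k_f}$ has only one dominated alternative, so the case $j^* \in \{1, k_f\}$ must be treated separately from the case $1 < j^* < k_f$. In each case the permutation family $\Pi^{j^*}_{\pi^*}$ has to be chosen so that \CWCS pins the identity of the unchosen alternative tightly enough to force the contradiction, and the auxiliary profiles $\bar R, \hat R, \hat R^1, \tilde R^\pi$ in \Cref{fig:exdomset} must be tailored to the subcase. A secondary bookkeeping difficulty is keeping $\tc$ invariant along every manipulation sequence so that \Cref{lem:size,lem:swap} remain applicable at each intermediate step; this will be handled by only ever modifying majority margins between alternatives inside the current top cycle and by verifying via \Cref{lem:HC} that the characterizing cycle survives each swap.
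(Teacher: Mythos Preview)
Your proposal is essentially the paper's approach: reduce to $\tc\subseteq f$ via \Cref{lem:TCnosubset}, pin down $k_f\in\{4,\dots,m\}$ via \Cref{lem:3alt}, pass to the canonical profile from \Cref{lem:cycle}, introduce the permuted profiles $R^\pi$, and derive a contradiction by linking different permutations through common intermediate profiles using \Cref{lem:swap}, \WMON, \IR, and \CWCS---exactly the chain of profiles in \Cref{fig:exdomset}.

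One small correction to your anticipated case split: the paper does \emph{not} treat $j^*\in\{1,k_f\}$ separately. Instead, its Step~1 shows directly from \CWCS that $\{a_1^\pi,a_2^\pi,a_{k_f}^\pi\}\subseteq f(R^\pi)$ for every permutation $\pi$ (because $a_{k_f}^\pi$ is the unique dominator of $a_1^\pi$, $a_1^\pi$ the unique dominator of $a_2^\pi$, and a short swap argument handles $a_2^\pi$), forcing $j^*\in\{3,\dots,k_f-1\}$ outright. So the asymmetry of the back-edge is absorbed at the outset rather than carried as a case distinction; you should expect this simplification when you execute the argument.
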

\begin{proof}
	Assume for contradiction that there is a pairwise SCC $f$ that satisfies strategyproofness, homogeneity, and strong Condorcet-consistency, but is no robust dominant set rule. The contraposition of \Cref{lem:TCnosubset} shows that there is a profile $R$ such that $\tc(R)\not\subseteq f(R)$. On the other hand, \Cref{lem:3alt} shows that $\tc(R)\subseteq f(R)$ for all profiles $R$ with $|\tc(R)|\leq 3$. These claims contradict each other if $m\leq 3$ and thus, we focus on the case that $m\geq 4$. Hence, let $k_f\in \{4,\dots, m\}$ denote the maximal value such that $\tc(R)\subseteq f(R)$ for all profiles $R$ with $|\tc(R)|<k_f$. Moreover, let $\bar R$ denote a profile such that $\tc(\bar R)\not\subseteq f(\bar R)$ and $|\tc(\bar R)|=k_f$; such a profile exists because of the definition of $k_f$. Next, we apply \Cref{lem:cycle} to derive a profile $R^*$ such that $\tc(R^*)=\tc(\bar R)$, $\tc(R^*)\not\subseteq f(R^*)$, and the alternatives $\tc(R^*)=\{a_1, \dots, a_{k_f}\}$ can be ordered such that $g_{R^*}(a_{k_f}, a_1)=2$ and $g_{R^*}(a_i, a_j)=2$ for all other indices $i,j\in \{1,\dots, k_f\}$ with $i<j$. Furthermore, $\tc(R^*)\setminus f(R^*)$ contains a single alternative $a_j$ because of \Cref{lem:size}.
	
For deriving a contradiction to this assumption, we will consider a number of profiles related to $R^*$. In particular, our subsequent construction will mimic neutrality since $f$ needs not be neutral. Thus, we define  profile $R^\pi$ given some permutation $\pi:\tc(R^*)\rightarrow\tc(R^*)$ as follows: 
$g_{R^\pi}(x,y)=g_{R^*}(x,y)$ if $x\in A\setminus \tc(R^*)$ or $y\in A\setminus \tc(R^*)$ and $g_{R^\pi}(\pi(a_{i}), \pi(a_j))=g_{R^*}(a_{i}, a_{j})$ for all $a_i, a_j\in \tc(R^*)$. Less formally, $R^\pi$ is constructed as follows: we derive the majority margins of $R^\pi$ by reordering the edges between alternatives $x,y\in \tc(R^*)$ according to $\pi$ but we do not reorder the edges to alternatives outside of the top cycle. For a better readability, we refer to $\pi(a_i)$ as $a_i^\pi$ from now on. In particular, the construction of $R^\pi$ implies that $g_{R^\pi}(a_{k_f}^\pi, a_1^\pi)=2$ and $g_{R^\pi}(a_{i}^\pi, a_j^\pi)=2$ for all $i,j\in \{1,\dots, k_f\}$ with $i<j$. Furthermore, we define $\Pi^{l}_\pi$ as the set of permutations $\pi'$ with $\pi(a_i)=\pi'(a_i)$ for $i\in \{l,\dots,k_f\}$, i.e., all permutations $\pi'\in\Pi^l_\pi$ agree with $\pi$ on the alternatives $\{a_l,\dots, a_{k_f}\}$.

Based on profiles $R^\pi$, we next prove the lemma. For this, let $j^*$ denote the smallest index such that $a_{j^*}^\pi\not\in f(R^\pi)$ for some permutation $\pi$ on $\tc(R^*)$. Moreover, let $\pi^*$ denote the corresponding permutation, i.e., $a_{j^*}^{\pi^*}\not\in f(R^{\pi^*})$. Given the value $j^*$ and the profile $R^{\pi^*}$, we prove the lemma in three steps. First, we show that $\{a_1^\pi, a_2^\pi, a_{k_f}^\pi\}\subseteq f(R^\pi)$ for all permutations $\pi$. This means in particular that $j^*\in \{3,\dots, k_f-1\}$. Next, we show that $f(R^\pi)=f(R^{\pi^*})$ for all permutations $\pi\in \Pi^{j^*}_{\pi^*}$. This observation implies that $a_{j^*}^\pi=a_{j^*}^{\pi^*}\not\in f(R^\pi)$ for all these permutations. Finally, we use this insight to derive a contradiction. All profiles used for Steps 2 and 3 are depicted exemplarily in \Cref{fig:exdomset} for the case that $k_f=6$.\bigskip

\textbf{Step 1: $\{a_1^\pi, a_2^\pi, a_{k_f}^\pi\}\subseteq f(R^\pi)$ for all permutations $\pi:\tc(R^*)\rightarrow\tc(R^*)$}

Consider an arbitrary preference profile $R^\pi$. First, note that there is no Condorcet winner in this profile since there is no Condorcet winner in $R^*$ and thus, strong Condorcet-consistency requires that $|f(R^\pi)|\geq 2$. As a consequence, \CWCS requires that $a_1^\pi\in f(R^{\pi})$ and $a_{k_f}^\pi\in f(R^{\pi})$ because $a_1^\pi$ is the only alternative that dominates $a_2^\pi$ and $a_{k_f}^\pi$ is the only alternative that dominates $a_1^\pi$. As last point, suppose for contradiction that $a_2^\pi\not\in f(R^\pi)$. Since $|\tc(R^\pi)|=k_f$, it follows from \Cref{lem:size} that $\tc(R^\pi)\setminus \{a_2^\pi\} \subseteq f(R^\pi)$, in particular that $a_3^\pi\in f(R^\pi)$. As a next step, we reinforce $a_3^\pi$ twice against $a_1^\pi$ to derive a profile $R'$ with $g_{R'}(a_3^\pi, a_1^\pi)=2$. Note that $a_1^\pi$ needs to stay chosen during these steps because it is still the only alternative dominating $a_2^\pi$ and \WMON implies that $a_3^\pi$ remains also chosen. Hence, it follows from \IR that $f(R')=f(R^\pi)$, which means that $a_2^\pi\not\in f(R')$. However, $a_2^\pi$ is the only alternative that dominates $a_3^\pi$ in $R'$ and thus, \CWCS is violated. This is a contradiction and thus, the assumption that $a_2^\pi\not\in f(R^\pi)$ was incorrect.\bigskip

\textbf{Step 2: $f(R^{\pi})=f(R^{\pi^*})$ for all permutations $\pi\in \Pi^{j^*}_{\pi^*}$}
	
	For proving this step, we consider the profiles $R^{\pi, l}$ for every $l\in \{1,\dots, j^*-1\}$, which differs from $R^\pi$ in the fact that $g_{R^{\pi,l}}(a_{i}^\pi, a_j^\pi)=0$ for all $i,j\in \{1,\dots, l, k_f\}$. Intuitively, $R^{\pi, l}$ is derived $R^\pi$ by introducing a large set of tied alternatives $\{a_1^\pi,\dots, a_l^\pi, a_{k_f}^\pi\}$ in the majority relation. Our goal is to show that $f(R^\pi)=f(R^{\pi, j^*-1})$ for all permutations $\pi\in \Pi^{j^*}_{\pi^*}$. This implies that $f(R^\pi)=f(R^{\pi'})$ for all such permutation $\pi$, $\pi'$ because $g_{R^{\pi, l}}=g_{R^{\pi', l}}$ for all $\pi, \pi'\in \Pi^{l+1}_{\pi^*}$ and $l\in \{1,\dots, j^*-1\}$. For deriving this statement, we show inductively that $f(R^\pi)=f(R^{\pi, l})$ for all $\pi\in \Pi^{j^*}_{\pi^*}$ and all $l\in \{1,\dots, j^*-1\}$.
	
	First, we focus on the induction basis $l=1$ and consider therefore an arbitrary permutation $\pi\in \Pi^{j^*}_{\pi^*}$. Note that $R^{\pi,1}$ only differs from $R^\pi$ by the fact that $g_{R^{\pi, 1}}(a_{k_f}^{\pi}, a_1^\pi)=0$ instead of $2$. Hence, we can derive $R^{\pi, 1}$ from $R^\pi$ by reinforcing $a_1^\pi$ against $a_{k_f}^\pi$. Furthermore, we have shown in the last step that $\{a_1^\pi, a_{k_f}^\pi\}\subseteq f(R^\pi)$, and \CWCS requires that both alternatives are chosen in $f(R^{\pi, 1})$ because $a_{k_f}^\pi$ is still the only alternative that dominates $a_1^\pi$ and $a_1^\pi$ is the only alternative that dominates $a_2^\pi$. Consequently, \IR shows that $f(R^\pi)=f(R^{\pi, 1})$. 
	
	For the induction step, assume that there is a value $l\in \{1,\dots, j^*-2\}$ such that $f(R^\pi)=f(R^{\pi, l})$ for all $\pi\in \Pi^{j^*}_{\pi^*}$. 
	We need to prove that this claim is also true for $l+1$. Hence, note that for every permutation $\pi\in\Pi^{j^*}_{\pi^*}$, it holds that $\{a_1^\pi, \dots, a_{j^*-1}^\pi, a_{k_f}^\pi\}\subseteq f(R^{\pi})$ because of the definition of $j^*$ and Step 1. Next, we explain how to derive $R^{\pi, l+1}$ from $R^\pi$ for an arbitrary permutation $\pi\in \Pi^{j^*}_{\pi^*}$: first we reinforce $a_1^\pi$ against $a_{k_f}^\pi$ to derive the profile $\bar R^{\pi}$. 
	It follows from the same argument as in the in the induction basis that $f(\bar R^\pi)=f(R^\pi)$.
	Next, we reinforce all alternatives $a_i^\pi$ with $i\in \{2, \dots, l+1\}$ against $a_1^\pi$. 
	\CWCS requires for the resulting profile $\hat R^\pi$ that $a_1^\pi$ is chosen because it is still the only alternative dominating $a_2^\pi$ and \Cref{lem:swap} shows therefore that $f(\hat R^\pi)=f(R^\pi)$. Furthermore, observe that $g_{\hat R^\pi}(a_1^\pi, a_i^\pi)=0$ for all $i\in \{2,\dots, l+1, k_f\}$. 
	Finally, we can derive $R^{\pi, l+1}$ from $\hat R^{\pi}$ by letting a voter $i$ with preference relation $\succ_i= a_2^\pi,\dots,a_{l+1}^\pi, a_{k_f}^\pi, \lex(A\setminus \{a_2^\pi,\dots, a_{l+1}^\pi, a_{k_f}^\pi\})$ change his preference relation to $\succ_i'= a_{k_f}^\pi, a_{l+1}^\pi,\dots, a_{2}^\pi, \lex(A\setminus \{a_2^\pi,\dots, a_{l+1}^\pi, a_{k_f}^\pi\})$. We can assume that such a voter exists since pairwiseness allows us to add voters with inverse preferences without affecting the choice set. This step ensures that $g_{R^{\pi, l+1}}(a_{i}^\pi, a_j^\pi)=0$ for all $i,j\in \{2,\dots, l+1, k_f\}$ and it therefore transforms $\hat R^\pi$ into $R^{\pi, l+1}$. 
	
	Now, since $\{a_2^\pi, \dots, a_{l+1}^\pi, a_{k_f}^\pi\}\subseteq f(R^{\pi})=f(\hat R^\pi)$, \IR implies that $f(R^\pi)=f(R^{\pi, l+1})$ if $\{a_2^\pi, \dots, a_{l+1}^\pi, a_{k_f}^\pi\}\subseteq f(R^{\pi, l+1})$. Hence, our next goal is to prove this set inclusion and we assume for contradiction that there is an alternative $a_j^\pi$ with $j\in \{2,\dots, l+1, k_f\}$ such that $a_j^\pi\not\in f(R^{\pi, l+1})$. First, suppose that $a_j^\pi\in \{a_2^\pi, \dots, a_{l+1}^\pi\}$, which means that $\tc(R^\pi)\setminus \{a_j^\pi\}\subseteq f(R^\pi)$ because of \Cref{lem:size}. In this case, we derive a contradiction by considering the permutation $\pi'$ with $a_1^{\pi'}=a_j^\pi$,  $a_j^{\pi'}=a_1^\pi$, and $a_i^{\pi'}=a_i^\pi$ for all other $i\in \{1,\dots, k_f\}$. In more detail, we can use the same construction as for $\pi$ to transform $R^{\pi'}$ into $R^{\pi', l+1}=R^{\pi, l+1}$. In particular, the analysis of the previous paragraph shows that $\{a_1^{\pi'}, \dots, a_{l+1}^{\pi'}, a_{k_f}^{\pi'}\}\subseteq f(R^{\pi'})=f(\hat R^{\pi'})$ and we derive $R^{\pi', l+1}=R^{\pi, l+1}$ from the profile $\hat R^{\pi'}$ by a manipulation that only involves the alternatives $\{a_2^{\pi'}, \dots, a_{l+1}^{\pi'}, a_{k_f}^{\pi'}\}$. Hence, the assumption that $\tc(R^\pi)\setminus \{a_j^\pi\}=\tc(R^{\pi'})\setminus \{a_1^{\pi'}\}\subseteq f(R^{\pi,l+1})$ implies that $f(\hat R^{\pi'})=f(R^{\pi', l+1})$ because of \IR. However, this contradicts that $a_1^{\pi'}=a_j^\pi\not\in f(R^{\pi, l+1})$, which proves that $\{a_2^\pi,\dots, a_{l+1}^\pi\}\subseteq f(R^{\pi, l+1})$. 
	
	As second case, suppose that $a_{k_f}^\pi\not\in f(R^{\pi, l+1})$. 
	In this case, we derive a contradiction to the induction hypothesis by deriving $R^{\pi, l+1}$ from $R^{\pi, l}$. 
	Thus, note that these two preference profiles only differ in majority margins involving $a_{l+1}^\pi$: we have $g_{R^{\pi, l}}(a_{l+1}^\pi, a_{k_f}^\pi)=2$ but $g_{R^{\pi, l+1}}(a_{l+1}^\pi, a_{k_f}^\pi)=0$ and, for all $i\in \{1,\dots, l\}$, $g_{R^{\pi, l}}(a_{i}^\pi, a_{l+1}^\pi)=2$ but $g_{R^{\pi, l+1}}(a_{i}^\pi, a_{l+1}^\pi)=0$. 
	Also, observe that the induction hypothesis implies that $f(R^\pi)=f(R^{\pi,l})$, which means that $\{a_1^\pi, \dots, a_{l+1}^\pi, a_{k_f}^\pi\}\subseteq f(R^{\pi, l})$.
	Hence, we can transform $R^{\pi, l}$ into $R^{\pi, l+1}$ as follows: first, we reinforce $a_{l+1}^\pi$ one after another against all alternative $a_i^\pi$ with $i\in \{1,\dots, l\}$. 
	For each swap, \Cref{lem:swap} implies that the choice set does either not change at all, or all alternatives in $\tc(R^\pi)\setminus \{a_i^\pi\}$ are chosen (where $a_i^\pi$ denotes the weakened alternative).
	In particular, this shows that $a_{l+1}^\pi$ and $a_{k_f}^\pi$ stay chosen during this process. 
	Finally, we reinforce $a_{k_f}^\pi$ against $a_{l+1}^\pi$ to derive $R^{\pi, l+1}$. 
	Then, \WMON implies that $a_{k_f}^{\pi}\in f(R^{\pi, l+1})$, contradicting our assumption. 
	Hence, it follows that $\{a_2^\pi, \dots, a_{l+1}^\pi, a_{k_f}^\pi\}\subseteq f(R^{\pi, l+1})$, which proves the induction step. 
	As a consequence, we infer that $f(R^\pi)=f(R^{\pi, j^*-1})=f(R^{\pi', j^*-1})=f(R^{\pi'})$ for all permutations $\pi,\pi'\in \Pi^{j^*}_{\pi^*}$.\bigskip

\textbf{Step 3: Deriving the contradiction}

	As last step, we derive a contradiction by showing that $a_{j^*}^{\pi^*}\in f(R^{\pi^*})$. This claim is in conflict with the definitions of $j^*$ and $R^{\pi^*}$ which require that $a_{j^*}^{\pi^*}\not\in f(R^{\pi^*})$. For proving this claim, we consider first the profile $\bar R$ which differs in the following majority margins from $R^{\pi^*}$: $g_{\bar R}(a_i^{\pi^*}, a_j^{\pi^*})=0$ for all $i,j\in \{1,\dots, j^*-1\}$ and $g_{\bar R}(a_i^{\pi^*}, a_j^{\pi^*})=0$ for all $i\in \{1,\dots, j^*-1\}$, $j\in \{j^*+1,\dots, k_f\}$. A similar analysis as in Step 2 shows that that $f(R^{\pi^*})=f(\bar R)$. In more detail, we can use an induction on the profiles $\bar R^{\pi, l}$ for all $l\in \{1,\dots, j^*-1\}$ and $\pi\in \Pi^{j^*}_{\pi^*}$, which are defined by the following majority margins: $g_{\bar R^{\pi, l}}(a_i^\pi, a_j^\pi)=0$ for all $i,j\in \{1,\dots, l\}$, $g_{\bar R^{\pi, l}}(a_i^\pi, a_j^\pi)=0$ for all $i\in \{1,\dots, l\}$, $j\in \{j^*+1,\dots, k_f\}$, and $g_{\bar R^{\pi, l}}(x, y)=g_{R^{\pi}}(x, y)$ for all remaining majority margins. More intuitively, the profiles $\bar R^{\pi, l}$ differ from the profiles $R^{\pi, l}$ only in the fact that all alternatives $\{a_1^\pi, \dots, a_l^\pi\}$ are in a majority tie with all alternatives in $\{a^\pi_{j^*+1},\dots, a_{k_f}^\pi\}$ instead of just $a_{k_f}^\pi$. Since the last step shows that $\tc(R^\pi)\setminus \{a_{j^*}^\pi\}\subseteq f(R^\pi)$ for all permutations $\pi\in \Pi^{j^*}_{\pi^*}$, an almost identical induction as in Step 2 shows that $f(R^\pi)=f(\bar R^{\pi, l})$ for all permutations $\pi\in \Pi^{j^*}_{\pi^*}$ and $l\in \{1,\dots, j^*-1\}$. This means in particular that $f(R^{\pi^*})=f(\bar R^{\pi^*, j^*-1})=f(\bar R)$.
	
	Departing from this observation, we now consider profile $\hat R$, which can be derived from $\bar R$ by reinforcing all alternatives in $X_2=\{a_{j^*+1}^{\pi^*}, \dots, a_{k_f}^{\pi^*}\}$ against all alternatives in $X_1=\{a_{1}^{\pi^*}, \dots, a_{j^*-1}^{\pi^*}\}$. 
	This means for the majority margins that $g_{\hat R}(x, y)=2$ for all $x\in X_2$, $y\in X_1$. 
	As a consequence of this observation, $a_{j^*}^{\pi^*}$ is now the only alternative that dominates $a_{j^*+1}^{\pi^*}$ and thus, \CWCS requires that $a_{j^*}^{\pi^*}\in f(\hat R)$. 
	Next, note that a repeated application of \Cref{lem:swap} shows that $X_2\subseteq f(\hat R)$ because we can transform $\bar R$ into $\hat R$ by reinforcing the alternatives $X_2$ against each alternative $x\in X_1$ individually. 
	For each of these steps, \Cref{lem:swap} shows that either the choice set does not change or all alternatives in $\tc(R^{\pi^*})\setminus \{x\}$ are chosen. 
	In particular, this means that $X_2\subseteq f(\hat R)$ and that $\tc(\hat R)\not\subseteq f(\hat R)$. 
	Hence, there is an alternative $a_j^{\pi^*}\in X_1$ such that $a_j^{\pi^*}\not\in f(\hat R)$. 
	
	As a next step, consider the profile $\hat R^j$ derived from $\bar R$ by reinforcing the alternatives in $X_2$ only against $a_j^{\pi^*}$. We show that $a_j^{\pi^*}\not\in f(\hat R^j)$ and assume for the sake of contradiction that this is not the case. Hence, \Cref{lem:swap} implies that $f(\hat R^j)=f(\bar R)$. Moreover, we can now transform $\hat R^j$ into $\hat R$ by reinforcing the alternatives in $X_2$ once against each alternative $x\in X_1\setminus \{a_j^{\pi^*}\}$. For every step, \Cref{lem:swap} shows that $a_j^{\pi^*}$ needs to stay chosen, and thus, we have a contradiction to the assumption that $a_j^{\pi^*}\not\in f(\hat R)$. Hence, it must hold that $a_j^{\pi^*}\not\in f(\hat R^j)$, which implies that $\tc(\hat R^j)\setminus \{a_j^{\pi^*}\}\subseteq f(\hat R^j)$ because of \Cref{lem:size}.
	
	Finally, we derive a contradiction to this observation. Consider for this a permutation $\pi\in \Pi^{j^*}_{\pi^*}$ such that $a_1^\pi=a_j^{\pi^*}$. We show that $a_j^{\pi^*}\in f(\hat R^j)$ by transforming $R^\pi$ into $\hat R^j$ and observe for this $X_1\cup X_2\subseteq f(R^\pi)$ because of Step 2. As first step, we reinforce all alternatives in $X_2\setminus \{a_{k_f}^\pi\}$ against $a_1^\pi$ twice, and the alternatives in $X_1\setminus \{a_1^\pi\}$ once against $a^\pi_1$. During all these steps, \CWCS requires that $a_1^\pi$ remains chosen because it is the only alternative that dominates $a_2^\pi$. In turn, \Cref{lem:swap} implies therefore that the choice set cannot change, i.e., this process results in a profile $\tilde{R}^\pi$ with $f(\tilde{R}^\pi)=f(R^\pi)$. Moreover, note that $g_{\tilde{R}^\pi}(a_1^\pi, x)=g_{\hat R^j}(a_1^\pi, x)$ for all $x\in A\setminus \{a_1^\pi\}$. For the next step, consider a voter $i$ with the preference relation $\succ_i=a_2^\pi, \dots, a_{j^*-1}^\pi, a_{j^*+1}^\pi, \dots, a_{k_f}^\pi, \lex(X)$, where $X$ contains all missing alternatives. We can assume that such a voter exists as pairwiseness allows us to add pairs of voters with inverse preferences without affecting the choice set. Next, we let voter $i$ deviate to the preference relation $a_{j^*+1}^\pi, \dots, a_{k_f}^\pi, a_{j^*-1}^\pi, \dots, a_2^\pi, \lex(X)$, which transforms the profile $\tilde{R}^\pi$ into $\hat R^j$. In particular, we know that all alternatives in $\{a_2^\pi, \dots, a_{k_f}^\pi\}$ are chosen both in $f(\tilde{R}^\pi)$ (because $f(R^\pi)=f(\tilde{R}^\pi)$) and in $f(\hat R^j)$ (because $\tc(R^{\pi^*})\setminus \{a_j^{\pi^*}\}=\tc(R^\pi)\setminus \{a_1^\pi\}\subseteq f(\hat R^j)$). Thus, \IR implies that $f(\tilde{R}^\pi)=f(\hat R^j)$, which conflicts with $a_j^{\pi^*}=a_1^\pi\not\in f(\hat R^j)$. This contradiction proves the lemma.
\end{proof}

	\subsection{Proofs of the Main Results}\label{app:mainresults}

Finally, we are ready to prove our main results. First, we discuss the proof of \Cref{thm:F-SP}: a pairwise, non-imposing, neutral, and homogeneous SCC is strategyproof iff it is a robust dominant set rule. In order to be able to use the results of the previous section, we show that every SCC which satisfies these requirements is strongly Condorcet-consistent.
	
	\begin{lemma}\label{lem:CC}
		Every pairwise SCC that satisfies strategyproofness, non-imposition, homogeneity, and neutrality is strongly Condorcet-consistent. 
	\end{lemma}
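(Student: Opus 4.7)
The plan is to prove strong Condorcet-consistency---that $f(R)=\{x\}$ if and only if $x$ is the Condorcet winner in $R$---via a central technical device I will call the \emph{ray lemma}, obtained from strategyproofness, pairwiseness, and the two-inverse-voter trick employed throughout the paper. The ray lemma states: if $f(R)=\{x\}$, then $f(R')=\{x\}$ for every profile $R'$ whose majority margins satisfy $g_{R'}(x,y)\ge g_R(x,y)$ for every $y\ne x$, with the margins among $A\setminus\{x\}$ otherwise arbitrary (parity is handled by homogeneity). To establish it I add to $R$ two voters $i^*,j^*$ with mutually inverse preferences to form a profile $R^1$ with $f(R^1)=\{x\}$ by pairwiseness, choosing $\succ_{i^*}^1$ so that $x$ is bottom-ranked and the pair of alternatives to be swapped---either $(x,y)$ with $y$ second-to-bottom, or an arbitrary non-$x$ pair $(a,b)$ placed adjacently above $x$---is adjacent. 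Letting voter $i^*$ execute the swap to obtain $R^2$, I perform the manipulation analysis against $i^*$'s \emph{true} preference $\succ_{i^*}^1$: since $x$ is the least-preferred alternative there, the singleton $\{x\}$ is Fishburn-dominated by any other choice set (both Fishburn conditions are automatic because every alternative other than $x$ lies above $x$ in $\succ_{i^*}^1$). Strategyproofness therefore forces $f(R^2)=\{x\}$; the swap involving $(x,y)$ increases $g(x,y)$ by $2$, and a swap between two non-$x$ alternatives changes the corresponding margin by $\pm 2$. Iterating with fresh inverse-voter pairs reaches any target margin profile in the claimed region.

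With the ray lemma in hand, I prove the direction $f(R)=\{x\}\Rightarrow x$ is the Condorcet winner by contradiction. Suppose some $y\ne x$ satisfies $g_R(y,x)\ge 0$, equivalently $g_R(x,y)\le 0$. By neutrality under the transposition $\pi=(x\,y)$, the profile $R^\pi$ obtained by swapping $x$ and $y$ in every voter's preference satisfies $f(R^\pi)=\{y\}$, with $g_{R^\pi}(y,x)=g_R(x,y)$ and $g_{R^\pi}(y,z)=g_R(x,z)$ for $z\ne x,y$. Using McGarvey's construction I realize a profile $R^\star$ whose margins $g^\star$ are given by $g^\star(x,y)=g^\star(y,x)=0$, $g^\star(x,z)=g^\star(y,z)=g_R(x,z)$ for every $z\ne x,y$, and arbitrary margins on $A\setminus\{x,y\}$. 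Since $g_R(x,y)\le 0$, the value $g^\star(x,y)=0$ satisfies both $g^\star(x,y)\ge g_R(x,y)$ and $g^\star(y,x)=0\ge g_{R^\pi}(y,x)=g_R(x,y)$, while the remaining coordinate inequalities hold with equality. The ray lemma applied to $R$ therefore forces $f(R^\star)=\{x\}$, whereas the ray lemma applied to $R^\pi$ forces $f(R^\star)=\{y\}$---a direct contradiction.

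The reverse direction---$x$ is the Condorcet winner in $R\Rightarrow f(R)=\{x\}$---follows quickly. By non-imposition there exists some $R_x$ with $f(R_x)=\{x\}$, and by the forward direction $g_{R_x}(x,y)>0$ for every $y\ne x$. For any $R$ with $x$ as Condorcet winner we have $g_R(x,y)>0$ for all $y$, and homogeneity gives $f(R)=f(kR)$. For sufficiently large $k$ the scaled margins $k\,g_R(x,\cdot)$ dominate $g_{R_x}(x,\cdot)$ componentwise, so the ray lemma applied to $R_x$ yields $f(kR)=\{x\}$, and hence $f(R)=\{x\}$. The main obstacle is the careful design of voter $i^*$'s preferences in the ray lemma so that $x$ always remains bottom-ranked in $i^*$'s true preference across all intended adjacent swaps, making the Fishburn comparison uniformly decisive; once this is in place, pairwiseness, homogeneity, neutrality, and McGarvey's construction combine cleanly to deliver both directions.
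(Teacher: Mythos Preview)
Your proof is correct and takes a genuinely different route from the paper's. The paper proves Claim~1 (Condorcet winner $\Rightarrow$ unique choice) first: it uses the auxiliary axioms \WSMON and \IUA (established in \Cref{lem:axioms}) together with homogeneity to reduce the non-imposition witness to a canonical unanimous profile, and only then applies the inverse-voter-swap trick---essentially your ray lemma---to reach the target margins. For Claim~2 ($f(R)=\{x\}\Rightarrow x$ is the Condorcet winner) the paper splits cases: if $g_R(y,x)>0$, \IUA alone makes $y$ a Condorcet winner among unchosen alternatives, contradicting Claim~1; if $g_R(y,x)=0$, \IUA moves $y$ adjacent to $x$ and neutrality gives the contradiction. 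You instead isolate the ray lemma up front and use it together with the transposition $(x\,y)$ to handle Claim~2 in one stroke, without a case split, and then derive Claim~1 from non-imposition, Claim~2, homogeneous scaling, and the ray lemma. What your approach buys is self-containment (no appeal to \WSMON or \IUA) and a uniform treatment of the strict and tied cases; what the paper's approach buys is modularity with the machinery it has already built, and a sharper separation of the roles of the axioms---neutrality enters only for the tie case there, whereas you lean on it throughout.
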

	\begin{proof}
		Consider a pairwise SCC $f$ that satisfies non-imposition, homogeneity, neutrality, and strategyproofness. We need to show two claims: if there is a Condorcet winner, it is chosen uniquely by $f$, and if an alternative is the unique winner of $f$, it is the Condorcet winner. We prove these claims separately. \bigskip
		
		\textbf{Claim 1: If $x$ is the Condorcet winner in $R$, then $f(R)=\{x\}$.}
		
		Consider an arbitrary profile $R$ with Condorcet winner $x$. The claim follows by showing that $f(R)=\{x\}$ and thus, let $R'$ denote a profile such that $f(R')=\{x\}$. Such a profile exists because $f$ is non-imposing. Next, we repeatedly use \WSMON to push down the best alternative of every voter until we arrive at a profile $R^1$ such that every voter top-ranks $x$. Since $R^1$ is constructed by repeated application of \WSMON, it follows that $f(R^1)=f(R')=\{x\}$. As a next step, we let all voters order the alternatives in $A\setminus \{x\}$ lexicographically. This leads to the profile $R^2$ and \IUA implies that the choice set does not change, i.e., $f(R^2)=\{x\}$. Moreover, all voters have the same preference relation in $R^2$. Thus, it follows from homogeneity that $f(R^3)=\{x\}$, where $R^3$ consists of a single voter who has the same preference relation as the voters in $R^2$. Next, let $c=\min_{y\in A\setminus \{x\}} g_{R}(x,y)$ denote the smallest majority margin of $x$ in $R$ and note that $c\geq 1$ because $x$ is the Condorcet winner in $R$. We use again homogeneity to construct a profile $R^4$ that consists of $c$ copies of $R^3$, which means that $f(R^4)=\{x\}$. Furthermore, observe that the parity of the number of voters used in $R^4$ is equal to the parity of the number of voters used in $R$. The reason for this is that $c$ is odd iff $R$ is defined by an odd number of voters. 
		
		As the last step, we need to set the majority margins to their values in $R$. For this, we repeat the following procedure on each pair of alternatives $y,z$ with $g_{R^4}(y,z)<g_{R}(y,z)$ until we arrive at a profile $R^5$ with $g_{R^5}=g_{R}$. First, we add two voters $i$ and $j$ to the preference profile such that voter $i$ prefers $x$ the least and ranks $z$ directly over $y$, and voter $j$'s preference relation is inverse to voter $i$'s. Observe that we can assign such a preference relation to voter $i$ because $g_{R^4}(x,z')=c\leq g_{R}(x,z')$ for all $z'\in A\setminus \{x\}$ implies that $x\neq z$. Since the preference relations of these two voters are inverse, the majority margins do not change and pairwiseness requires thus that $x$ is still the unique winner. Next, we let voter $i$ swap $y$ and $z$, which increases the majority margin between $y$ and $z$ by $2$. Moreover, the choice set cannot change during this step because $x$ is voter $i$'s least preferred alternative. Hence, if another set would be chosen, this step is a manipulation for voter $i$ which contradicts strategyproofness. Therefore, we can repeat this process for every pair of alternatives until we derive a profile $R^5$ with $g_{R^5}=g_{R}$ and our arguments show that $f(R^5)=\{x\}$. This proves that $f(R)=\{x\}$ because of pairwiseness, which shows that $f$ chooses the Condorcet winner uniquely whenever it exists.\bigskip
		
		\textbf{Claim 2: If $f(R)=\{x\}$, then $x$ is the Condorcet winner in $R$.}
		
		Next, we focus on the opposite direction and show that if an alternative is chosen as unique winner by $f$, then it is the Condorcet winner. Assume for contradiction that this is not the case, i.e., that there is a preference profile $R$ and an alternative $x$ such $f(R)=\{x\}$ even though $x$ is not the Condorcet winner in $R$. Then, there is an alternative $y\in A\setminus \{x\}$ such that $g_R(y,x)\geq 0$. We continue with a case distinction with respect to whether $g_R(y,x)=0$ or $g_R(y,x)>0$. First assume that $g_R(y,x)>0$. In this case, we can repeatedly reinforce $y$ against all other alternatives $z\in A\setminus \{x\}$.  This process eventually results in a profile $R'$ in which $y$ is the Condorcet winner. Since Claim 1 proves that $f$ elects the Condorcet winner whenever it exists, we derive that $f(R')=\{y\}$. On the other side, it follows from \IUA that these steps do not change the choice set because we only swap unchosen alternatives, i.e., $f(R')=\{x\}$. These two observations contradict each other and thus, the assumption that $g_R(y,x)>0$ was incorrect. 
		
		Next, assume that $g_R(y,x)=0$. In this case, we partition the voters $N$ according to their preferences between $x$ and $y$: we denote with $N_{x\succ y}=\{i\in N\colon x \succ_i y\}$ the set of voters who prefer $x$ to $y$, and with $N_{y\succ x}=\{i\in N\colon y \succ_i x\}$ the set of voters who prefer $y$ to $x$. We let all voters in $N_{x\succ y}$ change their preferences such that $y$ is directly below $x$, and all voters in $N_{y\succ x}$ change their preferences such that $y$ it is directly above $x$. For these steps, \IUA implies that $x$ remains the unique winner as we only reorder unchosen alternatives. Hence, it follows for the resulting profile $R'$ that $f(R')=\{x\}$. However, it holds that $g_{R'}(x,z)=g_{R'}(y,z)$ for all $z\in A\setminus \{x,y\}$ and $g_{R'}(x,y)=0$. Neutrality and pairwiseness thus require that either $\{x,y\}\subseteq f(R')$ or $\{x,y\}\cap f(R')\neq \emptyset$ because renaming $x$ and $y$ does not change the majority margins. This is in conflict with the previous claim, and hence the assumption that $f(R)=\{x\}$ and $g_R(x,y)=0$ was incorrect. We have derived a contradiction in both cases, which proves that $f(R)=\{x\}$ can only be true if $x$ is the Condorcet winner in $R$. 
	\end{proof}

 	Since we established that every SCC that satisfies the requirements of \Cref{thm:F-SP} is strongly Condorcet-consistent, this result follows now easily from \Cref{lem:domset}.

	\FSP*
	\begin{proof}
	Let $f$ denote a pairwise SCC that satisfies homogeneity, neutrality, and non-imposition. The direction from left to right follows from \Cref{lem:domset} and \Cref{lem:CC}: if $f$ is additionally strategyproof, \Cref{lem:CC} shows that $f$ is strongly Condorcet-consistent and, in turn, \Cref{lem:domset} implies that $f$ is a robust dominant set rule. 
	
	Next, we discuss the direction from right to left and assume thus that $f$ is a robust dominant set rule. Furthermore, suppose for contradiction that $f$ is not strategyproof. Hence, there are two preference profiles $R$ and $R'$ and a voter $i$ such that ${\succ_j}={\succ_j'}$ for all $j\in N\setminus \{i\}$ and $f(R')\succ_i^F f(R)$. 
First, assume that $f(R')\setminus f(R)\neq\emptyset$ and observe that $f(R) \succ_R f(R')\setminus f(R)$ since $f(R)$ is a dominant set. Deviating from $R$ to $R'$ is only a manipulation for voter $i$ if $f(R')\setminus f(R)\succ_i f(R)$. However, this means that $f(R) \succ_{R'} f(R')\setminus f(R)$ as voter $i$ can only weaken the alternatives in $f(R')\setminus f(R)$ against those in $f(R)$. Since $f$ is a dominant set rule and $f(R')\setminus f(R)\neq\emptyset$, this implies that $f(R)\subseteq f(R')$. Hence, $f(R) \succ_{R'} A\setminus f(R')$ because $f(R')$ is a dominant set, which proves that $f(R)$ is also a dominant set in $\succsim_{R'}$. As a consequence, robustness requires that $f(R')\subseteq f(R)$, which contradicts the assumption that $f(R')\setminus f(R)\neq \emptyset$. Hence, no manipulation is possible in this case.
	
	As second case, suppose that $f(R')\subsetneq f(R)$. First, note that $f(R') \succ_{R'} A\setminus f(R')$ because $f(R')$ is a dominant set. Moreover, since deviating from $R$ to $R'$ is a manipulation for voter $i$, it holds that $f(R') \succ_i f(R)\setminus f(R')$. As a consequence of these two observations, it follows that $f(R') \succ_R f(R)\setminus f(R')$ because voter $i$ can only weaken the alternatives in $f(R')$ against those in $f(R)\setminus f(R')$. Finally, since $f(R')\subseteq f(R)$, it follows that $f(R')\succ_R A\setminus f(R)$, and thus, $f(R')$ is a dominant set in $\succsim_R$. Hence, robustness from $R'$ to $R$ implies that $f(R)\subseteq f(R')$, which contradicts our assumption that $f(R')\subsetneq f(R)$. Thus, $f$ is also in this case not manipulable, which shows that it is strategyproof. 
\end{proof}

Next, we focus on \Cref{thm:TC} and prove this result using \Cref{lem:domset}. As a first step, we show that pairwiseness, strategyproofness, homogeneity, and set non-imposition imply strong Condorcet-consistency. 

\begin{lemma}\label{lem:CC2}
		Every pairwise SCC that satisfies set non-imposition, homogeneity, and strategyproofness is strongly Condorcet-consistent. 
	\end{lemma}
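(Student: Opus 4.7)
The plan is to mirror the proof of \Cref{lem:CC}, substituting set non-imposition for neutrality.

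Claim 1 (a Condorcet winner $x$ of $R$ must satisfy $f(R)=\{x\}$) transfers verbatim from \Cref{lem:CC}: set non-imposition implies non-imposition, and the rest of the construction---pushing each voter's top-ranked alternative down via \WSMON until $x$ is universally top-ranked, lexicographically reordering via \IUA, reducing to a single voter via homogeneity, scaling back up, and calibrating individual majority margins via strategyproofness-protected swaps of auxiliary voter pairs---uses none of the excluded axioms. For the converse claim, suppose $f(R)=\{x\}$ while some $y\ne x$ has $g_R(y,x)\ge 0$. When $g_R(y,x)>0$, I would argue exactly as in \Cref{lem:CC}: \IUA lifts $y$ to the highest unchosen position in every voter's ranking, producing a profile with $f=\{x\}$ on which $y$ is the Condorcet winner, so Claim 1 forces $f=\{y\}$, a contradiction. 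The delicate case is $g_R(y,x)=0$, where \IUA yields a profile $R^1$ with $f(R^1)=\{x\}$, $g_{R^1}(x,y)=0$, and the crucial symmetry $g_{R^1}(x,z)=g_{R^1}(y,z)$ for every $z\ne x,y$.

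To replace neutrality in closing this case, I would invoke set non-imposition to pick a profile $R^y$ with $f(R^y)=\{y\}$ (Claim 1 identifies $y$ as the Condorcet winner of $R^y$), and apply the analogous \IUA construction to produce $\bar R$ with $f(\bar R)=\{y\}$, $g_{\bar R}(y,x)>0$, and $g_{\bar R}(x,z)=g_{\bar R}(y,z)$ for every $z\ne x,y$. The plan is then to transform $\bar R$---while maintaining $y$ in the choice set---into a profile $\bar R^*$ whose majority margins coincide with those of $R^1$; pairwiseness would then force $y\in f(R^1)$, contradicting $f(R^1)=\{x\}$. The transformation proceeds in two tractable phases: first, reduce $g(y,x)$ by successively selecting voters with $y$ directly above $x$ and swapping, which preserves $f=\{y\}$ via Claim 1 while $g(y,x)$ remains strictly positive; second, adjust each common value $g(x,z)=g(y,z)$ to match $R^1$ via \WSMON on the unchosen alternative $z$, which preserves $f=\{y\}$ and, since $x$ and $y$ are adjacent in every voter's preference, changes $g(z,x)$ and $g(z,y)$ by the same amount and so preserves the symmetry. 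Homogeneity absorbs any parity mismatches.

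The hard part is the final swap, which takes $g(y,x)$ from $2$ down to $0$: at that point $y$ is no longer a strict Condorcet winner and Claim 1 no longer protects $f=\{y\}$. To force $y$ into the choice set at the post-swap profile, the deviating voter must be prepared so that she ranks $y$ at the very top with $x$ directly below---achievable by inserting auxiliary voter pairs of mutually inverse preferences (which leave margins and hence $f$ untouched by pairwiseness) and by using \WSMON to demote alternatives originally above $y$ in her ranking. A careful combination of strategyproofness applied to this voter, \WMON for the reverse swap that reinforces $y$ against $x$, and pairwiseness comparison with $R^1$ is the technical heart needed to force $y$ into $f$ at the post-swap profile and complete the contradiction.
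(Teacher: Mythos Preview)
Your treatment of Claim~1 and the case $g_R(y,x)>0$ is fine and matches the paper. The gap is in the case $g_R(y,x)=0$. You invoke set non-imposition only to produce a profile with $f=\{y\}$, and then try to slide $g(y,x)$ down to $0$ while keeping $y$ chosen. The final swap---from $g(y,x)=2$ to $g(y,x)=0$---cannot be closed by the mechanism you sketch. Concretely: the deviating voter has $y$ directly above $x$; before the swap $f=\{y\}$, and after the swap pairwiseness forces $f=f(R^1)=\{x\}$. Neither direction of this swap is a Fishburn-manipulation (the voter who has $y\succ x$ gets a strictly worse singleton, and from the swapped side the voter with $x\succ y$ would get a strictly worse singleton by reverting), and \WMON is vacuous in both directions because the reinforced alternative is unchosen at the starting profile. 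So the ``careful combination of strategyproofness, \WMON, and pairwiseness'' you appeal to does not exist: the transition $\{y\}\leadsto\{x\}$ across a single $y$--$x$ swap is fully compatible with all four derived axioms.

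The paper's proof uses set non-imposition in a genuinely stronger way: it picks a profile $R^3$ with $f(R^3)=\{x,y\}$ (not a singleton), and transforms it into the canonical two-voter profile where half report $x,y,\lex(\cdot)$ and half report $y,x,\lex(\cdot)$, showing $f=\{x,y\}$ there via \WSMON, \WMON, \IR, and homogeneity. The same canonical profile is reached from your $R^1$-side with $f=\{x\}$, yielding the contradiction. The key extra leverage is that when \emph{both} $x$ and $y$ are chosen, \WMON and \IR let you reinforce $\{x,y\}$ against outsiders without losing either, and Condorcet-consistency pins down $x\sim y$; starting from a singleton $\{y\}$ you never get this two-sided protection.
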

	\begin{proof}
		Let $f$ denote an SCC as specified by the lemma. First, note the proof of Claim 1 in \Cref{lem:CC} does not require neutrality and it thus shows that $f$ is Condorcet-consistent. Hence, we focus on the converse direction and show that $f(R)=\{x\}$ can only be true if $x$ is the Condorcet winner in $R$. For this, suppose for contradiction that there is a profile $R$ and an alternative $x$ such that $f(R)=\{x\}$, but $x$ is not the Condorcet winner in $R$. This means that there is another alternative $y\in A\setminus \{x\}$ such that $g_R(y,x)\geq 0$. If $g_R(y,x)>0$, we can use the same construction as in the proof of \Cref{lem:CC} to derive that $f$ violates Condorcet-consistency since this construction works again without neutrality. Hence, suppose that $g_R(x,y)=0$. In this case, we first weaken $y$ in the preference relation of every voter $i\in N$ with $y \succ_i x$ such that it is directly over $x$ and reinforce $y$ in the preference relation of every voter $i\in N$ with $x\succ_i y$ such that it is placed directly below $x$. We infer from \IUA that $x$ is still the unique winner. Next, we iterate over the voters $i\in N$ and use \WSMON to repeatedly push down voter $i$'s best alternative until he top-ranks $x$ or $y$. It follows for the resulting profile $R^1$ that $f(R^1)=\{x\}$ because of \WSMON and that all voters report $x$ and $y$ as their best two alternatives. Thereafter, we let the voters reorder the alternatives in $A\setminus \{x,y\}$ lexicographically. \IUA implies that this step does not affect the choice set and thus, it holds for the new profile $R^2$ that $f(R^2)=f(R^1)=\{x\}$. 
	
		Next, we show that $f(R^2)=\{x\}$ is in conflict with set non-imposition. For this, consider a preference profile $R^3$ with $f(R^3)=\{x,y\}$; such a profile exists since $f$ is set non-imposing. Our goal is to transform $R^3$ into $R^2$ while showing that both $x$ and $y$ must be chosen. As a first step, we repeatedly add voters with inverse preferences and use \WSMON to weaken the alternatives $z\in A\setminus\{x,y\}$ until we derive a profile $R^4$ with $\{x,y\}\succ_{R^4}A\setminus \{x,y\}$. It follows from pairwiseness and \WSMON that $f(R^4)=f(R^3)=\{x,y\}$. This implies that $x \sim_{R^4} y$ because, otherwise, there is a Condorcet winner which must be chosen uniquely. Even more, note that, as long as $x \sim_R y$ and $\{x,y\} \succ_R A\setminus \{x,y\}$, it holds that either $\{x,y\}\subseteq f(R)$ or $f(R)\subseteq \{x,y\}$. Otherwise, there is a profile $R'$ and alternatives $z_1$, $z_2$ such that $z_1\in f(R')\setminus \{x,y\}$ and $z_2\in \{x,y\}\setminus f(R')$. Hence, if a voter reinforces $z_3\in \{x,y\}\setminus \{z_2\}$ against $z_2$, $z_3$ is the Condorcet winner and it must thus be chosen uniquely. However, this is in conflict with strategyproofness because \IR (if $z_3\in f(R')$) or \IUA (if $z_3\not\in f(R')$) is violated. 
	
		We use the last observation to repeatedly reinforce the alternatives $\{x,y\}$ against the alternatives $A\setminus \{x,y\}$ in $R^4$ until all voters report $x$ and $y$ as their best two alternatives. For each swap, \WMON implies that either $z_1\in \{x,y\}$ remains chosen and $z_2\in A\setminus \{x,y\}$ remains unchosen, or $z_1$ becomes unchosen and $z_2$ chosen. However, the latter is impossible because of our previous observation and thus, we derive from \WMON and \IR that the choice set is not allowed to change. Hence, it holds for the resulting profile $R^5$ that $f(R^5)=\{x,y\}$ and that all voters report $x$ and $y$ as their best two alternatives. Thereafter, we derive the profile $R^6$ from $R^5$ by arranging the alternatives in $A\setminus \{x,y\}$ in lexicographic order, which does not affect the choice set because of \IUA. Finally, note that in $R^6$, half of the voters report $\succ_1=x,y,\lex(A\setminus \{x,y\})$ and the other half report $\succ_2=y,x,\lex(A\setminus \{x,y\})$. Using homogeneity, it follows therefore that $f(R^7)=f(R^6)$, where $R^7$ consists of two voters who report $\succ_1$ and $\succ_2$, respectively. Finally, the profile $R^2$ consists of multiple copies of $R^7$ and hence, it again follows from homogeneity that $f(R^2)=f(R^7)=\{x,y\}$. However, this contradicts the previous observation that $f(R^2)=\{x\}$ and hence, $f$ can only choose a single winner if it is the Condorcet winner. 
	\end{proof}
	
	Finally, we prove \Cref{thm:TC} based on Lemmas \ref{lem:TCweak}, \ref{lem:domset}, and \ref{lem:CC2}.
	
	\TC*
	\begin{proof}
		We have already shown in \Cref{lem:TCweak} that the top cycle satisfies set non-imposition. Moreover, by definition, \tc is majoritarian and therefore also pairwise and homogeneous. Finally, the top cycle is a robust dominant set rule and therefore strategyproof by \Cref{thm:F-SP}. For the other direction, consider an arbitrary pairwise SCC $f$ that satisfies strategyproofness, set non-imposition, and homogeneity. Since all criteria of \Cref{lem:CC2} are satisfied, it follows that $f$ is strongly Condorcet-consistent. Next, we use \Cref{lem:domset} to derive that $f$ is a robust dominant set rule. As the last step, \Cref{lem:TCweak} shows that $f$ is the top cycle since this is the only robust dominant set rule that satisfies set non-imposition.
	\end{proof}

\end{document}